\documentclass[11pt,letterpaper]{article}
\usepackage[margin=1in]{geometry}
\usepackage[utf8]{inputenc}

\newcommand{\D}{\displaystyle}

\newcommand{\classNP}{\textsf{NP}}

\newcommand{\classSharpP}{\textsf{\#P}}

\newcommand{\bfIf}{\textbf{if}}
\newcommand{\bfThen}{\textbf{then}}
\newcommand{\bfFor}{\textbf{for}}
\newcommand{\bfDo}{\textbf{do}}

\usepackage{amsmath,amssymb,amsthm}
\usepackage{xcolor}
\usepackage[hidelinks]{hyperref}
\usepackage[ruled,vlined,linesnumbered]{algorithm2e}
\usepackage{bm}
\usepackage{nicefrac}
\usepackage{paralist}
\usepackage{typearea}

\typearea{16}

\newtheorem{definition}{Definition}
\newtheorem{lemma}{Lemma}
\newtheorem{theorem}{Theorem}
\newtheorem{proposition}{Proposition}

\newtheorem{corollary}{Corollary}

\newcommand{\dist}{\ensuremath{\mathcal{D}}}
\newcommand{\sv}{\xi}
\newcommand{\rv}{\rho}

\newcommand{\vecx}{\ensuremath{\bm{x}}}

\newcommand{\vecy}{\ensuremath{\bm{y}}}
\newcommand{\vecz}{\ensuremath{\bm{z}}}

\newcommand{\vecAlpha}{\ensuremath{\bm{\alpha}}}

\newcommand{\calP}{\mathcal{P}}
\newcommand{\calS}{\mathcal{S}}

\newcommand{\calR}{\mathcal{R}}

\newcommand{\receiver}{\ensuremath{\calR}}
\newcommand{\sender}{\ensuremath{\calS}}
\newcommand{\vecState}{\ensuremath{\boldsymbol{\theta}}}
\newcommand{\state}{\ensuremath{\theta}}
\newcommand{\opt}{\operatorname{OPT}}
\newcommand{\Ex}[1]{\ensuremath{\mathbb{E}\left[#1\right]}}

\renewcommand{\rho}{\varrho}
\renewcommand{\phi}{\varphi}

\newcommand{\growingmid}{\mathrel{}\middle|\mathrel{}}

\allowdisplaybreaks

\title{Algorithms for Persuasion with Limited Communication}
\author{%
Ronen Gradwohl%
\thanks{Department of Economics and Business Administration, Ariel University, Israel. Email: roneng@ariel.ac.il.
	Gradwohl gratefully acknowledges the support of National Science Foundation award number 1718670.}
\and
Niklas Hahn%
\thanks{Institute of Computer Science, Goethe University Frankfurt, Germany. Email: nhahn@em.uni-frankfurt.de. Hahn gratefully acknowledges the support of German-Israeli Foundation grant I-1419-118.4/2017.}
\and Martin Hoefer%
\thanks{Institute of Computer Science, Goethe University Frankfurt, Germany. Email: mhoefer@em.uni-frankfurt.de. Hoefer gratefully acknowledges the support of German-Israeli Foundation grant I-1419-118.4/2017 and Deutsche Forschungsgemeinschaft grants DFG Ho 3831/5-1, 6-1, and 7-1.}
\and Rann Smorodinsky%
\thanks{Faculty of Industrial Engineering and Management, Technion, Israel.
	Email: rann@ie.technion.ac.il. Smorodinsky gratefully acknowledges support of United States-Israel Binational
	Science Foundation and National Science Foundation grant 2016734, German-Israeli Foundation grant
	I-1419-118.4/2017, Ministry of Science and Technology grant 19400214, Technion VPR grants, and the
	Bernard M. Gordon Center for Systems Engineering at the Technion.}
}
\date{}

\begin{document}
\maketitle
\newcommand{\alert}[1]{{\color{red} #1}}

\begin{abstract}
The Bayesian persuasion paradigm of strategic communication models interaction between a pri\-vately-informed agent, called the \emph{sender}, and an ignorant but rational agent, called the \emph{receiver}. The goal is typically to design a (near-)optimal communication (or \emph{signaling}) scheme for the sender. It enables the sender to disclose information to the receiver in a way as to incentivize her to take an action that is preferred by the sender. Finding the optimal signaling scheme is known to be computationally difficult in general. This hardness is further exacerbated when there is also a constraint on the size of the message space, leading to \classNP-hardness of approximating the optimal sender utility within any constant factor.

In this paper, we show that in several natural and prominent cases the optimization problem is tractable even when the message space is limited. In particular, we study signaling under a symmetry or an independence assumption on the distribution of utility values for the actions. For symmetric distributions, we provide a novel characterization of the optimal signaling scheme. It results in a polynomial-time algorithm to compute an optimal scheme for many compactly represented symmetric distributions. In the independent case, we design a constant-factor approximation algorithm, which stands in marked contrast to the hardness of approximation in the general case.

\end{abstract}

\pagenumbering{gobble}
\clearpage
\pagenumbering{arabic}


\section{Introduction}
Recommendations play a vital role in the modern information economy: Online retailers make product recommendations, travel websites provide advice on
hotels and attractions, navigation apps suggest driving routes, and so on. In all these examples, the designers of the recommendation systems have information that
consumers do not, and both sides benefit from communication. However, the interests of the consumers and the
recommenders are not always aligned. For example, while consumers may prefer to purchase products that
constitute a better bargain, retailers may prefer to sell products for which they obtain higher margins. A natural goal is to optimize the use of the retailer's informational advantage, such that recommendations result in consumer choices that maximize its own benefit. In doing so, the retailer must account for the consumers' interests to guarantee that recommendations are being followed.

This optimization problem fits into the Bayesian persuasion paradigm of \cite{KamenicaG11},
a fundamental model of strategic communication proposed in economics that has recently gained significant interest in algorithmic game theory. In this model there are two players: a {\em sender} $\sender$ with information about a so-called {\em state of nature}, and a {\em receiver} $\receiver$ who takes an action.
Payoffs of the two players are determined both by the action chosen by $\receiver$ and by the state of nature.
A priori, the players do not know the true state of nature, but rather share a common belief (i.e., a distribution) over the possible outcomes. However, $\sender$ obtains information about the \emph{realized} state of nature, and then sends a message (called a \emph{signal}) to $\receiver$. After receiving the signal, $\receiver$ takes an action,
and payoffs are realized.

A distinguishing feature of the Bayesian persuasion model is that $\sender$ commits to a \emph{signaling scheme} before the state of nature is realized. A signaling scheme is a (possibly randomized) function from states of nature to signals. The action for $\sender$ can be cast as choosing a signaling scheme that determines the signal once the state of nature is realized. This problem becomes interesting, above and beyond a standard optimization, when $\sender$ and $\receiver$ have misaligned preferences with different optimal actions in various states. How can $\sender$ make optimal use of her informational advantage in steering $\receiver$'s choice of action?


The problem of optimally designing recommendation systems fits neatly into this model.
To illustrate, consider the following simple example: $\sender$ is a retailer that makes a product recommendation to a consumer, $\receiver$, who must choose one of the products. The various products yield different utilities to each of the players and, while $\sender$ knows which product yields which utilities,
from $\receiver$'s perspective the products are randomly ordered. The state of nature is the order in which the products appear, and the signaling scheme is the recommendation system implemented by the retailer.

To make the example concrete, suppose there are three products: One product is good for $\sender$ and bad for $\receiver$, one is bad
for $\sender$ but good for $\receiver$, and one is bad for both. Denote these respective products by $GB$, $BG$, and $BB$, and suppose they yield sender-receiver utility pairs $(1,0)$, $(0,1)$, and $(0,0)$, when chosen.
One signaling scheme for the sender is to always reveal which product is which. In this case $\receiver$ will choose $BG$, and $\sender$ will attain utility 0.
A better scheme for $\sender$ is to reveal no information. Here the best $\receiver$ can do is choose randomly, in which case $\sender$'s utility will be $1/3$.
One might attempt to improve $\sender$'s utility by always recommending $GB$. However, this policy is not \emph{persuasive}: $\receiver$'s optimal reaction is to deviate to choosing one of the \emph{other} two products at random, and again $\sender$'s utility will be 0. Nonetheless, $\sender$ can do better than the no-information scheme by choosing a scheme that recommends $GB$ with probability $2/3$ and $BG$ with probability $1/3$. A straightforward calculation using Bayes' Rule shows that $\receiver$ cannot improve by deviating from this recommendation, and that following it leads to sender utility of $2/3$. This, in fact, is the optimal signaling scheme for $\sender$.

In this paper we study potential barriers to optimal signaling, focusing on two constraints: limited communication and limited computational resources. First, in our example above, the optimal signaling scheme needs a signal space of size 3, as each of the three products could potentially be $GB$ or $BG$. But what if she was restricted to sending only one of 2 signals? More generally, suppose there are $n$ products, but $\sender$ is restricted to only $k$ signals. These restrictions arise naturally, e.g., when there is a limited attention span, or communication between the players is noisy and a limited number of bits can be transferred. Typically, designing optimal signaling schemes can be based on the popular toolset developed by \cite{KamenicaG11}. However, these tools no longer apply when the number of available signals is limited.

Second, from a computational perspective, finding the optimal scheme might not be tractable. Suppose in the example above there are $n$ products, for large $n$. For the restricted case in which the utility-pairs of the $n$ products are IID and given explicitly, Dughmi and Xu~\cite{DughmiX16} develop a polynomial-time algorithm that computes the optimal scheme. Note that our example above, in which the utility-pairs are known but their order is not, does not fall into this case.
On the other hand, for general distributions over the utility-pairs of each product (and even ones that are independent but not identical), they
show that computing the optimal sender utility is \classSharpP-hard~\cite{DughmiX16}.

Third, when computational concerns are combined with limited communication, the computational problem is exacerbated. Dughmi, Xu and Qiang~\cite{DughmiKQ16} prove a substantially stronger hardness result and show that it is \classNP-hard to even approximate the optimal sender utility to within any constant factor.

\paragraph{Results and Contribution}
We analyze optimal signaling schemes subject to communication and computation limits in the context of two specific classes of
problems that we call symmetric instances and independent instances. Symmetric instances are ones in which the a priori probability of any vector of $n$ utility-pairs is the same as the a priori probability of any vector in which the $n$ elements have been permuted.
For example, we described a symmetric instance in our example above, in which the $n$ products appear in a random order.
Another example is the IID case, in which products' utility-pairs are drawn i.i.d.\ from a single distribution. The class is more general than
both examples; in Section~\ref{sec:model} we describe some other cases that it captures.

In Section~\ref{sec:symmetric} we study the class of symmetric instances and develop a geometric characterization of the optimal signaling scheme. In Section~\ref{sec:symmetricCompute} we use this characterization to design an algorithm that computes optimal schemes. Our algorithm runs in polynomial time given access to a \emph{probability oracle} that computes certain probabilities related to the instance. We then prove that the probability oracle can be implemented in polynomial time in many prominent subclasses of instances studied in related literature, including but not limited to the IID and random-order cases. Our results significantly expand the set of instances for which optimal schemes can be computed efficiently beyond the IID case in~\cite{DughmiX16}.

Interestingly, our results extend even to limited signal spaces. In addition to the geometric characterization of optimal signaling schemes with limited communication, our results also imply a polynomial-time algorithm for finding such a scheme. Moreover, when relaxing the persuasiveness constraint, we show that a bicriteria approximation can be obtained in polynomial time, see Appendix~\ref{app:bicriteria}.

In Section~\ref{sec:independent} we develop polynomial-time algorithms for finding an approximately-optimal signaling scheme in a class of independent instances in which the utility-pairs are independently but not identically distributed among the $n$ actions. For general independent instances, \cite{DughmiX16} show that finding an optimal solution is \classSharpP-hard. We obtain a constant-factor approximation when the optimal scheme must guarantee for every signal at least the best a-priori utility of any action for the receiver. This is the case, e.g., when an action with a-priori best utility for the receiver has deterministic utility for the receiver. Alternatively, this is the case, when the receiver has an outside option of a-priori optimal utility. Our first algorithm in Section~\ref{sec:constantFactor} is simple to state and implement and guarantees a constant-factor approximation, even in the case in which the signal space is restricted to $k<n$ signals. The ratio is at least 0.375 for $k=2$, and it approaches $(1-1/e)^2 \approx 0.3996$ for large $k$. With a significantly more elaborate procedure in Section~\ref{sec:FPTAS}, we improve the approximation ratio for large $k$ to $(1-1/e-\varepsilon) \approx 0.632$, for any constant $\varepsilon > 0$. With the techniques used here it is impossible to obtain a better ratio than $1-1/e$.

These results stand in marked contrast to the hardness result of \cite{DughmiKQ16} for general instances, where restrictions on the signaling space can make the optimization problem hard to approximate within any constant factor. Our results significantly broaden the class of instances for which good approximation algorithms are known to exist.

Finally, in Section~\ref{sec:apxNbyK} we show that restricting the number of signals from $n$ to $k$ hurts the optimal sender utility by a (tight) factor of $\Theta(k/n)$ in symmetric instances and in independent instances.

\paragraph{Techniques}
Our main results on symmetric instances in Sections~\ref{sec:symmetric} and~\ref{sec:symmetricCompute} use a geometric characterization of the optimal signaling scheme. For every state of nature, we interpret the utility pairs of the $n$ actions as a set of points in the two-dimensional plane. Given a state of nature, the expected utility of any signaling scheme can be interpreted as a \emph{recommendation point} inside the convex hull of the point set. We show that the optimal scheme has a symmetry property and, for every state of nature, its recommendation point is located on the Pareto frontier of the point set. Their location is such that a single common slope lies tangent to the recommendation point for every state of nature.
The symmetry property allows to tightly capture the persuasiveness constraint as a linear inequality. Using these insights, we turn the computation into solving a polynomial number of linear programs. The coefficients are probabilities derived from the Pareto frontiers of point sets of the states of nature. In this way, computing the optimal signaling scheme reduces to computing certain probabilities. We show that for a variety of symmetric distributions, such as IID, random-order, prophet-secretary, or explicitly represented ones (for formal definitions see Section~\ref{sec:model} below) computing these probabilities can be done in polynomial time.

Our results provide an alternative way to compute an optimal scheme for the IID case. The previous approach of~\cite{DughmiX16} uses symmetry to apply techniques from the literature on designing optimal auctions with money. These techniques crucially rely on independence among bidders/actions. In contrast, our characterization and algorithms directly exploit the structure of the persuasion problem. We can handle correlations in the utility-pairs of the state of nature and obtain efficient algorithms for symmetric instances in full generality, even in the case with limited communication.

Our approximation algorithms for independent instances in Section~\ref{sec:independent} follow a two-step approach: (a) find a good subset of $k$ actions and (b) use each of the $k$ signals to recommend one action from the subset. By dropping and relaxing some constraints of the optimal signaling scheme, we devise an LP relaxation. For this relaxation, we prove that step (a) becomes a submodular optimization problem, for which we use the standard greedy algorithm. For (b) we develop an algorithm turning the optimal solution of the LP relaxation into a persuasive signaling scheme. This algorithm in Section~\ref{sec:constantFactor} yields an approximation ratio of roughly $(1-1/e)$ (for large $k$) in each of these steps. Our improved analysis in Section~\ref{sec:FPTAS} then shows that for large $k$ the greedy algorithm for step (a) can be replaced by an FPTAS, but the factor $1-1/e$ from step (b) remains. The latter factor turns out to be tight -- a further improvement must bypass the use of the LP relaxation to upper bound the optimal sender utility.

\paragraph{Related Literature}
Originating in Aumann and Maschler's~\cite{AumannM66} work on repeated games with incomplete information, Bayesian persuasion was popularized by Kamenica and Gentzkow~\cite{KamenicaG11}. The many applications include financial-sector stress testing \cite{GoldsteinL18}, medical research \cite{Kolotilin15},  security \cite{RabinovichJJX15,XuRDT15,XuFCDT16}, online advertisement \cite{ArieliB19,BadanidiyuruBX18,EmekFGLT12} and voting \cite{AlonsoC16}---thorough overviews include \cite{BergemannM19,Dughmi17,Kamenica19,Forges20}.

Our paper analyzes algorithmic Bayesian persuasion with limited signal spaces, most closely related to Dughmi and Xu~\cite{DughmiX16} and Dughmi et al.~\cite{DughmiKQ16}.
The former give a poly-time algorithm to calculate the optimal scheme for IID instances, and show that the problem is \#P-hard in the independently- but not identically-distributed case.
The latter
focus on bilateral trade with constrained communication, but prove two general results: (i) only a $O(\tiny{\frac{\#Signals}{\#States}})$ factor of utility in the unconstrained communication scenario is obtainable by the sender, and (ii) it is NP-hard to approximate the optimal sender utility within a constant factor with a limited number of signals. Our work complements these,
as we give an optimal polynomial-time algorithm for symmetric instances, and  a polynomial-time constant-factor approximation for the class of independent instances.


Another related paper, Aybas and Turkel~\cite{AybasT20}, proves the existence of an optimal scheme when signals are limited. They also show that the sender loses at most a $2/k$
factor of utility when the number of signals decreases from $k$ to $k-1$. We strengthen this result for symmetric instances by showing that the cumulative loss when using  $k$ instead of $n$ signals is at most $(n-k)/n$ and this is tight. Put differently, we show matching lower and upper bounds of $k/n$ on the fraction of the sender utility that can be obtained when using $k$ instead of $n$ signals. Up to small constant factors, similar bounds hold for independent instances.

More generally, extensions of algorithmic persuasion to multiple receivers have been studied by Babichenko and Barman~\cite{BabichenkoB17} and Arieli and Babichenko~\cite{ArieliB19}, who focus on private signals, as well as Dughmi and Xu~\cite{DughmiX17}, who contrast private and public signals.
Bhaskar et al.~\cite{BhaskarCKS16} and Rubinstein~\cite{Rubinstein17} study scenarios in which the receivers are players in games, proving various hardness results. 
Xu~\cite{Xu20} gives efficient approximation algorithms for some sub-classes of these scenarios.
Dughmi et al.~\cite{DughmiNPW19} employ Lagrangian duality to characterize near-optimal persuasion schemes, and study a further extension that includes payments.
Finally, to complement the multiple-receiver setting, multiple-sender settings have been studied in \cite{AuK20,GentzkowK17b,GentzkowK17a,LiN18,GradwohlHHS20}. 

A different approach was taken by Hahn et al.~\cite{HahnHS20IJCAI,HahnHS20}, who design approximation algorithms for online versions of the single-sender, single-receiver setting. In their models, the state of nature is revealed sequentially to $\sender$, $\sender$ sends a signal in each round to $\receiver$, and $\receiver$ then makes a binary decision.
Also somewhat related is the paper of Le Treust and Tomala~\cite{LeTreustT19}, who study a repeated setting with limited communication through a noisy channel. 


\section{Model}
\label{sec:model}
\paragraph{Signaling with Limited Messages}
There are two agents, a sender $\sender$ and a receiver $\receiver$. The receiver can take one of $n$ \emph{actions}. We denote the set of actions by $[n] = \{1,\dots,n\}$. Each action $i \in [n]$ has a \emph{type} $\state_i$ from a known type set $\Theta_i$. We assume throughout that all type sets are finite. The state of nature $\vecState = (\state_1,\dots,\state_n)$ is drawn according to a commonly known distribution over $\Theta$, where $\Theta \subseteq \Theta_1 \times \Theta_2 \times \dots \times \Theta_n$. We denote the probability of drawing the state $\vecState$ by $q_{\vecState}$.

Action $i$'s type $\state_i$ is associated with a value-pair $(\rv(\state_i),\sv(\state_i))$, where $\rv(\state_i)$ is the value for $\receiver$ and $\sv(\state_i)$ is the value for $\sender$ if action $i$ is taken by $\receiver$. Both agents want to maximize their respective expected utility from the action taken. While the distribution over states of nature is common knowledge, the realized state $\vecState$ is only observed by $\sender$. After observing $\vecState$, $\sender$ sends some abstract signal $\sigma \in \Sigma$ to $\receiver$.

We assume that $\sender$ has commitment power, i.e., $\sender$ commits in advance to a \emph{signaling scheme} $\phi$. It maps the observed state of nature $\vecState$ to a signal $\sigma$. More formally, $\phi(\state,\sigma)$ denotes the probability that in state $\state$ the scheme sends signal $\sigma$. $\phi$ is revealed to $\receiver$ before $\vecState$ is realized. The game we study proceeds as follows: (1) Both players know the prior distribution $q$. (2) $\sender$ commits to a signaling scheme $\phi$ and reveals it to $\receiver$. (3) The state of nature $\vecState$ is realized and is revealed to $\sender$. (4) $\sender$ draws signal $\sigma$ according to the distribution $\phi(\state,\cdot)$ and sends $\sigma$ to $\receiver$. (5) $\receiver$ chooses an action $i \in [n]$, and utilities are realized.

In the standard case of Bayesian persuasion with $|\Sigma| = k \ge n$, the sender can use signals to directly recommend every possible action to the receiver. In this paper, we are interested in $k < n$ when $\sender$ might not be able to directly recommend every single action to $\receiver$. Since the case of a single signal and $k=1$ is trivial, we assume $k \ge 2$ throughout.

We denote the expected utility for $\mathcal{X} \in \{\sender,\receiver\}$ by $u_{\mathcal{X}}(\phi)$ when $\sender$ uses scheme $\phi$ and $\receiver$ best responds to $\phi$ by picking, for every signal $\sigma$, an action with optimal expected utility conditioned on observing $\sigma$. Given $\sigma$, if $\receiver$ has several optimal actions, we assume $\receiver$ breaks ties in favor of the sender\footnote{This is a standard assumption in bilevel optimization problems. It is mainly used to avoid technicalities such as tiny perturbations to break ties.}. If within the set of actions with best utility for $\receiver$ there are several that have best utility for $\sender$, we assume w.l.o.g.\ that $\receiver$ chooses one of them via any fixed tie-breaking rule.

We will be interested in \emph{direct} and \emph{persuasive} schemes. In a direct scheme, $\sender$ uses each signal to recommend a single specific action. In a persuasive scheme, the receiver has no incentive to deviate from the recommended action. When considering persuasiveness, a useful quantity is the best expected utility of any fixed action for $\receiver$, which we denote by
%
$\rv_E = \max_{i \in [n]} \sum_{\vecState} q_{\vecState} \cdot \rv(\state_i).$
%

\vspace{-0.3cm}
\paragraph{Symmetric Instances} In a symmetric instance, any two states of nature that are a permutation of one another occur with the same probability.
Formally, in a symmetric instance, $q_{\vecState} = q_{\vecState'}$ whenever $\vecState'$ is any permutation of $\vecState$. In particular, due to symmetry, $\rv_E = \sum_{\vecState} q_{\vecState} \cdot \rv(\state_i)$ for every $i \in [n]$. 

Any symmetric distribution with finite type sets can be represented rather explicitly by a set of vectors, each having $n$ (not necessarily distinct) types, and a probability distribution over the vectors. A state of nature $\vecState$ is generated by drawing one of the vectors according to the distribution and then permuting the chosen vector uniformly at random. We denote by $d$ the number of vectors in the representation and call this a \emph{$d$-random-order scenario}. For $d=1$, we obtain the random-order scenario from the introduction.

However, there are also interesting symmetric distributions with a much more compact representation. For the IID scenario, the natural representation is only a type distribution for a single action from which we draw $n$ times to generate the state of nature. In the vector-based $d$-random-order representation, $d$ could be exponential in the number of types for a single action. Hence, we also study a more compactly represented \emph{prophet-secretary} scenario: Here we have $n$ (not necessarily distinct) distributions over types. The state of nature $\vecState$ is generated by an independent random draw from each of the $n$ distributions and a subsequent uniform random permutation of the $n$ types. The name stems from the literature on online algorithms. The prophet-secretary scenario strictly generalizes both IID and random-order scenarios. 

For simplicity, we will assume throughout that all types are indeed distinct. Note that this assumption will be without loss of generality, since we allow distinct types to be associated with the same pair of utility values for $\sender$ and $\receiver$.

\vspace{-0.3cm}
\paragraph{Independent Instances} In an independent instance, every action $i \in [n]$ has a type space $\Theta_i$. For simplicity we assume that the sets $\Theta_i$ are distinct, where we note that distinct types can have the same utility pairs. For each action $i \in [n]$ we have a distribution over types. We denote the probability of type $\theta_i \in \Theta_i$ by $q_{i,\theta_i}$. The state of nature $\vecState$ is generated by an independent draw from each of the $n$ distributions. 


\vspace{-0.3cm}
\paragraph{Direct and Persuasive}
We assume the sender has only $2 \le k \le n$ possible signals. Every instance with $k$ signals has an optimal direct and persuasive scheme. For symmetric instances we can assume these are the first $k$ actions. The proof is a simple revelation-principle-style argument and given in Appendix~\ref{app:directPersuasive}.
\begin{lemma}
    \label{lem:directPersuasive}
    There exists an optimal scheme with $k$ signals that is direct and persuasive and uses the signals to recommend $k$ distinct actions. In symmetric instances, there is an optimal direct and persuasive scheme in which $\sender$ recommends the actions from $[k]$.
\end{lemma}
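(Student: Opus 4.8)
The plan is to follow a revelation-principle-style argument in two stages. First I would show that any optimal scheme can be converted into a \emph{direct} one with the same sender utility: starting from an arbitrary optimal scheme $\phi$ with signal set $\Sigma$, $|\Sigma| = k$, let $\receiver$'s best response to signal $\sigma$ be the action $a(\sigma) \in [n]$ (using the sender-favorable tie-breaking built into the model). Relabel each signal $\sigma$ by the action $a(\sigma)$ it induces; this identifies $\Sigma$ with a multiset of at most $k$ actions. If two signals $\sigma, \sigma'$ induce the same action, merge them into one. The resulting scheme sends, in each state, a signal that is literally the name of an action; since $\receiver$'s behavior is unchanged, both players' utilities are unchanged. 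This yields a direct scheme recommending at most $k$ distinct actions, and since dropping signals only constrains the sender, an optimal scheme uses exactly $k$ — or we may pad with unused signal names — so we may assume it recommends $k$ distinct actions.

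Second I would argue persuasiveness. A direct scheme that recommends action $i$ via signal ``$i$'' is persuasive if, conditioned on receiving signal $i$, action $i$ is a best response for $\receiver$. But in the converted scheme from the first step, signal $i$ is precisely received on the event that $\phi$ sent some original signal $\sigma$ with $a(\sigma) = i$, and $\receiver$'s posterior after that event is exactly the posterior $\receiver$ would form; by construction $a(\sigma)$ was a best response, and merging signals that all recommend $i$ preserves this because the best-response action is optimal conditioned on each of them, hence optimal conditioned on their union (this is the standard argument that a pooled posterior inherits a common optimal action). Therefore $i$ remains a best response conditioned on the merged signal $i$, so the scheme is persuasive. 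This gives the first sentence of the lemma.

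For the symmetric case, I would add a symmetrization step on top of the above. Given an optimal direct persuasive scheme $\phi$ recommending the action set $S \subseteq [n]$ with $|S| = k$, and given any permutation $\pi$ of $[n]$, define $\phi^\pi$ by $\phi^\pi(\pi(\vecState), \pi(i)) = \phi(\vecState, i)$ — i.e., relabel both states and action-signals by $\pi$. Because the instance is symmetric ($q_{\vecState} = q_{\pi(\vecState)}$), $\phi^\pi$ is again direct, persuasive, and optimal, now recommending the action set $\pi(S)$. In particular, choosing $\pi$ to map $S$ onto $[k]$ produces an optimal direct persuasive scheme recommending exactly $[k]$, which is the second sentence.

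The main obstacle I expect is the persuasiveness bookkeeping in the merge step: one must be careful that the posterior after the merged signal $i$ is a convex combination of the posteriors after the individual signals $\sigma$ with $a(\sigma) = i$, so that an action optimal for each summand is optimal for the mixture — and that the sender-favorable tie-breaking assumption is exactly what lets us pick a \emph{single} action optimal for all of them simultaneously (namely the one the model's tie-break already selected), rather than having different optimal actions for different $\sigma$'s that pool to the same signal. Once that convexity observation is in place, everything else is routine relabeling, and the symmetric refinement is immediate from the permutation-invariance of $q$.
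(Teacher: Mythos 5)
Your proposal is correct and follows essentially the same route as the paper's proof: a revelation-principle relabeling-and-merging step to obtain a direct, persuasive scheme with $k$ distinct recommended actions, followed by a permutation argument exploiting $q_{\vecState} = q_{\pi(\vecState)}$ for the symmetric refinement. Your explicit observation that the pooled posterior is a convex combination of the individual posteriors (so a common best response survives the merge) is a detail the paper's proof leaves implicit, but it is the same argument.
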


\section{Symmetric Instances}

\subsection{Characterization of Optimal Schemes}
\label{sec:symmetric}
In this section, we derive a characterization of an optimal scheme in symmetric instances. Due to Lemma~\ref{lem:directPersuasive} we consider a direct and persuasive scheme that recommends actions from the set $[k]$. Suppose we are given a realization $\vecState$ of the state of nature. We interpret the action types as points in the two-dimensional plane. Type $\state_i$ corresponds to point $(\rv(\state_i), \sv(\state_i))$ . We use $C$ to denote the realized set of action types of the first $k$ actions.

Given any direct and persuasive scheme $\varphi$, consider the event that the state of nature gives rise to a set $C$ of types for the first $k$ actions. We denote the probability of this event by
%
   $q_C = \Pr\left[ \bigcup_{i \in [k]} \{\state_i\} = C \right].$
%
Conditioned on the set $C$ of types of the first $k$ actions, consider the point composed of the expected utilities for $\receiver$ and $\sender$, i.e., the point $(\Ex{u_\receiver(\varphi) \mid C}, \Ex{u_\sender(\varphi) \mid C})$. Graphically, this point lies inside the convex hull of the points of $C$. We term this the \emph{recommendation point for $C$} of $\varphi$.

More generally, let us define a \emph{point collection}. A point collection $\calP$ contains for each set $C$ of action types for the first $k$ actions a point $p(C) = (p_\receiver(C), p_\sender(C))$ inside the convex hull of $C$. We define the utilities of $\sender$ and $\receiver$ for $\calP$ by
\[
    u_\sender(\calP) = \sum_{C} q_C \cdot p_\sender(C) \qquad \text{and} \qquad u_\receiver(\calP) = \sum_{C} q_C \cdot p_\receiver(C)\enspace.
\]
Observe that the recommendation points of a direct and persuasive signaling scheme are a point collection, and the utility of the collection equals the utility of the scheme, for both $\sender$ and $\receiver$. However, in general a point collection might not correspond to a persuasive signaling scheme.

Our interest lies in point collections where, for every subset $C$, the point lies on the corresponding Pareto frontier of $C$. Graphically speaking, the Pareto frontier of $C$ can be assumed to start from a type with largest sender utility with a horizontal line (possibly of length 0) with slope 0 and end at a type with largest receiver utility with a vertical line (again, possibly of length 0) with slope $-\infty$. Hence, for every slope $s \in [0, -\infty]$, there is a point on the Pareto frontier such that a line with slope $s$ lies tangent to the Pareto frontier at this point. We say that a type or a point \emph{corresponds to a slope} $s$ if a line with slope $s$ lies tangent to the Pareto frontier in the point.

We concentrate on point collections that satisfy the following slope condition.
\begin{definition}
  \label{defi:sOptimal}
  For $s \le 0$, a point collection $\calP$ is \emph{$s$-Pareto} if (1) for every subset $C$, $p(C)$ is on the Pareto frontier of $C$ and corresponds to slope $s$ and (2) $u_\receiver(\calP) \ge \rv_E$.
\end{definition}
Our first main result is a characterization of an optimal scheme via an $s$-Pareto point collection.
\begin{theorem}
	\label{thm:sPareto}
	For every symmetric instance, there is an optimal scheme whose recommendation points are a sender-optimal $s$-Pareto point collection, over all $s \le 0$.
\end{theorem}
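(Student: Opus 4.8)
The plan is to start from an arbitrary optimal direct and persuasive scheme that recommends the actions in $[k]$ (which exists by Lemma~\ref{lem:directPersuasive}), and transform it in two stages: first push every recommendation point onto the Pareto frontier of its type set $C$, then equalize the tangent slopes across all $C$ to a single common value $s$. Throughout we must be careful that persuasiveness is maintained, and here is where symmetry does the heavy lifting: by symmetry, $\rv_E = \sum_{\vecState} q_{\vecState}\cdot\rv(\state_i)$ is the same for every action $i$, so the receiver's a-priori value from following \emph{any} recommended action is $\rv_E$. Hence a direct scheme on symmetric instances is persuasive if and only if its overall receiver utility satisfies $u_\receiver \ge \rv_E$ --- a single linear constraint rather than one constraint per signal. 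This is the crucial simplification that makes the two transformations below safe.

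\textbf{Step 1: moving to the Pareto frontier.} Given the optimal scheme $\varphi$ with recommendation points $p(C)$, for each $C$ replace $p(C)$ by any point $p'(C)$ on the Pareto frontier of $C$ that weakly dominates $p(C)$ in both coordinates (such a point exists since $p(C)$ lies in the convex hull of $C$). This can only increase $u_\sender$ and $u_\receiver$; in particular $u_\receiver$ still satisfies $u_\receiver \ge \rv_E$, so the resulting point collection is still ``persuasive'' in the linearized sense. The remaining issue is that a point collection on the Pareto frontier need not be realizable by an actual signaling scheme; I would defer that and argue at the very end that any $s$-Pareto point collection of the required utility can be implemented --- or, more carefully, carry the realizability invariant through Step 2 by only performing local exchanges between pairs of sets $C$, which preserve the property of being recommendation points of some scheme.

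\textbf{Step 2: equalizing slopes.} Now suppose two type sets $C_1, C_2$ have recommendation points corresponding to different slopes $s_1 > s_2$ (closer to $0$ versus steeper). I would argue we can do a local improvement: move $p(C_1)$ slightly along its Pareto frontier toward larger receiver utility (decreasing its sender value at marginal rate $|s_1|$) and simultaneously move $p(C_2)$ toward larger sender utility (gaining sender value at marginal rate $|s_2|$, losing receiver value), keeping the total weighted receiver utility $u_\receiver$ fixed by choosing the step sizes in the ratio dictated by $q_{C_1}, q_{C_2}$ and the local slopes. Since $|s_2| > |s_1|$, the net change in $u_\sender$ is strictly positive --- contradicting optimality unless all slopes already agree. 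To turn this into an existence proof rather than a contradiction argument, I would instead phrase it as: among all sender-optimal point collections on the Pareto frontier with $u_\receiver \ge \rv_E$, pick one; if slopes disagree, the exchange argument produces another sender-optimal one with strictly more ``slope agreement'' (e.g.\ decreasing the variance of the slopes, or merging slope-classes), and iterating (with a compactness/limit argument, since slopes range over a finite set of breakpoints coming from finitely many Pareto vertices) yields one with a single common slope $s \le 0$.

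\textbf{Main obstacle.} The delicate point is the interaction between realizability and the Pareto/slope conditions: a point collection is just a list of points in convex hulls, but Theorem~\ref{thm:sPareto} asserts the optimum is \emph{achieved} by a scheme whose recommendation points form such a collection. So I need that every $s$-Pareto point collection with $u_\receiver \ge \rv_E$ is the recommendation-point collection of some direct persuasive scheme --- equivalently, that each Pareto point $p(C)$, being a convex combination of at most two vertices of $C$, can be written as the conditional expected utility vector of a randomization over actions in $[k]$ with the correct induced set probabilities, and that these randomizations can be stitched together consistently across the realizations $\vecState$ giving rise to each $C$. Verifying this bookkeeping (that the per-$C$ randomizations lift to a bona fide map $\varphi(\vecState,\cdot)$ on states, using that $q_C$ aggregates the $q_{\vecState}$) is the technical heart; the two-step improvement argument above is comparatively routine once the linearized persuasiveness constraint is in hand.
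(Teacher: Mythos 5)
Your two-stage improvement argument (project each recommendation point to the Pareto frontier, then equalize slopes via a paired exchange that keeps $u_\receiver$ fixed and strictly increases $u_\sender$ whenever $|s_1|<|s_2|$) is exactly the paper's Lemma~\ref{lem:moveToPareto}, down to the step sizes proportional to $q_{C_2}$ and $q_{C_1}$. The gap is in your ``crucial simplification'': it is \emph{not} true that an arbitrary direct scheme on a symmetric instance is persuasive if and only if its aggregate receiver utility is at least $\rv_E$. Persuasiveness is a per-signal constraint comparing the recommended action against every alternative's \emph{conditional} expectation, and a non-symmetric direct scheme can satisfy the aggregate inequality while one of its signals leaves $\receiver$ strictly better off deviating (one signal over-delivers, another under-delivers). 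The aggregate constraint is equivalent to persuasiveness only for \emph{symmetric} schemes---those that choose which action to recommend based on types alone, independently of the action's index in $[k]$---because then every non-recommended action in $[k]$ has the same conditional distribution $\dist_{no}$, every action outside $[k]$ has conditional expectation exactly $\rv_E$, and the identity $\frac{1}{k}\rv_{yes}+\frac{k-1}{k}\rv_{no}=\rv_E$ converts $\rv_{yes}\ge\rv_E$ into $\rv_{no}\le\rv_E$; this is the content of Lemma~\ref{lem:symPersuasive}.

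This observation is also how the paper closes the realizability question you correctly flag as the main obstacle: given an $s$-Pareto collection, for each realized type set $C$ the scheme recommends one of the (at most two) actions whose types span the segment containing $p(C)$, with the mixing probability determined by $p(C)$, and crucially the choice depends only on the types and not on which of the first $k$ coordinates carries them. Symmetry of the instance then makes the resulting scheme symmetric, its recommendation points are exactly $\calP$, and Lemma~\ref{lem:symPersuasive} upgrades the single linear constraint $u_\receiver(\calP)\ge\rv_E$ to full persuasiveness (Lemma~\ref{lem:persuasive}). Without enforcing symmetry of the constructed scheme, your ``stitching'' step could produce a direct scheme realizing the same point collection that is nevertheless not persuasive; so this is the piece you must actually supply, not mere bookkeeping.
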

We prove the theorem using the following three lemmas. First, we show that for every persuasive scheme $\varphi$, there is an $s$-Pareto point collection $\calP$ with $u_\sender(\calP) \ge u_\sender(\varphi)$.

\begin{lemma}
    \label{lem:moveToPareto}
    For every direct and persuasive scheme $\varphi$, there is an $s$-Pareto point collection $\calP$ with $u_\sender(\calP) \ge u_\sender(\varphi)$.
\end{lemma}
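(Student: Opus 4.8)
The plan is to transform $\varphi$ in two stages: first push every recommendation point onto the Pareto frontier of its set $C$ without decreasing sender utility, then rotate all of them to a common slope $s$ while preserving (or even increasing) $u_\sender$ and without violating the persuasiveness-derived inequality $u_\receiver \ge \rv_E$. For the first stage, fix a set $C$ and let $p(C)$ be the current recommendation point, which lies in the convex hull of $C$. If $p(C)$ is not already on the Pareto frontier, replace it by any point $p'(C)$ on the frontier that weakly dominates it coordinatewise; such a point exists because the convex hull is a bounded polygon and moving in the $(+1,+1)$ direction (or along either axis) from any interior point eventually hits the frontier. This only increases $p_\sender(C)$ and $p_\receiver(C)$, hence increases both $u_\sender$ and $u_\receiver$ over the whole collection, so the constraint $u_\receiver \ge \rv_E$ is maintained (it held for $\varphi$ since $\varphi$ is persuasive, and in a symmetric instance a persuasive direct scheme must give the receiver at least $\rv_E$ on average — this is where I would invoke the symmetry/persuasiveness setup from Section~\ref{sec:model}).

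For the second stage I would use a continuity/intermediate-value argument over the slope parameter. For each slope $s \in [0,-\infty]$ and each set $C$, let $p_s(C)$ be the (essentially unique, up to ties) point on the Pareto frontier of $C$ corresponding to slope $s$, and define $R(s) = \sum_C q_C\, p_\receiver(p_s(C)) = u_\receiver(\calP_s)$. As $s$ ranges from $0$ (tangent line horizontal: the point with largest sender utility, smallest receiver utility among Pareto points) to $-\infty$ (tangent line vertical: the point with largest receiver utility), each $p_s(C)$ moves monotonically along the frontier, so $R(s)$ is monotone (non-decreasing) in the direction of decreasing $s$; moreover it is continuous in $s$ if we parametrize slopes appropriately, e.g. by an angle. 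At $s = -\infty$ we get $R(-\infty) = \sum_C q_C \cdot \max_{\state \in C}\rv(\state) \ge \rv_E$ (the max over any set containing a given action is at least that action's value; taking expectations and using symmetry gives $\ge \rv_E$). Now compare with the Pareto-ized collection from stage one, call its receiver utility $R^\star \ge \rv_E$. If $R^\star$ already equals $R(s)$ for some $s$, pick that $s$. Otherwise, since $R$ is continuous and $R(-\infty) \ge \rv_E$, there is some $s^\star$ with $R(s^\star) = \rv_E$ (if $R(0) \le \rv_E$) or we simply take $s^\star = 0$ (if $R(0) > \rv_E$, in which case the whole frontier is "sender-greedy"); in either case $\calP_{s^\star}$ satisfies condition~(2). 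Finally, for sender utility: at each $C$, among all Pareto points giving receiver value at least the value that $p_{s^\star}(C)$ gives, the point $p_{s^\star}(C)$ is the one maximizing $p_\sender$ (by concavity of the frontier, pushing receiver-value down moves sender-value up along the frontier). Summing, $u_\sender(\calP_{s^\star}) \ge u_\sender(\text{any collection with the same or larger per-}C\text{ receiver values})$, and in particular $\ge u_\sender(\varphi)$ — I would need to be a little careful here and may instead argue directly that $u_\sender(\calP_{s^\star}) \ge u_\sender(\calP_{-\infty}) \ge$ (stage-one collection) is false in general, so the cleanest route is: the stage-one collection has $u_\receiver \ge \rv_E$ and lies on the frontiers, and decreasing the slope from $+\infty$-side trades receiver utility for sender utility pointwise along concave frontiers; stop at the largest $s$ (closest to vertical) for which $u_\receiver$ is still $\ge \rv_E$, which is at least as sender-good as stopping earlier, hence at least as good as the stage-one collection, hence $\ge u_\sender(\varphi)$.

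The main obstacle I anticipate is the last sentence: making rigorous the claim that moving all recommendation points simultaneously to a common slope $s$ does not lose sender utility relative to $\varphi$. The subtlety is that $\varphi$'s recommendation points need not lie on any common slope, and pointwise domination after stage one does not by itself give a common-slope collection that dominates $\varphi$; one must exploit concavity of each Pareto frontier together with the single scalar constraint $u_\receiver \ge \rv_E$ to argue that the optimal common-slope collection is the one with the smallest feasible $s$, and that this beats the stage-one collection. I would handle this by a Lagrangian/exchange argument: think of choosing, for each $C$, a point on the concave frontier to maximize $\sum_C q_C p_\sender(C)$ subject to $\sum_C q_C p_\receiver(C) \ge \rv_E$; the KKT conditions for this (concave) program say that at optimum all active frontiers share a common tangent slope $-\lambda \le 0$, which is exactly the $s$-Pareto condition, and the optimum value is $\ge u_\sender(\text{stage-one collection}) \ge u_\sender(\varphi)$ since the stage-one collection is feasible for this program. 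That reduces the whole lemma to standard convex-optimization reasoning plus the geometric observations above.
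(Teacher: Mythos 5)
Your final argument is correct, but it reaches the conclusion by a genuinely different route than the paper. The paper also starts by lifting each recommendation point to its Pareto frontier (vertically, so receiver utility is exactly preserved), but then runs an explicit local exchange: whenever two sets $C_1,C_2$ have points with no common slope, it perturbs $p(C_1)$ by $(+\delta q_{C_2}, +\delta q_{C_2}s_1)$ and $p(C_2)$ by $(-\delta q_{C_1}, -\delta q_{C_1}s_2)$ along their respective frontier segments; this keeps $u_\receiver$ exactly unchanged while strictly increasing $u_\sender$ by $q_{C_1}q_{C_2}\delta(s_1-s_2)$, and iterating terminates in an $s$-Pareto collection. Your Lagrangian/KKT argument is the global version of the same idea: you maximize $\sum_C q_C p_\sender(C)$ over $\prod_C \mathrm{conv}(C)$ subject to the single linear constraint $\sum_C q_C p_\receiver(C)\ge \rv_E$; since everything is polyhedral, strong duality holds without any constraint qualification, the optimal dual multiplier $\lambda^*\ge 0$ makes the Lagrangian separable over $C$, and each per-$C$ maximizer lies on the frontier face tangent to slope $s=-\lambda^*$ --- exactly the $s$-Pareto condition --- while the original collection $\calP(\varphi)$ (or your stage-one lift) is feasible, giving $u_\sender(\calP)\ge u_\sender(\varphi)$. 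What your approach buys is that it sidesteps the termination argument the paper handles only informally (``eventually, we reach an $s$-Pareto point collection''), and it even makes stage one unnecessary, since $\calP(\varphi)$ is already feasible for the relaxed program. What the paper's approach buys is elementarity and constructiveness. One caution: the intermediate-value detour in the middle of your write-up (monotone $R(s)$, ``stop at the largest $s$ closest to vertical'') is both internally inconsistent about the direction of $s$ and, as you yourself note, insufficient to dominate a collection that has no common slope; it should be cut in favor of the duality argument, which is the part that actually carries the proof.
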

\begin{proof}
    Consider an arbitrary persuasive scheme $\varphi$ that uses signals corresponding to the first $k$ actions. Let $\calP(\varphi)$ be the point collection of recommendation points of $\varphi$. Since $\varphi$ is persuasive, the collection $\calP(\varphi)$ satisfies the second condition of $s$-Pareto. Now we adjust $\calP(\varphi)$ in two steps to show the lemma.

    First, move every recommendation point up vertically to the Pareto frontier. This improves the sender utility of the point collection but keeps the receiver utility the same. Hence, the resulting point collection $\calP$ has all points on the Pareto frontiers, continues to satisfy $u_\receiver(\calP) \ge \rv_E$, and $u_\sender(\calP') \ge u_\sender(\varphi)$.

    Second, suppose there are different subsets $C_1 \neq C_2$ and there is no common slope that points $p(C_1)$ and $p(C_2)$ both correspond to. We use the short notation $p_1 = (\rv_1,\sv_1) = (p_\receiver(C_1), p_\sender(C_1))$ and $p_2 = (\rv_2,\sv_2) = (p_\receiver(C_1), p_\sender(C_2))$, respectively. In particular, suppose $p_1$ corresponds to slope $s_1$ and $p_2$ to slope $s_2 < s_1$. As the slopes are non-positive, $s_2$ is ``steeper'' than $s_1$.

    We construct a new point collection $\calP_1$. For any subset $C \neq C_1, C_2$ of types of the first $k$ actions, we keep $p(C)$. For sets $C_1$ and $C_2$ we adjust the points -- we set
    \[
      p_1' = (\rv_1 + \delta q_{C_2}, \, \sv_1 + \delta q_{C_2} s_1) \hspace{1cm} \text{and} \hspace{1cm} p_2' = (\rv_2 - \delta q_{C_1}, \, \sv_2 - \delta q_{C_1} s_2)
    \]
    by some sufficiently small $\delta > 0$. Intuitively, we move $p_1$ to the ``right'' for the set $C_1$ and to the ``left'' for $C_2$ -- thereby shifting the points along the segments on their respective Pareto frontiers.
    This implies that the sender utility of the point collection grows to

    \begin{align*}
        u_{\sender}(\calP_1) &= \sum_{C \neq C_1,C_2} q_C \cdot p_\sender(C) + q_{C_1} \cdot (\sv_1 + \delta q_{C_2} s_1) + q_{C_2}  \cdot (\sv_2 - \delta q_{C_1}s_2)\\
                           &= u_\sender(\calP) + q_{C_1} \delta q_{C_2} \cdot (s_1 - s_2) \quad > \quad u_\sender(\calP) \quad  \ge \quad u_\sender(\varphi)\enspace,
    \end{align*}
    since $0 \ge s_1 > s_2$. For the receiver utility
    \begin{align*}
        u_{\receiver}(\calP_1) &= \sum_{C \neq C_1,C_2} q_C \cdot p_\receiver(C) + q_{C_1}\cdot (\rv_1 + \delta q_{C_2}) + q_{C_2} \cdot (\rv_2 - \delta q_{C_1})\\
                           &= u_\receiver(\calP) + q_{C_1} \delta q_{C_2} - q_{C_2} \delta q_{C_1} \quad = \quad u_\receiver(\calP) \quad \ge \quad  \rv_E\enspace.
    \end{align*}
    Hence, $\calP_1$ satisfies the second property of $s$-Pareto, while improving the utility for the sender.

	$\delta$ is chosen such that $p_1'$ and $p_2'$ both stay on the line segments of slopes $s_1$ and $s_2$, respectively. Now repeated application of this modification yields collections $\calP_2, \calP_3,\ldots$ until finally points $p_1$ and $p_2$ correspond to at least one common slope: Whenever an endpoint of a line segment is reached, if this endpoint does not correspond to a slope of the other point, the process can be continued. Moreover, we can apply this modification repeatedly as long as there are two size-$k$-sets $C_1$, $C_2$ of types with points that have no common slope. Eventually, we reach an $s$-Pareto point collection $\calP$ with $u_\sender(\calP) \ge u_\sender(\varphi)$.
\end{proof}

Consider any $s$-Pareto point collection $\calP$. We define a direct scheme $\phi^*$ as follows: Given a set $C$ of types in the first $k$ actions and the point $p(C)$, $\phi^*$ recommends one of the (at most) two actions that compose the corresponding line segment of $p(C)$ on the Pareto frontier. The actions are chosen independently of their actual number within the first $k$ actions. By setting appropriate probabilities, the point $p(C)$ corresponds to the (conditioned on the given set $C$) expected utilities of $\phi^*$ for $\sender$ and $\receiver$. This directly implies that $u_\sender(\phi^*) = u_\sender(\calP)$ and $u_\receiver(\phi^*) = u_\receiver(\calP)$.

Due to symmetry of the instance and a choice of action independent of its number within the first $k$ actions, the scheme $\phi^*$ is symmetric. A \emph{symmetric scheme} $\varphi$ (see also~\cite{DughmiX16}) is direct and recommends with each signal a distinct action in $[k]$. The conditional distribution over types (resulting from the prior and $\varphi$) is the same for each recommended action. 
The conditional distribution over types is the same for each non-recommended action in $[k]$ and the same for each non-recommended action in $[n] \setminus [k]$, no matter which (other) action is recommended. 
Thus, a symmetric scheme gives rise to three distributions over types: a distribution $\dist_{yes}$ for any recommended action, a distribution $\dist_{no}$ for any non-recommended action in $[k]$, and a distribution $\dist_{never}$ for any non-recommended action in $[n] \setminus [k]$. For symmetric schemes, we show that persuasiveness is equivalent to the following simple constraint.

\begin{lemma}\label{lem:symPersuasive}
	In symmetric instances, a symmetric scheme $\varphi$ is persuasive if and only if $u_\receiver(\phi) \ge \rv_E$.
\end{lemma}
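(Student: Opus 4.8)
The plan is to prove both directions of the equivalence. One direction is immediate: if $\varphi$ is persuasive, then for the recommended action the receiver's expected utility given the recommendation is at least the expected utility of any other action; averaging over signals, $u_\receiver(\varphi)$ is at least the a-priori expected utility of any fixed action, hence at least $\rv_E$. So the real content is the converse: assuming $u_\receiver(\varphi) \ge \rv_E$, we must show that no receiver deviation is profitable, i.e., for every signal $\sigma$ recommending action $i$ and every alternative action $j \in [n]$, we have $\Ex{\rv(\state_i) \mid \sigma} \ge \Ex{\rv(\state_j) \mid \sigma}$.

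The key idea is to exploit the three-distribution structure of a symmetric scheme. Fix any signal $\sigma$, say it recommends action $1$ without loss of generality. Conditioned on $\sigma$, action $1$ has type distribution $\dist_{yes}$, each other action in $[k]$ has distribution $\dist_{no}$, and each action in $[n]\setminus[k]$ has distribution $\dist_{never}$; and, crucially, these conditional distributions do not depend on which signal $\sigma$ is used (this is exactly the symmetry property). Therefore the receiver's deviation incentives are governed by only three numbers: $\rv_{yes} = \Ex{\rv(\state) : \state \sim \dist_{yes}}$, $\rv_{no} = \Ex{\rv(\state) : \state \sim \dist_{no}}$, and $\rv_{never} = \Ex{\rv(\state) : \state \sim \dist_{never}}$. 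Persuasiveness is equivalent to $\rv_{yes} \ge \rv_{no}$ and $\rv_{yes} \ge \rv_{never}$. Next I would relate these three quantities to $\rv_E$ and to $u_\receiver(\varphi)$. Since the scheme is symmetric, averaging the conditional type distribution of a fixed action $i \in [k]$ over all $k$ signals (each used with total probability $1/k$ by symmetry of the recommended slot) recovers the prior marginal of action $i$; this prior marginal has receiver-expectation exactly $\rv_E$ (using the symmetric-instance identity $\rv_E = \sum_{\vecState} q_{\vecState}\rv(\state_i)$ noted in the model section). Writing this average out gives $\frac{1}{k}\rv_{yes} + \frac{k-1}{k}\rv_{no} = \rv_E$, and similarly for a fixed action in $[n]\setminus[k]$ we get $\rv_{never} = \rv_E$ (such an action is never recommended, so its conditional distribution is the same across all signals and equals its prior). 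Finally, $u_\receiver(\varphi) = \rv_{yes}$, since the receiver always follows the recommendation and every recommended action yields conditional expectation $\rv_{yes}$.

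With these identities the lemma falls out: $u_\receiver(\varphi) = \rv_{yes} \ge \rv_E = \rv_{never}$ handles deviations to actions outside $[k]$, and from $\frac{1}{k}\rv_{yes} + \frac{k-1}{k}\rv_{no} = \rv_E$ together with $\rv_{yes} \ge \rv_E$ we immediately get $\rv_{no} \le \rv_E \le \rv_{yes}$, handling deviations to the other actions in $[k]$. Conversely, persuasiveness gives $\rv_{yes} \ge \rv_{no}$, which combined with the same identity forces $\rv_{yes} \ge \rv_E$, i.e.\ $u_\receiver(\varphi) \ge \rv_E$. The main obstacle is setting up the bookkeeping cleanly: one must argue carefully that in a symmetric scheme each action in $[k]$ is recommended with the same total probability and that the conditional type distributions are genuinely signal-independent, so that the averaging identities hold. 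Once the symmetry is pinned down precisely, the rest is a short linear computation with the three expectations.
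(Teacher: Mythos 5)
Your proof is correct and follows essentially the same route as the paper: both establish the necessity direction by observing that a receiver who ignores the signal can always secure $\rv_E$, and both establish the sufficiency direction by exploiting the three-distribution structure $(\dist_{yes},\dist_{no},\dist_{never})$ of a symmetric scheme, deriving $\rv_{never}=\rv_E$ and the averaging identity $\tfrac1k\rv_{yes}+\tfrac{k-1}{k}\rv_{no}=\rv_E$, and concluding $\rv_{no}\le\rv_E\le\rv_{yes}$. You spell out a few of the bookkeeping steps (signal-independence of the conditionals, each action in $[k]$ being recommended with probability $1/k$, $u_\receiver(\varphi)=\rv_{yes}$) that the paper leaves implicit, but the underlying argument is identical.
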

\begin{proof}
    Clearly, if a scheme $\varphi$ guarantees strictly less utility than $\rv_E$ to $\receiver$, then $\receiver$ could profit by deviating to, say, action 1 throughout. Hence, $u_\receiver(\phi) \ge \rv_E$ is necessary for every persuasive scheme $\phi$.

	Consider a symmetric scheme and the three resulting type distributions $\dist_{yes}$, $\dist_{no}$ and $\dist_{never}$. We denote by $\rv_{yes}$, $\rv_{no}$ and $\rv_{never}$ the expectations of the utility of $\receiver$ for the respective distributions. The previous lemma implies that if $\varphi$ is persuasive, then $\rv_{yes} \ge \rv_E$. Now, for the reverse direction, assume that $\rv_{yes} \ge \rv_E$. Clearly, since instance and scheme are symmetric, it holds that $\rv_{never} = \rv_E$. Again, due to symmetry, every action $i \in [k]$ gets recommended with probability $1/k$. Hence,
	$
	\frac{1}{k} \cdot \rv_{yes} + \frac{k-1}{k} \cdot \rv_{no} = \rv_E,
	$
	and $\rv_{yes} \ge \rv_E$ implies $\rv_{no} \le \rv_E$. It is not profitable for $\receiver$ to deviate from the recommended action. Hence, if $\rv_{yes} \ge \rv_E$, then $\varphi$ is persuasive.
\end{proof}

The symmetric scheme $\varphi^*$ based on an $s$-Pareto point collection satisfies the constraint in Lemma~\ref{lem:symPersuasive} by definition. As such, we obtain the following result, which finishes the proof of Theorem~\ref{thm:sPareto}.

\begin{lemma}
    \label{lem:persuasive}
    For every $s$-Pareto point collection $\calP$, there is a symmetric, direct, and persuasive signaling scheme $\phi^*$ with $u_\sender(\phi^*) = u_\sender(\calP)$.
\end{lemma}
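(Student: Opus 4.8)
The plan is to construct the scheme $\phi^*$ explicitly from the $s$-Pareto point collection $\calP$ and verify the three asserted properties (symmetric, direct/persuasive, utility-preserving). Given a realized set $C$ of types for the first $k$ actions, the point $p(C)$ lies on the Pareto frontier of $C$ and corresponds to slope $s$, so it lies on a (possibly degenerate) line segment of the frontier whose endpoints are one or two of the types in $C$; write $p(C) = \lambda(C)\, a(C) + (1-\lambda(C))\, b(C)$ as a convex combination of those (at most two) extreme types $a(C), b(C)$. The scheme $\phi^*$ then, upon observing a state $\vecState$ inducing type-set $C$ on $[k]$, selects one of those two types to ``point at'' with probabilities $\lambda(C)$ and $1-\lambda(C)$, and recommends the action in $[k]$ carrying that type. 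Crucially, this choice depends only on the unordered set $C$ and on which type within $C$ is being pointed at — never on the index of the action within $[k]$. (If the segment is degenerate, $p(C)$ is a single type and the rule is deterministic; ties among actions sharing that type are broken by any fixed rule, consistent with the model.)

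First I would check the utility identity. Conditioned on $C$, the expected sender utility of $\phi^*$ is $\lambda(C)\sv(a(C)) + (1-\lambda(C))\sv(b(C)) = p_\sender(C)$, and likewise the receiver's conditional expectation is $p_\receiver(C)$. Averaging over $C$ with weights $q_C$ gives $u_\sender(\phi^*) = \sum_C q_C\, p_\sender(C) = u_\sender(\calP)$ and $u_\receiver(\phi^*) = u_\receiver(\calP)$, exactly as the text already anticipates. Next I would argue $\phi^*$ is symmetric in the sense defined just above the lemma: it is direct by construction, and since the rule for choosing which type to recommend ignores action indices, relabeling the first $k$ actions by any permutation commutes with the scheme. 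Combined with symmetry of the instance (so the induced distribution on $C$ and on ``which type gets pointed at'' is permutation-invariant), this yields well-defined conditional type distributions $\dist_{yes}$, $\dist_{no}$, $\dist_{never}$ depending only on recommended/non-recommended-in-$[k]$/outside-$[k]$ status. I would spell out why the distribution over types for a recommended action does not depend on which action was recommended, and similarly for the other two categories — this is where the ``independent of the action number'' property is used.

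Finally, persuasiveness follows from Lemma~\ref{lem:symPersuasive}: since $\phi^*$ is a symmetric scheme, it is persuasive iff $u_\receiver(\phi^*) \ge \rv_E$; but $u_\receiver(\phi^*) = u_\receiver(\calP) \ge \rv_E$ by condition (2) of $s$-Pareto. This closes the lemma and, together with Lemmas~\ref{lem:moveToPareto} and~\ref{lem:symPersuasive}, completes the proof of Theorem~\ref{thm:sPareto}. I expect the main obstacle to be bookkeeping the symmetry claim cleanly: one must be careful that the point collection assigns $p(C)$ to an \emph{unordered} set $C$, that several distinct actions in $[k]$ may carry types on the same Pareto segment (so the ``point at a type, then recommend its carrier'' description must handle ties via the fixed tie-break without breaking symmetry), and that the three conditional distributions are genuinely well-defined. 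Everything else — the utility identity and the appeal to Lemma~\ref{lem:symPersuasive} — is essentially immediate.
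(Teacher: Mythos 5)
Your construction and argument are essentially identical to the paper's: you define $\phi^*$ by recommending (with appropriate convex-combination probabilities) one of the at most two types bounding the Pareto segment containing $p(C)$, in a way that depends only on the unordered set $C$; you then observe the utility identity, invoke symmetry of the instance to conclude $\phi^*$ is a symmetric scheme, and apply Lemma~\ref{lem:symPersuasive} together with condition~(2) of $s$-Pareto to get persuasiveness. Your version is simply more explicit about the convex-combination bookkeeping and the well-definedness of the three conditional distributions, but the route is the same.
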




\subsection{Efficient Computation of Optimal Schemes}
\label{sec:symmetricCompute}

\label{sec:slopeAlgo}

The Slope-Algorithm (Algorithm~\ref{algo:slope}) systematically enumerates a set $S$ containing all meaningful candidate slopes $s$ for an $s$-Pareto point collection. For every pair of types $a,b$ the algorithm determines the probability (denoted by $p_{ab}$) that their line segment (denoted by $\overline{ab}$) is contained in the Pareto frontier of the set $C$ of realizations of the first $k$ actions. For every pair with $s > 0$, one type Pareto dominates the other and the pair can be discarded. Similarly, if $p_{ab} = 0$, the pair can be discarded. The critical step in the first part of the algorithm is the computation of $p_{ab}$ in line~\ref{line:prob1}. For now, we assume that the algorithm has oracle access to these quantities via a \emph{probability oracle}. We will discuss below how to implement the probability oracle in polynomial time.

\begin{algorithm}[t!]
	\KwIn{Symmetric instance with set $\Theta = \Theta_1 = \ldots = \Theta_n$ of action types and distribution $q$}
    $S \leftarrow \emptyset$, $L \leftarrow \emptyset$\\
	\For{every pair of types $a,b \in \Theta$, $a \neq b$}{
        Let $s$ be the slope of $\overline{ab}$ and set $p_{ab} \leftarrow 0$. \\
        \textbf{if} $s \le 0$ \textbf{then} determine prob.\ $p_{ab}$ that $\overline{ab}$ is on the Pareto frontier of types of actions in $[k]$ \label{line:prob1}\\
        \textbf{if} $p_{ab} > 0$ \textbf{then} $S \leftarrow S \cup \{s\}$\\
    }
    Sort the slopes of $S$: $s_1 < s_2 < \ldots < s_{\ell}$\\
    Pick $\ell+1$ auxiliary slopes: $t_1 < s_1 < t_2 < s_2 < \ldots < s_{\ell} < t_{\ell+1}$ \label{line:auxSlope}\\
    $S \leftarrow S \cup \{t_1,\ldots,t_{\ell+1}\}$\\
    \For{every slope $s \in S$}{
       \For{every type $c \in \Theta$}{
         Determine probability $p_c^{(s)}$ that $c$ is the unique point corresponding to $s$ on the Pareto frontier of types of actions in $[k]$\label{line:prob2}
        }
      Solve the following LP to determine an $s$-Pareto point collection:
    \begin{equation}
        \label{eq:slopeLP}
        \begin{array}{lrcll}
        \mbox{Max.} & \multicolumn{2}{l}{\displaystyle\sum_{\overset{c,d \in \Theta, c \neq d}{\overline{cd} \mbox{\scriptsize{ has slope }} s}} p_{cd} \cdot \left(\alpha^{(s)}_{cd} \sv_c + (1-\alpha^{(s)}_{cd}) \sv_d \right)  + \sum_{c \in \Theta}{p_c^{(s)} \sv_c }  }\\
        \mbox{s.t.} & \displaystyle\sum_{\overset{c,d \in \Theta, c\neq d}{\overline{cd} \mbox{\scriptsize{ has slope }} s}} p_{cd} \cdot\left( \alpha^{(s)}_{cd} \rv_c + (1-\alpha^{(s)}_{cd}) \rv_d \right) + \sum_{c \in \Theta} p^{(s)}_c \rv_c & \ge & \rv_E \\
        & \alpha^{(s)}_{cd}    \in [0,1] \mbox{ for all } c,d \in \Theta
        \end{array}
    \end{equation}
    \textbf{if} LP~\eqref{eq:slopeLP} has feasible optimal solution $\vecAlpha^{(s)}$ \textbf{then} $L \leftarrow \{(\vecAlpha^{(s)}, s)\}.$
    }
    \Return{best point collection in $L$ with corresponding slope}
    \caption{Slope-Algorithm}
	\label{algo:slope}
\end{algorithm}

At the end of the first for-loop, the algorithm has collected in $S$ all meaningful slopes of non-empty segments that can appear on the Pareto frontier of the types of the first $k$ actions. In addition to these slopes, every Pareto-frontier can be assumed to contain all slopes from $[0, -\infty)$. An optimal scheme might not necessarily correspond to a slope of any non-empty segment attained in the first for-loop.
If it does not, it must correspond to some slope $t$ with $s_i < t < s_{i+1}$. Note that all slopes $t \in (s_{i},s_{i+1})$ correspond to the same point on the Pareto frontier.
Hence, $t_i$ in line 7 can be chosen arbitrarily.
%

Now even if a slope $s$ is attained by some segment $\overline{ab}$, it might be that for some other subset of types $C$, slope $s$ only corresponds to a single point on the Pareto frontier of $C$. As such, the algorithm also determines in line~\ref{line:prob2} for every $s \in S$ the probability that a single type $c \in \Theta$ corresponds to $s$ on the Pareto frontier of $C$. This is the critical step in the second part of the algorithm. Again, we assume that the algorithm has oracle access to these quantities via a probability oracle. We will discuss in the next section how to implement the probability oracle in polynomial time.

Finally, after having computed all probabilities the algorithm solves LP~\eqref{eq:slopeLP}. For the LP we assume that $s$ is the common slope of the point collection. Clearly, for all subsets $C$ where a single point $c$ corresponds to slope $s$, the choice is trivial. For all subsets $C$, in which some line segment $\overline{cd}$ with slope $s$ is on the Pareto frontier, there is a choice to pick a point from that segment. This choice is represented by the variable $\alpha_{cd}^{(s)} \in [0,1]$. The LP optimizes point locations to maximize the expected utility for $\sender$ (in the objective function) and to guarantee at least the average utility of $\rv_E$ for $\receiver$. For a given slope $s$, the LP might be infeasible. However, by enumerating all relevant common slopes, the algorithm sees at least one feasible solution. It returns the best feasible LP solution along with the slope $s^*$.

Note that the output of the algorithm is sufficient for $\sender$ to implement an optimal persuasive scheme. $\sender$ looks at the set $C$ of the types of the first $k$ actions, computes the Pareto frontier, and looks for slope $s^*$. If $s^*$ is realized by a segment $\overline{ab}$, $\sender$ recommends the action with type $a$ with probability $\alpha^{(s^*)}_{ab}$ and the action with type $b$ with probability $1-\alpha^{(s^*)}_{ab}$. If it is realized through a single type $c$, $\sender$ recommends this action with probability 1.

\begin{proposition}
   \label{prop:Slope}
   Given an efficient algorithm to compute the probability oracle, the Slope-Algorithm computes an optimal direct and persuasive scheme for symmetric instances in polynomial time.
\end{proposition}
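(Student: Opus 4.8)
The plan is to establish two things. \emph{Correctness}: the pair $(\vecAlpha^{(s^*)},s^*)$ returned by the Slope-Algorithm describes a sender-optimal $s$-Pareto point collection over all $s\le 0$, so by Theorem~\ref{thm:sPareto} and Lemma~\ref{lem:persuasive} it yields an optimal direct and persuasive scheme, which $\sender$ implements exactly as described in the paragraph preceding the proposition. \emph{Efficiency}: every step runs in time polynomial in the size of the instance's representation, given a polynomial-time probability oracle. I would prove correctness in two parts — that for each fixed slope LP~\eqref{eq:slopeLP} is exact, and that the slope set $S$ is exhaustive — and then read off the running time.

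For the first part, the central claim is: for each fixed $s\le 0$, LP~\eqref{eq:slopeLP} computes $\max\{\,u_\sender(\calP)\mid \calP\text{ is an }s\text{-Pareto point collection whose common slope is }s\,\}$, and is infeasible precisely when no such collection meets the receiver bound. I would prove this by decomposing over the realized type set $C$ of the first $k$ actions. On the Pareto frontier of $C$, the tangent line of slope $s$ touches either the interior of exactly one edge $\overline{cd}$ (which then has slope $s$) or exactly one vertex $c$; these cases partition the probability space, and $p_{cd}$, $p^{(s)}_c$ are exactly their probabilities. An $s$-Pareto collection with common slope $s$ amounts to choosing, independently for each edge-type $C$, a point $\alpha_C c+(1-\alpha_C)d$ on $\overline{cd}$, the only coupling across the $C$'s being the single linear constraint $u_\receiver(\calP)\ge\rv_E$. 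Since $u_\sender(\calP)$ and $u_\receiver(\calP)$ depend on $\{\alpha_C\}$ only through the per-edge aggregates $A_{cd}=\sum_{C:\,\overline{cd}\text{ on frontier}}q_C\alpha_C\in[0,p_{cd}]$, and every value of $A_{cd}$ in that range is realizable, the substitution $\alpha^{(s)}_{cd}=A_{cd}/p_{cd}$ (discarding edges with $p_{cd}=0$) makes the feasible region of LP~\eqref{eq:slopeLP} coincide with the projection of all such collections onto their receiver/sender utility pair; its objective is $u_\sender$ and its constraint is condition~(2) of $s$-Pareto, equivalently persuasiveness by Lemma~\ref{lem:symPersuasive}. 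Hence the LP value equals the claimed maximum.

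For the second part, I would show $\opt=\max_{s\le 0}\max\{u_\sender(\calP)\mid\calP\ s\text{-Pareto}\}=\max_{s\in S}(\text{value of LP}(s))$, where $\opt$ is the optimal sender utility. If $s$ is the slope of a segment $\overline{ab}$ lying on a frontier with positive probability then $s\in S$. Otherwise, for every $C$ with $q_C>0$ the slope-$s$ tangent touches a single vertex of $C$'s frontier, and for fixed $C$ that vertex changes only as $s$ crosses the slope of an edge of $C$'s frontier — all such slopes belong to the sorted list $s_1<\dots<s_\ell$. So the vertex selected in every relevant $C$, hence the unique $s$-Pareto collection and its utilities, is constant on each interval $(-\infty,s_1)$, $(s_i,s_{i+1})$, $(s_\ell,0]$, for which the auxiliary slopes of line~\ref{line:auxSlope} furnish a representative; for such a $t_i$ no positive-probability segment has slope $t_i$, so LP$(t_i)$ has all $\alpha$-coefficients zero and just reports $\sum_c p^{(t_i)}_c\sv_c$ subject to $\sum_c p^{(t_i)}_c\rv_c\ge\rv_E$. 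Combining with Theorem~\ref{thm:sPareto}, the best entry of $L$ attains $\opt$, and at least one LP is feasible: an optimal direct persuasive scheme exists by Lemma~\ref{lem:directPersuasive}, so Lemma~\ref{lem:moveToPareto} yields an $s$-Pareto collection whose common slope is represented in $S$. Thus $(\vecAlpha^{(s^*)},s^*)$ encodes, via Lemma~\ref{lem:persuasive}, a direct persuasive scheme of value $\opt$.

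For the running time: the first for-loop has $O(|\Theta|^2)$ iterations with one oracle call each, so $|S|=O(|\Theta|^2)$; the second for-loop makes $O(|\Theta|^3)$ oracle calls and for each $s\in S$ solves an instance of LP~\eqref{eq:slopeLP} with $O(|\Theta|^2)$ variables and $O(|\Theta|^2)$ constraints, hence in polynomial time. Crucially, the algorithm never enumerates the (possibly exponentially many) realized sets $C$: all dependence on $C$ and on $k$ is absorbed into the oracle values $p_{ab}$ and $p^{(s)}_c$. I expect the main obstacle to be precisely the central claim of the second paragraph — that LP~\eqref{eq:slopeLP}, with a single variable per edge, faithfully captures every $s$-Pareto collection of common slope $s$: this rests on the edge-or-vertex partition of each frontier together with the linearity of both the objective and the lone persuasiveness constraint, which is what lets one aggregate the per-realization placement choices per edge with no loss (and realize every aggregate in $[0,p_{cd}]$). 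A secondary subtlety is verifying that the vertex-selection map is piecewise constant in $s$ with breakpoints only among the $s_i$, so that finitely many auxiliary slopes suffice.
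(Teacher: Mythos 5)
Your proposal is correct and takes essentially the same approach as the paper: the paper's own proof simply cites ``the characterization in the last section and the observations above'' for correctness and then does the running-time tally, and your two-part argument (the LP is exact for each fixed slope via per-edge aggregation of the $\alpha_C$ choices and linearity of the objective and the single receiver constraint; the slope set $S$ is exhaustive because the vertex-selection map is piecewise constant with breakpoints only among the positive-probability segment slopes, the auxiliary slopes providing representatives of each interval) is precisely a formalization of those textual observations, together with the same running-time bound $O(m^3 T_o + m^2 T_{LP})$.
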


\begin{proof}
   Correctness follows from the characterization in the last section and the observations above. We denote the maximal running time of the probability oracle by $T_o$ and the maximal time needed to solve LP~\eqref{eq:slopeLP} by $T_{LP}$. Let $m = |\Theta|$ denote the finite number of types. Then finding the slopes can be done in time $O(m^2\cdot T_o)$. Sorting the slopes needs time $O(m \log m)$. For the second for-loop, we iterate through $O(m^2)$ slopes. For each slope, we need at most $m$ calls to the probability oracle and solve one LP of polynomial size. Overall, the running time is $O(m^3 \cdot T_o + m^2 \cdot T_{LP})$.
\end{proof}

Using geometric properties of the utility pairs in prophet-secretary and $d$-random-order scenarios, we show how to design polynomial-time probability oracles in these scenarios. For full proofs see Appendix~\ref{app:efficientProbability}.

\begin{theorem}\label{thm:symmetricPolyTime}
	An optimal signaling scheme with $k$ signals can be computed in polynomial time for the prophet-secretary and the $d$-random-order scenarios.
\end{theorem}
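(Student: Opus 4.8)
The plan is to reduce the statement to the construction of a polynomial-time \emph{probability oracle} and then to build that oracle in each of the two scenarios. By Proposition~\ref{prop:Slope}, it suffices to show that the two families of quantities queried by the Slope-Algorithm can be evaluated in polynomial time: the numbers $p_{ab}$ of line~\ref{line:prob1} — the probability that the segment $\overline{ab}$ lies on the Pareto frontier of the random set $C$ of types of the first $k$ actions — and the numbers $p_c^{(s)}$ of line~\ref{line:prob2} — the probability that type $c$ is the unique point of $C$ corresponding to slope $s$. Everything else (sorting, the auxiliary slopes of line~\ref{line:auxSlope}, and LP~\eqref{eq:slopeLP}) is already polynomial by Proposition~\ref{prop:Slope}, so the whole problem is to compute these probabilities.

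The first step is to rewrite both events purely as combinatorial conditions on the random set $C$. Recall that here the ``Pareto frontier'' of $C$ is the upper-right boundary of the convex hull of $C$ (extended by a horizontal and a vertical ray). Thus, for types $a=(\rv_a,\sv_a)$ and $b=(\rv_b,\sv_b)$ with $\rv_a<\rv_b$ (so $\overline{ab}$ has slope $\le 0$), the segment $\overline{ab}$ is an edge of that boundary exactly when $a\in C$, $b\in C$, and every other type of $C$ lies strictly below the line through $a$ and $b$; a point strictly above this line always destroys the edge, while a point on or below it never does (a short convexity argument). Likewise, $c$ is the unique point of $C$ corresponding to slope $s$ exactly when $c\in C$ and $c$ is the strict, unique maximizer over $C$ of the linear functional $\sv - s\rv$. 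In both cases the event is a conjunction ``the required types ($\{a,b\}$, resp.\ $\{c\}$) are present in $C$ and the forbidden types (those strictly above $\overline{ab}$, resp.\ those of functional value at least that of $c$) are absent from $C$.'' The genuine difficulty lies not in the probabilistic computation but in pinning down these conditions at the boundary — three collinear types, two distinct types mapping to the same utility pair, and a slope that is realized by a segment for some realizations of $C$ and by a single point for others — so that the probabilities computed are exactly those consumed by LP~\eqref{eq:slopeLP} (this is the role of the auxiliary slopes $t_i$). This bookkeeping is the main obstacle, but it does not affect the asymptotic running time.

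For the $d$-random-order scenario, I would condition on the chosen vector $j$, of which there are $d$. Conditioned on $j$, the set $C$ is a uniformly random $k$-subset of the $n$ types listed in vector $j$. Labelling each listed type as ``required'', ``forbidden'', or ``neutral'' as above, the conditional probability is simply the probability that a uniformly random $k$-subset of a fixed $n$-element (multi)set contains all required types and no forbidden type — a hypergeometric count, obtained via a short inclusion–exclusion over the at most two required types. Summing over $j$ weighted by the vector probabilities computes $p_{ab}$ and $p_c^{(s)}$ in time $O(d\cdot\mathrm{poly}(n))$.

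For the prophet-secretary scenario, write the set of types of the first $k$ actions as $\{X_i : i\in R\}$, where $X_i\sim\dist_i$ are independent and $R$ is a uniformly random $k$-subset of $[n]$ drawn independently of the $X_i$; this is exactly the effect of the final uniform permutation. Inclusion–exclusion on the required types reduces each probability to terms of the form $\Pr[R\subseteq\{i : X_i\in\Theta'\}]$ for an explicit ``allowed'' type set $\Theta'$ (for $p_{ab}$: all non-forbidden types, or those also excluding $a$, also excluding $b$, or both; for $p_c^{(s)}$: all non-forbidden types, or those also excluding $c$). Setting $Y=|\{i : X_i\in\Theta'\}|$, the variable $Y$ is a Poisson binomial with success probabilities $\Pr[X_i\in\Theta']$, whose distribution is computed by a standard convolution dynamic program in polynomial time, and then $\Pr[R\subseteq\{i:X_i\in\Theta'\}] = \Ex{\binom{Y}{k}}/\binom{n}{k}$. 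This yields $p_{ab}$ and $p_c^{(s)}$ in polynomial time, and since IID and random-order instances are special cases of prophet-secretary, the efficient oracle for those follows as well. Plugging the oracle into Proposition~\ref{prop:Slope} completes the proof.
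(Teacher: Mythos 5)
Your proposal is correct and follows essentially the same route as the paper: reduce via Proposition~\ref{prop:Slope} to implementing the probability oracle, cast each query as a ``required types present, forbidden types absent'' event, condition on the chosen vector (for $d$-random-order) or on the random permutation plus independent draws (for prophet-secretary), and finish with a polynomial-time elementary-symmetric-polynomial/Poisson-binomial dynamic program. The paper conditions directly on the required types being drawn rather than invoking inclusion--exclusion, and it explicitly resolves the collinearity bookkeeping you flag (by forbidding any type on the supporting line but outside $\overline{ab}$, so only the maximal segment at each slope is counted), but these are presentational rather than substantive differences.
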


\section{Independent Instances}
\label{sec:independent}


In this section, we move away from symmetric instances and concentrate on the case of independent actions.
For such instances, computing the expected utility for $\sender$ is \classSharpP-hard, even in the standard case with $n$ actions and $n$ signals~\cite{DughmiX16}. We discuss how to obtain a persuasive scheme for $k$ signals that guarantees a constant-factor approximation to the optimal sender utility for $k$ signals.

We first identify an action with the highest a-priori utility $\rv_E$ for $\receiver$. If there are multiple such actions, pick one that maximizes the expected utility for $\sender$. If there are several of these, pick an arbitrary one from these. We re-number the actions such that this is action $n$. Our signaling schemes use $k$ signals to recommend a set $S \cup \{n\}$ of $k$ actions. The signal for action $n$ plays the role of a dummy signal (c.f.~\cite{DughmiKQ16}).

Our algorithm applies in independent instances, in which there is an optimal scheme $\varphi^*$ such that $\calR$ obtains a conditional expectation of at least $\rv_E$ for every signal. We term this condition \emph{$\rv_E$-optimality}. For example, $\rv_E$-optimality is fulfilled when there is an action that has deterministic utility of $\rv_E$ for $\calR$ (but possibly randomized utility for $\calS$). Then $\calR$ can always secure a value of $\rv_E$ by choosing this action. As such, to be persuasive, $\varphi^*$ must guarantee at least a conditional expected utility of $\rv_E$ for every signal.

Our signaling schemes consist of two steps: (a) choose a suitable set $S$ of $k-1$ actions, and (b) given any set $S \cup \{n\}$ of $k$ actions, compute a signaling scheme that recommends one of these actions.
We give two variants that follow this approach. First, in Section~\ref{sec:constantFactor} we consider the Independent Scheme $\varphi_{IS}$ based on a greedy algorithm for step (a). The approximation guarantee is given in the subsequent theorem. It is $3/8 = 0.375$ for $k=2$. For $k \to \infty$ it approaches $(1-1/e)^2 \approx 0.3996$.
\begin{theorem}
	\label{thm:independentGreedy}
	The Independent Scheme $\varphi_{IS}$ is a direct and persuasive scheme for $\rv_E$-optimal independent instances with $k$ signals. It can be implemented in time polynomial in the input size. For every $k \ge 2$,
	\[
	u_\sender(\varphi_{IS}) \ge \left(1-\left(1-\frac{1}{k}\right)^k\right) \cdot \left(1-\left(1-\frac{1}{k}\right)^{k-1}\right) \cdot u_\sender(\varphi^*)\enspace.
	\]
\end{theorem}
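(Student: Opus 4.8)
The plan is to follow the two-step structure announced in the paper: (a) show that a greedy submodular maximization picks a set $S$ of $k-1$ actions whose ``LP value'' is within a factor $1-(1-1/k)^{k-1}$ of the optimum, and (b) show that given the chosen set $S \cup \{n\}$, the scheme $\varphi_{IS}$ converts the LP solution into a persuasive signaling scheme losing at most a further factor $1-(1-1/k)^{k}$. The product of these two factors is exactly the bound in the theorem, so the main work is to pin down the LP relaxation and verify these two losses.

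\textbf{Setting up the LP relaxation.} First I would write down, for a fixed candidate set $S \cup \{n\}$ of $k$ recommended actions, the natural LP that upper-bounds the optimal $\rv_E$-optimal sender utility restricted to those actions. Using independence, the ``posterior'' after recommending action $i$ is determined by $i$'s realized type together with the product prior on the other coordinates; the decision variables are, for each action $i \in S \cup \{n\}$ and each type $\theta_i \in \Theta_i$, the probability $x_{i,\theta_i}$ that action $i$ is recommended when its type is $\theta_i$. The constraints are: (i) feasibility $x_{i,\theta_i} \in [0,1]$; (ii) a covering/consistency constraint that in every state exactly one of the chosen actions is recommended (this is the constraint one \emph{relaxes} — I would keep only $\sum_{i} \Pr[\text{$i$ recommended}] \le 1$ in expectation, or drop it entirely and handle it in step (b)); and (iii) the persuasiveness constraint, which by $\rv_E$-optimality becomes: for each signal/action $i$, the conditional expected receiver utility is at least $\rv_E$. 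One then observes the objective (expected sender utility) is linear in the $x$'s and the constraints are as above. Denote the optimum of this LP by $\mathrm{LP}(S)$; the key claim to record here is $\mathrm{LP}(S^*) \ge u_\sender(\varphi^*)$ for the actual optimal recommended set $S^* \cup \{n\}$, since $\varphi^*$ is itself a feasible point of this LP.

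\textbf{Step (a): submodularity and greedy.} Next I would argue that the set function $S \mapsto \mathrm{LP}(S)$ (over subsets $S \subseteq [n-1]$ of size up to $k-1$, always together with the mandatory action $n$) is monotone and submodular. Monotonicity is immediate since adding an action only enlarges the feasible region of the LP. For submodularity, the intended mechanism (as the paper hints, ``by dropping and relaxing some constraints ... step (a) becomes a submodular optimization problem'') is that after the relaxation the LP decouples across actions: each added action $i$ contributes, independently, the best sender payoff it can generate subject to its own per-signal persuasiveness constraint, but this contribution is ``capped'' by how much total recommendation probability is still available — and a max/min-type cap gives a submodular function. I would make this precise by writing $\mathrm{LP}(S)$ as an explicit concave-cap expression and invoking a standard lemma (sum of submodular is submodular; $\min$ with a constant or with a modular function preserves submodularity in this structured case). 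Then the classical Nemhauser–Wolsey–Fisher greedy guarantee for monotone submodular maximization under a cardinality constraint of size $k-1$ yields a $(1-(1-1/(k-1))^{k-1})$-approximation; I would double-check whether the cardinality is $k-1$ or $k$ and whether the bound the authors state uses $1-1/k$ rather than $1-1/(k-1)$ — this is exactly the kind of off-by-one that the proof must get right, and it is plausibly handled by including action $n$ as a ``free'' element already in the ground set, giving cardinality $k$ and the stated $1-(1-1/k)^{k-1}$ (or a slight variant that still dominates it).

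\textbf{Step (b): rounding the LP into a persuasive scheme.} Finally, given the greedily-chosen set $S \cup \{n\}$ and the LP-optimal $x$, I would define $\varphi_{IS}$: in a realized state, action $i$ is ``eligible'' independently with probability $x_{i,\theta_i}$; if at least one of $S \cup \{n\}$ is eligible, recommend one of the eligible actions (say, uniformly, or by a fixed priority — the precise rule must be chosen so that conditional on recommending $i$ the receiver's expected utility is still $\ge \rv_E$, which follows because eligibility of $i$ depends only on $i$'s own type and the persuasiveness LP constraint was stated per-action); if none is eligible, recommend the dummy action $n$ anyway (it has a-priori receiver utility $\rv_E$, so its unconditional posterior still yields $\ge \rv_E$ — this is where $\rv_E$-optimality is used crucially, and where the ``dummy signal'' role of $n$ comes in). The sender utility lost is the probability that a given action $i$ is eligible but \emph{not} the one recommended, or is pre-empted; a union-bound / independence computation shows each unit of LP sender-value survives with probability at least the chance that the corresponding action is the \emph{only} eligible one among a set of size $k$ with eligibility probabilities summing to at most $1$, which is minimized at $1-(1-1/k)^{k}$ by convexity. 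I would spell this last inequality out carefully since it is the source of the first factor.

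\textbf{Main obstacle.} I expect the genuine difficulty to be step (a): getting the LP relaxation weak enough that $S \mapsto \mathrm{LP}(S)$ is actually submodular (and still an upper bound on $u_\sender(\varphi^*)$), while simultaneously keeping it strong enough that step (b) only loses the claimed $1-(1-1/k)^k$ factor. The rounding in (b) and the final convexity inequality are routine once the relaxation is fixed; the persuasiveness bookkeeping (ensuring every signal, including the dummy, clears the $\rv_E$ bar) is where $\rv_E$-optimality must be invoked exactly as stated. I would also keep an eye on the $k=2$ boundary case to confirm it gives $3/8$, as a sanity check on the constants.
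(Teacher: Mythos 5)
Your overall architecture — an LP relaxation $f(S)$ that upper-bounds the optimal sender utility for the chosen recommended set, a submodular-greedy step picking $k-1$ actions with a $1-(1-1/k)^{k-1}$ loss against the best set of $k$ non-$n$ actions, and a rounding step converting the LP solution into an independent-coin-flip scheme — is the paper's approach, and your discussion of persuasiveness (per-action LP constraints plus the role of action $n$ as a dummy absorbing the leftover probability at utility exactly $\rv_E$) is on target.

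However, your step~(b) analysis is substantively wrong. You claim ``each unit of LP sender-value survives with probability at least the chance that the corresponding action is the \emph{only} eligible one among a set of size $k$ with eligibility probabilities summing to at most $1$, which is minimized at $1-(1-1/k)^k$.'' Those two clauses do not agree: with $z_i = 1/k$ for all $i$, the probability that a fixed action is the only eligible one is $(1/k)(1-1/k)^{k-1}$, and even the \emph{conditional} probability that no other action is eligible (given $i$ is) is $(1-1/k)^{k-1} \to 1/e$, not $1-1/e$. If you carry the ``only eligible one'' guarantee through, you end up with a factor of roughly $1/e$, strictly weaker than the theorem. The quantity equal to $1-(1-1/k)^k$ is the \emph{coverage} probability $1-\prod_{\ell}(1-z_{\ell})$ that \emph{some} coin comes up heads, and getting from there to the preserved-LP-value ratio requires the ingredient you gloss over: ComputeSignal sorts the $k$ actions in decreasing order of value density $g_{i_\ell}(z^*_{i_\ell})/z^*_{i_\ell}$, flips the coins sequentially, and recommends the first heads; the preserved fraction is then the weighted mediant $\bigl(\sum_\ell g_{i_\ell}(z^*_{i_\ell})\,p_\ell\bigr)/\bigl(\sum_\ell g_{i_\ell}(z^*_{i_\ell})\bigr)$ with $p_\ell = \prod_{\ell'<\ell}(1-z^*_{i_{\ell'}})$, and because the densities and the $p_\ell$ are sorted the same way, the mediant inequality bounds this below by the unweighted ratio $\sum_\ell z^*_{i_\ell} p_\ell / \sum_\ell z^*_{i_\ell} = 1-\prod_\ell(1-z^*_{i_\ell})$, whose worst case over $\sum z^*_\ell = 1$ is attained at $z^*_\ell = 1/k$. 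Without the density ordering and the mediant step, nothing prevents the high-value actions from being concentrated in the low-survival-probability slots, and the bound collapses.

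A smaller issue in step~(a): the correct framing is that the greedy algorithm picks $k-1$ actions and is compared against the best subset $S^*_k$ of size $k$ (not $k-1$), because the genuinely optimal recommended set $K$ has $k$ actions and need not contain $n$; this is where the exponent $k-1$ and base $1-1/k$ both come from, and your tentative ``off-by-one'' guess, while in the right spirit, should be pinned down to this comparison rather than to treating $n$ as a free ground-set element.
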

Subsequently, in Section~\ref{sec:FPTAS}, we describe an improved procedure to compute a good set $S$ in step (a). This improves the approximation ratio considerably for larger values of $k$. The ratio is at least $0.375-\varepsilon$ for $k=2$. For $k \to \infty$, it is at least $1 - 1/e - \varepsilon$.
\begin{theorem}
	\label{thm:independentFPTAS}
	The Improved Independent Scheme $\varphi_{IIS}$ is a direct and persuasive scheme for $\rv_E$-optimal independent instances with $k$ signals. It can be implemented in time polynomial in the input size. For every $k \ge 2$ and every constant $\varepsilon > 0$
	\[
	u_\sender(\varphi_{IIS}) \ge  \left(1-\left(1-\frac{1}{k}\right)^{k}\right) \cdot (1-\varepsilon) \cdot \left(1-\frac{1}{k}\right) \cdot u_\sender(\varphi^*)\enspace.
	\]
\end{theorem}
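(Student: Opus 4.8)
The plan is to keep the two‑step architecture behind Theorem~\ref{thm:independentGreedy} and to replace only the set‑selection subroutine of step~(a) --- the greedy algorithm --- by an FPTAS, carrying over step~(b) verbatim. Recall the LP relaxation underlying Theorem~\ref{thm:independentGreedy}: it yields a value $\opt_{LP} \ge u_\sender(\varphi^*)$ and, for each action $a$, a concave piecewise‑linear ``gain curve'' $h_a$, where $h_a(m)$ is the largest expected sender value obtainable by recommending action $a$ with probability $m$ subject to the receiver's conditional expectation staying at least $\rv_E$; moreover, for every set $T$ of at most $k$ actions with $n\in T$, the restricted LP value
\[
  V(T) \;=\; \max\Bigl\{\textstyle\sum_{a\in T} h_a(m_a)\;:\;\sum_{a\in T} m_a\le 1,\ 0\le m_a\le \bar m_a\Bigr\}
\]
is a monotone submodular function of $T$, and step~(b) of that proof converts an optimal allocation behind $V(T)$ into a direct, persuasive scheme whose sender utility is at least a $\bigl(1-(1-1/k)^k\bigr)$‑fraction of $V(T)$. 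This contributes the first factor of the claimed bound, and I would not modify it.

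Next I would pin down the benchmark for step~(a). Since $\varphi^*$ uses at most $k$ actions $a_1,\dots,a_k$ and the contribution of signal $j$ to $u_\sender(\varphi^*)$ is at most $h_{a_j}(p_j^*)$, discarding the least‑contributing of these $k$ actions costs at most a $1/k$ fraction; the surviving $k-1$ actions together with the dummy action $n$ form a set of size at most $k$ whose induced allocation is feasible for $V(\cdot)$, and by monotonicity of $V$ adding $n$ only helps. Hence
\[
  \max_{S\subseteq[n-1],\,|S|=k-1} V(S\cup\{n\}) \;\ge\; \Bigl(1-\tfrac1k\Bigr)\, u_\sender(\varphi^*)\enspace.
\]

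The heart of the argument is then an FPTAS for $\max_{|S|=k-1} V(S\cup\{n\})$. Inside a fixed candidate set the allocation problem is a fractional multiple‑choice knapsack over the concave piecewise‑linear curves $h_a$ and can be solved exactly, so the only genuinely combinatorial decision is which $k-1$ of the $n-1$ non‑dummy actions to include. Exploiting that each $h_a$ is concave and piecewise linear, I would turn this into a knapsack‑style selection: after a $(1+\varepsilon)$‑grid guess of the optimum I would round the curve values to a polynomially bounded precision and run a dynamic program over the actions, indexed by the number selected so far and the rounded cumulative value, returning a set $S_{IIS}$ with $V(S_{IIS}\cup\{n\}) \ge (1-\varepsilon)\max_{|S|=k-1} V(S\cup\{n\})$ in time polynomial in the input size and $1/\varepsilon$. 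Feeding $S_{IIS}$ and its optimal allocation into step~(b) and chaining the three bounds gives
\[
  u_\sender(\varphi_{IIS}) \;\ge\; \Bigl(1-\bigl(1-\tfrac1k\bigr)^k\Bigr)\cdot V(S_{IIS}\cup\{n\}) \;\ge\; \Bigl(1-\bigl(1-\tfrac1k\bigr)^k\Bigr)(1-\varepsilon)\Bigl(1-\tfrac1k\Bigr)\, u_\sender(\varphi^*)\enspace,
\]
and directness, persuasiveness, and polynomial running time of $\varphi_{IIS}$ are inherited from the step~(b) construction and the FPTAS.

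The main obstacle is this FPTAS: one must show that restricting the continuum of feasible allocations to a polynomial‑size grid (equivalently, rounding the gain curves) loses only a $(1-\varepsilon)$ factor of $V(S\cup\{n\})$ \emph{jointly} with the discretization used by the selection DP, and that the DP state space stays polynomial despite the nested knapsack structure; this is precisely where the procedure becomes ``significantly more elaborate'' than the greedy analysis. The remaining ingredients --- the $1/k$ loss from dropping one of $\varphi^*$'s $k$ actions, monotonicity and submodularity of $V$, and the step~(b) rounding --- are either elementary or already available from the proof of Theorem~\ref{thm:independentGreedy}.
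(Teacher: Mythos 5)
Your overall architecture matches the paper: keep the two steps, retain the factor $1-(1-1/k)^k$ from ComputeSignal, get a factor $1-1/k$ from restricting $\varphi^*$ to $k-1$ of its $k$ recommended actions plus the dummy $n$ (the paper derives this via submodularity of $f$, i.e., $f(S^*)\ge \frac{k-1}{k}f(S^*_k)$, whereas you drop the least-contributing signal directly; both are valid), and a factor $1-\varepsilon$ from replacing greedy by an FPTAS. The decomposition and the benchmark chain are correct.

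The gap is in the FPTAS itself, and it is not merely a matter of filling in routine details. You propose a dynamic program ``indexed by the number selected so far and the rounded cumulative value,'' but the set function $V(T)=f(T\setminus\{n\})$ is defined by an inner optimization with a shared budget constraint $\sum_{a\in T}m_a\le 1$, so its marginal increase on adding an action depends on the whole set chosen so far, not on any scalar aggregate of it. A DP state $(\text{index},\,\#\text{selected},\,\text{rounded value of }V)$ does not carry enough information to continue correctly: two partial sets with the same count and the same rounded $V$ can have very different residual ``water levels'' in the inner allocation, and hence very different continuations. Guessing a $(1+\varepsilon)$-grid value of the overall optimum does not fix this, because the difficulty is the coupling through the allocation, not the unknown target value.

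The paper's proof resolves exactly this coupling. It first discretizes the allocation variables $z_i$ to multiples of $\tau=1/\lceil k/\delta\rceil$, losing a factor $1-\delta$ (Lemma~\ref{lem:discrete}). It then reinterprets $\hat{f}$ as assigning $1/\tau$ unit ``particles'' with marginal profits $m_i^\ell=g_i(\ell\tau)-g_i((\ell-1)\tau)$ and \emph{guesses the marginal profit $m^*$ of the last particle} that the optimal greedy allocation places. Once $m^*$ is fixed, each action $i$ contributes a predetermined ``required'' mass and profit $(w_i^r(m^*),p_i^r(m^*))$ from particles with marginal profit $>m^*$, plus an ``optional'' item $(w_i^o(m^*),p_i^o(m^*))$ of particles at profit exactly $m^*$; this makes the objective separable across actions and turns the selection problem into the 1.5-dimensional knapsack $h(m)$ of \eqref{eqn:knapsack}, to which a standard profit-rounding DP applies (state $(i,j,\bar p^r,\bar p^o)$ storing the minimum packed required size). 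Lemma~\ref{lem:h} ties $h(m^*)$ back to $\hat f(\hat S^*)$. This ``guess the water level and split each action into a required and an optional item'' step is the crucial idea that your sketch is missing; without it, the DP you describe does not have a well-defined or correct recursion, so the claimed $(1-\varepsilon)$ guarantee for step (a) is not established.
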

We observe below that for large values of $k$, this is essentially a tight guarantee for our approach. A further improvement of the approximation ratio requires significantly different techniques.

\subsection{Constant-Factor Approximation}
\label{sec:constantFactor}

In this section, we describe the Independent Scheme and prove Theorem~\ref{thm:independentGreedy}. For each type set $\Theta_i$, we w.l.o.g.\ include a sufficient number of dummy types $\state_i$ with $q_{i,\state_i} = 0$ and assume that $|\Theta_i| = |\Theta_j| = m$, for all $i,j \in [n]$. We use $[m]$ to enumerate the possible types of each action $i$. Now for any subset $S \subseteq [n-1]$ of the first $n-1$ actions, consider a set function $f : 2^{[n-1]} \to \mathbb{R}$ defined by
\begin{equation}
\label{eqn:f}
f(S) = \max \left\{ \sum_{i \in S \cup \{n\}} g_i(z_i) \growingmid \sum_{i \in S \cup \{n\}} z_i \le 1 \text{ and } z_i \ge 0 \text{ for all } i \in S \cup \{n\}\right\}
\end{equation}
where
\begin{equation}\label{eqn:LP-submodular2}
\begin{array}{llrcll}
\renewcommand{\arraystretch}{0.5}
g_i(z) = &\mbox{Max.} & \multicolumn{2}{l}{\D \sum_{j=1}^m x_{ij} \sv_{ij} }\\
& \mbox{s.t.} & \D \sum_{j=1}^m x_{ij} & \le & z \\
& & \D \sum_{j=1}^m x_{ij} \rv_{ij} & \ge & \rho_{E}\cdot \D \sum_{j=1}^m x_{ij} & \\
& & x_{ij} & \in & [0,q_{ij}] & \mbox{for all } j \in [m] 
\end{array}
\end{equation}
For an intuition, we interpret $z_i$ as an overall probability of a signal for action $i$. Then $g_i(z_i)$ maximizes the expected utility for the sender conditioned on a probability mass of $z_i$ on action $i$. In LP~\eqref{eqn:LP-submodular2}, $x_{ij}$ describes the portion of the probability mass on type $j$ of action $i$. The first constraint of LP~\eqref{eqn:LP-submodular2} limits the total mass of action $i$ to at most $z$. The second constraint ensures that the conditional expected utility of $\vecx$ for $\receiver$ is at least $\rv_E$. Finally, the last constraint states that the probability of a signal for type $j$ is at most the probability that type $j$ is realized.

Consider any direct and persuasive scheme $\varphi_{S \cup \{n\}}$ that uses $|S|+1$ signals to recommend the actions $S \cup \{n\}$. Suppose $x_{ij}$ is the ex-post probability to recommend action $i$ with type $j$ in $\varphi_{S \cup \{n\}}$. Clearly, the constraints in~\eqref{eqn:f} and~\eqref{eqn:LP-submodular2} do not fully capture the constraints on $x_{ij}$. However, all constraints are necessary. In particular, setting $x_{ij}$ to the ex-post probability of recommending action $i$ with type $j$ in the optimal scheme $\varphi^*_{S \cup \{n\}}$ gives a feasible solution for every LP~\eqref{eqn:LP-submodular2}, and $z_i = \sum_{j=1}^m x_{ij}$ is feasible for~\eqref{eqn:f} (c.f.~\cite[Lemma 1]{HahnHS20IJCAI}). Hence, for any given subset $S \cup \{n\}$ of recommended actions, $f(S)$ is an upper bound on the optimal sender utility, i.e., $f(S) \ge u_\sender(\varphi^*_{S \cup \{n\}})$.


Now, consider the Independent Scheme $\varphi_{IS}$. It consists of two steps: (a) choose a suitable set $S$ of $k-1$ actions, and (b) given any set $S \cup \{n\}$ of $k$ actions, compute a signaling scheme that recommends one of these actions. Step (a) is done in ActionsGreedy (Algorithm~\ref{algo:greedy}), step (b) in ComputeSignal (Algorithm~\ref{algo:signal}).
\begin{algorithm}[t]
   \DontPrintSemicolon
	\KwIn{Type sets $\Theta_1, \ldots, \Theta_n$ and distributions $q_1,\ldots,q_n$, s.t.\ $\sum_j q_{n,j} \rv_{nj} = \rv_E$ and $\sum_j q_{n,j}\sv_{nj} = \max_{i \in [n] \, : \, \sum_j q_{i,j}\rv_{ij} = \rv_E} \sum_j q_{i,j}\sv_{ij}$, parameter $2 \le k \le n$}
    $S \leftarrow \emptyset$\;
    \bfFor\ $\ell=1,\ldots,k-1$ \bfDo: Let $i$ be an action maximizing $f(S \cup \{i\}) - f(S)$ and set $S \leftarrow S \cup \{i\}$\;
    \Return{$S$}
   \caption{\label{algo:greedy} ActionsGreedy}
\end{algorithm}
\begin{algorithm}[t]
   \DontPrintSemicolon
	\KwIn{Type sets $\Theta_1, \ldots, \Theta_n$ and distributions $q_1,\ldots,q_n$, s.t.\ $\sum_j q_{n,j} \rv_{nj} = \rv_E$ and $\sum_j q_{n,j}\sv_{nj} = \max_{i \in [n] \, : \, \sum_j q_{i,j}\rv_{ij} = \rv_E} \sum_j q_{i,j}\sv_{ij}$, parameter $2 \le k \le n$, set $S \subseteq [n-1]$ with $|S| = k-1$}

    For $i \in S \cup \{n\}$, let $z^*_i$ and $\vecx^*_i$ be the values of the optimal solution in $f(S)$.\;
    Order actions in $S \cup \{n\}$ such that $\frac{g_{i_1}(z^*_{i_1})}{z^*_{i_1}} \ge \ldots \ge \frac{g_{i_{k+1}}(z^*_{i_k})}{z^*_{i_k}}$, where we assume $\frac{0}{0} = 0$\;

    \For{$\ell = 1,\ldots,k$}{
      Observe type $j$ of action $i_\ell$. Flip independent coin with probability $x^*_{i_\ell,j} / q_{i_\ell,j}$ for heads.\;
      \bfIf\ coin comes up heads \bfThen\ \Return{signal for action $i_\ell$}
    }
    \Return{signal for action $n$} \label{line:finalSignal}

   \caption{\label{algo:signal} ComputeSignal}
\end{algorithm}

We start our analysis by bounding the approximation of $\varphi_{IS}$ in terms of optimal sender utility. Towards this end, we observe that ActionsGreedy implements the greedy algorithm for submodular maximization.
\begin{lemma}
   $f$ is non-negative, non-decreasing, and submodular.
\end{lemma}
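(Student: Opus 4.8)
I need to establish three properties of the set function $f$ defined in~\eqref{eqn:f}: non-negativity, monotonicity, and submodularity. Non-negativity is immediate: the all-zero solution $z_i = 0$ for all $i \in S \cup \{n\}$ is feasible in~\eqref{eqn:f}, and since each $g_i(0) = 0$ (the only feasible $\vecx_i$ is the zero vector, whose objective is $0$), we get $f(S) \ge 0$. Monotonicity is almost as easy: if $S \subseteq S'$, then any feasible solution $(z_i)_{i \in S \cup \{n\}}$ for $f(S)$ extends to a feasible solution for $f(S')$ by setting $z_i = 0$ for $i \in S' \setminus S$, with the same objective value; hence $f(S') \ge f(S)$.

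The main work is submodularity, i.e., showing $f(S \cup \{a\}) - f(S) \ge f(S' \cup \{a\}) - f(S')$ whenever $S \subseteq S'$ and $a \notin S'$. The natural route is to recognize $f$ as the value of a separable concave maximization over a budget (knapsack-type) constraint: $f(S) = \max\{\sum_{i \in S \cup \{n\}} g_i(z_i) : \sum_i z_i \le 1, z_i \ge 0\}$. First I would verify that each $g_i: [0,\infty) \to \mathbb{R}$ is concave and non-decreasing. Concavity of $g_i$ follows because $g_i(z)$ is the optimal value of an LP whose right-hand side (and the coupled constraint $\sum_j x_{ij}\rv_{ij} \ge \rho_E \sum_j x_{ij}$, which is homogeneous and so scales) depends on $z$ in a way that makes the feasible region's ``upper envelope'' concave — concretely, if $\vecx$ achieves $g_i(z)$ and $\vecx'$ achieves $g_i(z')$, then $\lambda\vecx + (1-\lambda)\vecx'$ is feasible for $g_i(\lambda z + (1-\lambda)z')$ (the box constraints $x_{ij} \in [0, q_{ij}]$, the cap constraint, and the homogeneous receiver constraint are all preserved under convex combination), giving $g_i(\lambda z + (1-\lambda)z') \ge \lambda g_i(z) + (1-\lambda)g_i(z')$. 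Monotonicity of $g_i$ is clear since enlarging the cap $z$ only enlarges the feasible set.

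Given that $f$ is the maximum of a sum of concave non-decreasing functions over a single shared budget, submodularity is a known fact, but I would want to give the short argument rather than cite it. The cleanest approach: fix $S \subseteq S'$ and $a \notin S'$. Take an optimal solution $(z^*_i)_{i \in S' \cup \{a\} \cup \{n\}}$ for $f(S' \cup \{a\})$ and an optimal solution $(w^*_i)_{i \in S' \cup \{n\}}$ for $f(S')$. The marginal $f(S' \cup \{a\}) - f(S')$ measures how much the extra coordinate $a$ with its concave ``return curve'' $g_a$ can improve things when competing for budget against the larger index set $S'$; intuitively, the larger the competing set, the less room there is, so the marginal shrinks. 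To make this rigorous I would use an exchange argument: show that for any budget level, the marginal benefit of adding action $a$ to base set $S$ is at least the marginal benefit of adding $a$ to $S' \supseteq S$, by taking the optimal solution for $f(S' \cup \{a\})$, restricting attention to how much budget it allocates to $a$ (call it $\beta$), and observing that $f(S) \ge f(S' \cup \{a\}) - g_a(\beta) - (\text{contribution of } S' \setminus S)$ while also $f(S' \cup \{a\}) - f(S') \le g_a(\beta)$ in a suitable comparison — actually the sharpest version follows from the general principle that the maximum of $\sum_i g_i(z_i)$ over $\sum_i z_i \le B$, viewed as a function of which indices are ``switched on'', is submodular whenever each $g_i$ is concave with $g_i(0)=0$. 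The hard part — and the only place requiring care — is getting the exchange/greedy-rebalancing argument for submodularity stated cleanly without drowning in notation; everything else (non-negativity, monotonicity, concavity of $g_i$) is routine. I would organize the proof as: (i) $g_i$ concave, non-decreasing, $g_i(0)=0$; (ii) quote or prove the lemma that a budgeted max of separable concave functions induces a submodular set function; (iii) conclude non-negativity and monotonicity of $f$ directly from the formula.
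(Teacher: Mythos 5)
Your identification of the structure is correct and matches the paper's: $f$ is a budgeted maximization $\max\{\sum_i g_i(z_i) : \sum_i z_i \le 1, z_i \ge 0\}$ of separable, concave, non-decreasing functions $g_i$ with $g_i(0)=0$, and this is exactly what the paper exploits. Your argument that each $g_i$ is concave (via convex combinations of feasible LP solutions, using that the receiver constraint is homogeneous and the cap constraint scales) is sound and is the right justification — the paper just asserts piecewise linearity and concavity from parametric-LP sensitivity, so your argument is arguably more explicit. Non-negativity and monotonicity are also correctly handled (zero is feasible with value zero; extend by zeros for monotonicity).

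However, there is a genuine gap: you do not actually prove submodularity, which is the entire substance of the lemma. You write that you "would want to give the short argument rather than cite it," but the argument you sketch does not close. Specifically, you correctly derive $f(S'\cup\{a\}) - f(S') \le g_a(\beta)$ where $\beta$ is the budget that the optimal solution of $f(S'\cup\{a\})$ gives to $a$; but you never establish the matching bound $f(S\cup\{a\}) - f(S) \ge g_a(\beta)$, and in fact that inequality is false in general (adding $a$ to $S$ and forcing budget $\beta$ onto it may have to displace mass from high-slope functions in $S$). You then retreat to "actually the sharpest version follows from the general principle," which is an appeal to a known result rather than a proof. The paper does give the missing argument, via water-filling: write $V_T(B)$ for the value of the budget-$B$ relaxation over $T$; then the marginal slope $\lambda_T(B)$ at which water-filling stops is nondecreasing in $T$ for fixed $B$, so the loss $V_T(1) - V_T(1-\beta) = \int_{1-\beta}^1 \lambda_T(b)\,db$ from withdrawing budget $\beta$ is at least as large for $T \supseteq S$ as for $S$; since $f(T\cup\{a\}) - f(T) = \max_\beta\left[g_a(\beta) - \bigl(V_T(1) - V_T(1-\beta)\bigr)\right]$, this gives the decreasing-marginals inequality. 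Without some version of this slope-comparison (or a careful citation to a budgeted concave allocation submodularity lemma), the proof is incomplete.
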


\begin{proof}

$f$ is clearly non-negative and non-decreasing -- since every $g_j$ is non-negative, piece-wise linear, and concave. Hence, $f(S \cup \{j\})$ can only improve over $f(S)$. To see that $f$ is submodular, note that $f$ optimally distributes a unit of mass to a set of monotone, concave functions. Consider the common slope of the functions $g_i$ for $i \in S$ resulting from the optimal waterfilling assignment of $z^*_i$ in $f(S)$. When going from $S$ to $S \cup \{j\}$, the slope can only decrease. As a consequence, when adding more elements to $S$, the $z^*_i$ are non-increasing.

Consider $S \subseteq T$ and $j \not\in T$. Let $z^S_j$ be the optimal choice in $f(S \cup \{j\})$ and $z^T_j$ be the one in $f(T \cup \{j\})$. Note that $z^S_j \ge z^T_j$. Now assume that for $f'(S \cup \{j\})$, we only allow to assign at most $z^T_j$ to $g_j$. Then $f'(S \cup \{j\}) \le f(S \cup \{j\})$, since in the former a mass of $z^S_j - z^T_j$ yields a smaller growth in value due assignment to $i \neq j$ with a smaller slope. When shifting from $f(S)$ to $f'(S \cup \{j\})$ and from $f(T)$ to $f(T \cup \{j\})$, in both cases the increase at $j$ is $g_j(z'_j)$, and a mass of $z^T_j$ is removed from the remaining functions. This has a stronger effect in $S$, since the removal occurs at a higher slope. Overall, $f(T \cup \{j\}) - f(T) \; \le \; f'(S \cup \{j\}) - f(S) \; \le \; f(S \cup \{j\}) - f(S)$.
\end{proof}
By Lemma~\ref{lem:directPersuasive} we can assume that the optimal scheme $\varphi^*$ directly recommends a set $K$ of $k$ actions.
\begin{lemma}
  \label{lem:greedyApx}
  For every $k \ge 2$, ActionsGreedy computes a subset $S$ of $k-1$ actions such that
    \[  f(S) \ge \left(1 - \left(1-\frac 1k \right)^{k-1}\right) \cdot u_\sender(\varphi^*)\enspace.  \]
\end{lemma}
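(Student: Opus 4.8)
The plan is to invoke the standard analysis of the greedy algorithm for monotone submodular maximization under a cardinality constraint, but with a twist: here we compare against $u_\sender(\varphi^*)$ rather than against $\max_{|T| = k-1} f(T)$, and we need to account for the fact that the optimal scheme recommends a set $K$ of $k$ actions (including action $n$) while greedy only picks $k-1$ actions $S$ (with action $n$ added implicitly everywhere through $f$). First I would recall from the previous lemma that $f$ is non-negative, non-decreasing, and submodular, and from the discussion preceding it that $f(S \cup \{n\})$ — which in our notation is just $f(S)$ since $f$ always includes action $n$ in its defining LP — upper-bounds $u_\sender(\varphi^*_{S \cup \{n\}})$ for any $S$. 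In particular, taking $S = K \setminus \{n\}$ (of size $k-1$) gives $f(K \setminus \{n\}) \ge u_\sender(\varphi^*_{K}) = u_\sender(\varphi^*)$. So it suffices to show that ActionsGreedy, which performs $k-1$ greedy steps, produces a set $S$ with $f(S) \ge \left(1 - (1-1/(k-1))^{k-1}\right) f(K\setminus\{n\})$ — wait, that would give the wrong constant, so the more careful statement is that greedy run for $k-1$ steps against a target set of size $k-1$ yields the $(1-(1-1/(k-1))^{k-1})$ bound, but the claimed bound has exponent $k-1$ over base $1-1/k$. The resolution is that $f$ already bakes in action $n$, so effectively the "budget" is $k$ and the comparison set also has effective size $k$; thus I would set up the recursion with $k$ in place of the usual cardinality parameter.

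Concretely, here is the recursion I would carry out. Let $S_\ell$ denote the set after $\ell$ greedy steps, $S_0 = \emptyset$, and let $\opt = f(K \setminus \{n\})$, where $K$ is the $k$-set recommended by $\varphi^*$ (which we may assume contains action $n$, since $n$ has the a-priori-best receiver utility and is always available as a dummy). By submodularity and monotonicity, for any $\ell$,
\[
  \opt = f(K \setminus \{n\}) \le f(S_\ell \cup (K \setminus \{n\})) \le f(S_\ell) + \sum_{i \in (K \setminus \{n\}) \setminus S_\ell} \bigl(f(S_\ell \cup \{i\}) - f(S_\ell)\bigr).
\]
Since $|(K\setminus\{n\}) \setminus S_\ell| \le k-1 < k$ and the greedy step picks the element of maximum marginal gain, the $(\ell+1)$-st gain satisfies $f(S_{\ell+1}) - f(S_\ell) \ge \frac{1}{k}\bigl(\opt - f(S_\ell)\bigr)$ — here I use the bound $k$ rather than $k-1$ on the number of remaining target elements, which is where the exponent/base combination $1-1/k$ comes from. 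Setting $a_\ell = \opt - f(S_\ell)$, this reads $a_{\ell+1} \le (1 - 1/k) a_\ell$, and since $a_0 = \opt - f(\emptyset) \le \opt$, after $\ell = k-1$ steps we get $a_{k-1} \le (1-1/k)^{k-1}\opt$, i.e. $f(S) = f(S_{k-1}) \ge \bigl(1 - (1-1/k)^{k-1}\bigr)\opt \ge \bigl(1 - (1-1/k)^{k-1}\bigr) u_\sender(\varphi^*)$, as desired.

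The main obstacle — and the step I would be most careful about — is justifying the constant $\frac{1}{k}$ (rather than $\frac{1}{k-1}$) in the per-step progress inequality. This requires a clean argument that, because $f$ is defined with action $n$ always included in its LP, adding the element $n$ to $S_\ell$ gives zero marginal gain ($f(S_\ell \cup \{n\}) = f(S_\ell)$), so we may as well include $n$ in the target set $K$ being compared against, padding its effective size up to $k$ while $S_\ell$ only ever grows to size $k-1$. Equivalently, one argues that greedy choosing among all of $[n-1]$ includes action $n$'s contribution "for free," so the relevant decomposition of $\opt - f(S_\ell)$ is over at most $k$ marginal terms and the maximum is at least a $1/k$ fraction. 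I would want to state this explicitly to avoid an off-by-one error, since it is precisely what distinguishes the bound $1-(1-1/k)^{k-1}$ claimed here (which, composed with the analogous factor from ComputeSignal, yields the Theorem~\ref{thm:independentGreedy} guarantee) from the naive $1-(1-1/(k-1))^{k-1}$.
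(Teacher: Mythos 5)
Your proof is correct and matches the paper's approach: the paper simply cites the standard generalization of the Nemhauser--Wolsey--Fisher analysis (greedy run for $k-1$ steps against a benchmark set of effective size $k$, since action $n$ is built into $f$ but need not belong to $K$), which is exactly the recursion you write out. One small caveat: your claim that we may assume $n \in K$ is not justified --- the paper explicitly notes that $K$ need not contain $n$ --- but it is also unnecessary, since your per-step factor $1/k$ already covers the case $|K \setminus \{n\}| = k$, and the inequality $f(K \setminus \{n\}) \ge u_\sender(\varphi^*_{K \cup \{n\}}) \ge u_\sender(\varphi^*)$ holds either way.
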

\begin{proof}
ActionsGreedy is a standard greedy algorithm for submodular maximization. Note that
\[
u_\sender(\varphi^*) \; \le \; u_\sender(\varphi^*_{K \cup \{n\}}) \; \le \; f(K) \; \le \; f(S_k^*)\enspace,
\]
where $S_k^* \in \arg \max \{ f(S) \mid S \subseteq [n-1], |S| = k \}$. The action $n$ is apriori receiver-optimal, and in our scheme below it will play the role of an outside option, a baseline or dummy signal (c.f.\ \cite{DughmiKQ16,HahnHS20IJCAI}). However, it is not necessarily part of the optimal subset $K$ of signals. As such, we overestimate the optimal value by $f(S_k^*)$, the best set of $k+1$ recommended actions, one of which must be action $n$.

A simple generalization of the standard analysis in~\cite{NemhauserWF78} (see, e.g.,~\cite[Theorem 1.5]{KrauseG14}) shows that for this case the greedy solution $S$ guarantees $f(S) \ge (1- (1-1/k)^{k-1}) \cdot f(S^*_k)$, and the lemma follows.
\end{proof}

Now consider the second step of $\varphi_{IS}$, i.e., the computation of a signal using ComputeSignal.

\begin{lemma}
  \label{lem:signalApx}
  For every $k \ge 2$, let $S \cup \{n\}$ be any set of $k$ actions. Given the set $S \cup \{n\}$ of actions, ComputeSignal computes a signaling scheme $\varphi$ such that
  \[ u_\sender(\varphi) \ge \left(1 - \left(1- \frac 1k \right)^k\right) \cdot f(S)\enspace.\]
\end{lemma}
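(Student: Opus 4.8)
The plan is to analyze the random process defined by \texttt{ComputeSignal} directly and show that, in expectation, the sender collects at least a $(1-(1-1/k)^k)$ fraction of $f(S) = \sum_{i \in S \cup \{n\}} g_i(z^*_i)$. Recall that the optimal solution of $f(S)$ assigns probability mass $z^*_i$ to each action $i \in S \cup \{n\}$ with $\sum_i z^*_i \le 1$, and that $g_i(z^*_i)$ is the corresponding sender value, realized by the fractional recommendation vector $\vecx^*_i$ with $\sum_j x^*_{i,j} = z^*_i$. In the algorithm, when action $i_\ell$'s type $j$ is observed, the coin comes up heads with probability $x^*_{i_\ell,j}/q_{i_\ell,j}$; hence, conditioned on reaching step $\ell$, the \emph{unconditional} probability (over the realized type) that we emit the signal for $i_\ell$ is exactly $\sum_j q_{i_\ell,j} \cdot (x^*_{i_\ell,j}/q_{i_\ell,j}) = z^*_{i_\ell}$, and conditioned on emitting that signal, the expected sender utility is $\left(\sum_j x^*_{i_\ell,j}\sv_{i_\ell,j}\right)/z^*_{i_\ell} = g_{i_\ell}(z^*_{i_\ell})/z^*_{i_\ell}$, the ``density'' of action $i_\ell$.

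First I would set $\pi_\ell$ to be the probability that step $\ell$ is \emph{reached}, so $\pi_1 = 1$ and $\pi_{\ell+1} = \pi_\ell(1 - z^*_{i_\ell})$, giving $\pi_\ell = \prod_{t<\ell}(1-z^*_{i_t})$. The expected sender utility of $\varphi$ is then
\[
u_\sender(\varphi) \;=\; \sum_{\ell=1}^{k} \pi_\ell \cdot z^*_{i_\ell} \cdot \frac{g_{i_\ell}(z^*_{i_\ell})}{z^*_{i_\ell}} \;=\; \sum_{\ell=1}^{k} \pi_\ell \cdot g_{i_\ell}(z^*_{i_\ell}),
\]
where the action-$n$ branch in line~\ref{line:finalSignal} only adds nonnegative value and can be dropped from the lower bound. (Persuasiveness: every signal either recommends some $i_\ell$ via the mass $\vecx^*_{i_\ell}$, which satisfies the receiver constraint $\sum_j x^*_{i_\ell,j}\rv_{i_\ell,j} \ge \rho_E \sum_j x^*_{i_\ell,j}$ by feasibility in LP~\eqref{eqn:LP-submodular2}, or recommends action $n$ whose a-priori receiver value is exactly $\rv_E$; this is where $\rv_E$-optimality and the choice of action $n$ are used.) So it remains to show $\sum_\ell \pi_\ell \, g_{i_\ell}(z^*_{i_\ell}) \ge (1-(1-1/k)^k)\sum_\ell g_{i_\ell}(z^*_{i_\ell})$.

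The main obstacle — and the heart of the argument — is this last inequality, which is a purely analytic claim about the ordering chosen in the algorithm. The actions are sorted by decreasing density $g_{i_\ell}(z^*_{i_\ell})/z^*_{i_\ell}$, i.e. the highest-density action is tried first, precisely so that the shrinking weights $\pi_\ell$ multiply the largest per-unit values. I would argue as follows: write $v_\ell = g_{i_\ell}(z^*_{i_\ell})$ and $w_\ell = z^*_{i_\ell}$, so the densities $v_\ell/w_\ell$ are non-increasing, $\sum_\ell w_\ell \le 1$, and $\pi_\ell = \prod_{t<\ell}(1-w_t)$. Since $\pi_\ell$ is non-increasing in $\ell$ and the ratios $v_\ell/w_\ell$ are non-increasing, a rearrangement/exchange argument shows that among all ways of splitting a total weight budget of $1$ into at most $k$ pieces, the quantity $\left(\sum_\ell \pi_\ell v_\ell\right)/\left(\sum_\ell v_\ell\right)$ is minimized when all pieces are equal, $w_\ell = 1/k$; intuitively, making an early (heavily weighted) piece larger only helps, so the worst case pushes mass to later pieces and equalizes. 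With $w_\ell = 1/k$ for $\ell = 1,\dots,k$ we get $\pi_\ell = (1-1/k)^{\ell-1}$, hence
\[
\frac{\sum_{\ell=1}^k \pi_\ell v_\ell}{\sum_{\ell=1}^k v_\ell} \;\ge\; \min_{1 \le \ell \le k} \pi_\ell \;=\; \left(1 - \frac1k\right)^{k-1} \quad\text{— too weak;}
\]
instead one uses the concavity/telescoping identity $\sum_{\ell=1}^k w_\ell \prod_{t<\ell}(1-w_t) = 1 - \prod_{\ell=1}^k(1-w_\ell)$, and combines it with the density ordering via an Abel summation (summation by parts) to transfer the ``$1 - \prod(1-w_\ell)$'' mass bound onto the value sum. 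Carrying out this Abel-summation step carefully — peeling off $v_\ell/w_\ell$ as the non-increasing factor and $w_\ell\pi_\ell$ as the incremental coverage — yields exactly the factor $1 - \prod_{\ell}(1-w_\ell) \ge 1 - (1-1/k)^k$ when $\sum w_\ell \le 1$ and there are at most $k$ terms, which is the claimed bound. I expect this Abel-summation/rearrangement estimate to be the only non-routine part; the probabilistic bookkeeping for $u_\sender(\varphi)$ and the persuasiveness check are straightforward from feasibility in LPs~\eqref{eqn:f} and~\eqref{eqn:LP-submodular2}.
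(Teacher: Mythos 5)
Your proposal follows essentially the same route as the paper's proof: the same accounting of the arrival probabilities $\pi_\ell=\prod_{t<\ell}(1-z^*_{i_t})$ and of the per-iteration contribution $g_{i_\ell}(z^*_{i_\ell})$, the same reduction of $\bigl(\sum_\ell \pi_\ell v_\ell\bigr)/\bigl(\sum_\ell v_\ell\bigr)$ to the equal-density case using the non-increasing ordering of $g_{i_\ell}(z^*_{i_\ell})/z^*_{i_\ell}$ (the paper phrases this as repeated application of a weighted mediant inequality rather than Abel summation, but it is the same exchange argument), and the same telescoping identity plus symmetry/convexity to land on $1-(1-1/k)^k$.

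One imprecision should be repaired before this counts as a complete proof. The inequality you invoke at the very end, $1-\prod_\ell(1-w_\ell)\ge 1-(1-1/k)^k$ whenever $\sum_\ell w_\ell\le 1$, is false as stated: if all $w_\ell$ are small the left-hand side is close to $0$. What you actually need is the normalized statement $\bigl(1-\prod_\ell(1-w_\ell)\bigr)\big/\sum_\ell w_\ell \ge 1-(1-1/k)^k$ for $0<\sum_\ell w_\ell\le 1$, or, equivalently, one first observes (as the paper does in the proof of Lemma~\ref{lem:signalPersuasive}) that one may assume $\sum_{i\in S\cup\{n\}} z^*_i=1$ without loss of generality, since the mass on action $n$ can absorb any slack without hurting the objective or the receiver constraint. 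With that normalization in place your rearrangement-plus-telescoping argument goes through and matches the paper's.
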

\begin{proof}
Given the chosen set $S$ of actions, we consider these actions one-by-one in non-decreasing order of $g_i(z^*_i)/z^*_i$. ComputeSignal flips an independent coin for each action whether or not to recommend it. We perform several bounding steps to provide a lower bound on $u_\sender(\varphi)$. First, we assume that the final ``backup signal'' for action $n$ in the last line~\ref{line:finalSignal} has value 0 for $\sender$. We use $p_{\ell} = \prod_{\ell' = 1}^{\ell-1} (1-z^*_{\ell'})$ to denote the probability to arrive in iteration $\ell > 1$ in the for-loop. Conditioned on arriving in iteration $\ell$, the combined probability of action $i_{\ell}$ having state $j$ and issuing a recommendation is $q_{i_\ell,j} \cdot \frac{x^*_{i_\ell,j}}{q_{i_\ell,j}} = x^*_{i_\ell,j}$. Thus, conditioned on arriving in iteration $\ell$, the expected value for $\sender$ from this iteration is $\sum_{j=1}^m x^*_{i_\ell} \sv_{i_\ell,j} = g_{i_\ell}(z^*_{i_\ell})$. Overall,
\begin{equation}
\label{eqn:mediant}
\frac{u_\sender(\varphi)}{f(S)} \ge \frac{\D \sum_{\ell = 1}^k g_{i_\ell}(z^*_{i_\ell}) \cdot p_{\ell}}{\D \sum_{\ell=1}^k g_{i_\ell}(z^*_{i_\ell})} = \frac{\D \sum_{\ell = 1}^k u_{i_\ell} \cdot z^*_{i_\ell} \cdot p_{\ell}}{\D \sum_{\ell=1}^k u_{i_\ell} \cdot z^*_{i_\ell}}\enspace,
\end{equation}
where we use the notation $u_{i_\ell} = g_{i_\ell}(z^*_{i_\ell})/z^*_{i_\ell}$. Note that if $z = 0$, then $g_{i_\ell}(z) = 0$. More generally, if there is an action $i_\ell \in S$ with $g_{i_\ell}(z^*_{i_\ell}) = 0$, then we can drop it from consideration and consider the ratio with the $k-1$ remaining actions. Hence, we can assume that $u_{i_\ell} > 0$, for all $1 \le \ell \le k$. By scaling the terms, we obtain $u_{i_k} = 1$ without changing the ratio. Note that the last ratio in~\eqref{eqn:mediant} is a weighted mediant, where the terms $u_{i_\ell}$, $1 \le \ell \le k$, act as weights for the ratios
\begin{align*}
\frac{z^*_{i_1}}{z^*_{i_1}} > \frac{ z^*_{i_2} p_1}{z^*_{i_2}} > \ldots > \frac{z^*_{i_k} p_k}{z^*_{i_k}}\enspace.
\end{align*}
Repeated application of the generalized mediant inequality shows that when $u_{i_1} \ge \ldots \ge u_{i_k} = 1$, the ratio is minimized for $u_{i_1} = \ldots = u_{i_k} = 1$, i.e.,
\begin{align*}
\frac{u_\sender(\varphi)}{f(S)} \; &\ge \; \frac{\D \sum_{\ell = 1}^k u_{i_\ell} \cdot z^*_{i_\ell} \cdot p_{\ell}}{\D \sum_{\ell=1}^k u_{i_\ell} \cdot z^*_{i_\ell}}
 \; \ge \; \frac{\D \sum_{\ell = 1}^k z^*_{i_\ell} \cdot p_{\ell}}{\D \sum_{\ell = 1}^k z^*_{i_\ell}}  \; =  \; \sum_{\ell = 1}^k z^*_{i_\ell} \cdot p_{\ell}  \; =  \; 1 - \left(\sum_{i=1}^{k-1} z^*_{i_\ell}\right) \prod_{i=1}^{k-1} (1-z^*_{i_\ell})\\ &\ge  \; 1 - \frac{k-1}{k} \left(1 - \frac{1}{k}\right)^{k-1}  \; =  \; 1 - \left(1 - \frac{1}{k}\right)^k
\enspace.
\end{align*}
For the second line, observe that the last function in the first line is symmetric and convex in every variable $z^*_{i_\ell}$. As such, it has a global minimum at $z^*_{i_1}= \ldots = z^*_{i_k} = 1/k$.
\end{proof}

Combining the previous lemmas allows to bound the approximation ratio. We proceed to show persuasiveness of the scheme.

\begin{lemma}
  \label{lem:signalPersuasive}
  ComputeSignal returns a direct and persuasive signaling scheme for independent instances with $k$ signals.
\end{lemma}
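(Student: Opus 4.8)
The goal is to show that the scheme produced by \texttt{ComputeSignal} is direct and persuasive. Directness is immediate: each signal is tied to a single action $i_\ell$ (or to the dummy action $n$), so by construction the scheme only ever recommends a single specific action per signal. The real content is persuasiveness, i.e., upon receiving the signal for action $i_\ell$, the receiver's posterior expected utility for $i_\ell$ is at least her posterior expected utility for any other action $i'$.

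First I would compute the relevant posteriors. Fix the signal for action $i_\ell$. This signal is sent exactly when iterations $1,\dots,\ell-1$ all fail to issue a recommendation and iteration $\ell$ succeeds; the success at iteration $\ell$ happens when action $i_\ell$ realizes type $j$ (probability $q_{i_\ell,j}$) and the coin comes up heads (probability $x^*_{i_\ell,j}/q_{i_\ell,j}$). Because the instance is independent and the coins are independent, conditioned on this signal the posterior over the type of action $i_\ell$ assigns probability proportional to $x^*_{i_\ell,j}$ to type $j$. Hence the receiver's conditional expected utility from following the recommendation is $\bigl(\sum_j x^*_{i_\ell,j}\rv_{i_\ell,j}\bigr)/\bigl(\sum_j x^*_{i_\ell,j}\bigr)$, which by the second constraint of LP~\eqref{eqn:LP-submodular2} is at least $\rv_E$. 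For any other action $i' \in S \cup \{n\}$, $i' \ne i_\ell$: receiving the signal for $i_\ell$ is independent of the realized type of $i'$ (the decision process for iterations $1,\dots,\ell$ only inspects actions $i_1,\dots,i_\ell$, none of which is $i'$ when $i'$ comes later; and for $i'$ earlier, conditioning on "iteration for $i'$ did not fire" still leaves a well-defined posterior on $i'$'s type). I would handle both cases: for an action $i'$ that is inspected before $i_\ell$, its posterior is conditioned only on its coin having failed, which only \emph{decreases} the mass on types with high $x^*$-ratio and hence does not raise its conditional receiver utility above $\rv_E$ — actually the cleanest route is to note the receiver's conditional expected utility for \emph{any} action not recommended is at most its prior expectation $\le \rv_E \le$ (utility from following), using that high-$\rv$ types are precisely the ones more likely to have been "used up" earlier; for an action $i'$ inspected after $i_\ell$ or never inspected, the signal for $i_\ell$ is independent of $i'$'s type, so the conditional expectation equals the prior $\sum_j q_{i',j}\rv_{i',j} \le \rv_E$ by the choice of action $n$ as apriori receiver-optimal.

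Putting these together: conditioned on the signal for $i_\ell$, following gives the receiver at least $\rv_E$, whereas deviating to any other action gives at most $\rv_E$ (with the subtle but favorable direction for actions inspected earlier). For the dummy signal returned in line~\ref{line:finalSignal}, following recommends action $n$; conditioned on that signal, every action's posterior is shifted towards low-$\rv$ types, so again action $n$ (whose unconditioned expectation is already $\rv_E$) is at least as good as any alternative — here I'd argue the posterior on action $n$'s type is simply its prior (since action $n$'s type is never inspected), giving exactly $\rv_E$, while every other action's posterior is a downward tilt of its prior. Thus the receiver has no incentive to deviate from any signal, and the scheme is persuasive.

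\textbf{Main obstacle.} The delicate step is verifying that conditioning on "the signal for $i_\ell$ was sent" does not raise the receiver's conditional expected utility for an action $i'$ that is inspected \emph{before} $i_\ell$ above $\rv_E$. One must make precise that the event "$i'$'s iteration did not fire" correlates \emph{negatively} with $i'$ realizing a high-$\rv$ type, because $x^*_{i',j}$ is supported on a set of types that, taken together, have average $\rv$ at least $\rv_E$, so removing that mass tilts the posterior of $i'$ downward in receiver-value. Formalizing "downward tilt $\Rightarrow$ conditional expectation $\le \rv_E$" cleanly (rather than by a messy case analysis over type orderings) is where I expect the proof to need the most care; everything else is bookkeeping with Bayes' rule under independence.
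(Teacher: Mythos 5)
Your overall structure — directness is immediate, and persuasiveness reduces to showing that the receiver's posterior value for every action is at most her posterior value of the recommended action, which itself is at least $\rv_E$ by the second constraint of LP~\eqref{eqn:LP-submodular2} — matches the paper. But there are two genuine problems.

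First, your claim that ``action $n$'s type is never inspected'' is simply false. In \texttt{ComputeSignal}, the order $i_1, \ldots, i_k$ is over $S \cup \{n\}$, so $n$ appears as some $i_\ell$ and its type \emph{is} observed, its coin \emph{is} flipped. The signal for $n$ can therefore arise in two distinct ways: from its own for-loop iteration (coin heads, posterior tilted by $x^*_n$) or from the fallback in line~\ref{line:finalSignal} (all coins failed, posterior tilted the other way). The paper explicitly forms the appropriately weighted average of these two conditional expectations and shows it is still $\ge \rv_E$; your proposal skips this computation and asserts the wrong posterior.

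Second, for actions $i_{\ell'}$ inspected \emph{before} the recommended $i_\ell$, you flag the ``downward-tilt'' step as the main obstacle and propose a stochastic-dominance-style argument (high-$\rv$ types are more likely to have been used up). This is both more delicate than needed and not obviously correct without further structure: the second constraint of LP~\eqref{eqn:LP-submodular2} only bounds the $x^*$-\emph{weighted average} of $\rv$-values, not which individual types carry high $\rv$. The paper closes this gap with a clean law-of-total-expectation argument: the unconditioned value of $i_{\ell'}$ is at most $\rv_E$; the conditional value given ``coin for $i_{\ell'}$ came up heads'' is at least $\rv_E$ by the LP constraint; therefore the conditional value given ``coin for $i_{\ell'}$ did \emph{not} fire'' must be at most $\rv_E$. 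No monotonicity or dominance is needed. You should also note that the paper first argues, via parametric-LP sensitivity analysis, that one can take $\sum_j x^*_{ij} = z^*_i$ with equality; this is what makes the probability that iteration $\ell$ fires equal to $z^*_{i_\ell}$ and keeps the posterior computations clean — a step your sketch omits.
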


\begin{proof}
	Note that ComputeSignal solves LP~\eqref{eqn:f} to optimality. Hence, due to the first constraint of LP~\eqref{eqn:LP-submodular2} we have $\sum_{j=1}^m x_{ij} \le z_i$ for every $i \in S \cup \{n\}$. We first argue that we can w.l.o.g.\ assume that this constraint holds with equality.

	Every LP~\eqref{eqn:LP-submodular2} is a parametric linear program. Increasing scalar $z$ increases the right-hand side of the first packing constraint. It is easy to see that $g_i(0) = 0$. Standard sensitivity analysis for parametric linear programs implies that $g_i(z)$ is non-decreasing, piece-wise linear, and concave. Hence, an optimal assignment $\vecz^*$ in~\eqref{eqn:f} results from a waterfilling approach, where we raise the entries $z^*_i$ until they sum up to 1, while keeping a common slope for all functions $g_i$ for $i \in S \cup \{n\}$ (w.l.o.g.\ we assume that a breakpoint between linear segments in $g_i$ represents all intermediate slopes). For every $i \in [n-1]$, there exists at most one breakpoint $\hat{z}_i \in [0,1]$ such that the slope of $g_i(z)$ is 0 for all $\hat{z}_i \le z \le 1$. If no such breakpoint exists, we can set $\hat{z}_i = 1$. W.l.o.g.\ we assume $0 \le z_i^* \le \hat{z}_i$ and $z_n^* \ge 0$ such that $\sum_{i=1}^n z_i^* = 1$. Observe that for every $z_n \in [0,1]$ we can assume the first constraint in LP~\eqref{eqn:LP-submodular2} holds with tightness without violating the second constraint with $\rv_E$. As a consequence, we can assume w.l.o.g.\ for every $i \in [n]$ that in the optimal solution $\vecz^*$ of $\eqref{eqn:f}$ the first constraint of every LP~\eqref{eqn:LP-submodular2} is satisfied with equality $\sum_{j=1}^m x^*_{ij} = z_i^*$.

	Using this insight, we prove persuasiveness. In particular, for every choice of the set $S$ of actions with $S \subseteq [n-1]$ with $|S| = k-1$, we show that ComputeSignal computes a direct and persuasive signal.

	For each action $i \in S \cup \{n\}$ ComputeSignal observes the type realization and uses the optimal solution $\vecx^*$ for LP~\eqref{eqn:LP-submodular2} to flip an independent coin that yields the recommendation for action $i$. First, condition on the event that the scheme returns the signal for action $i_{\ell} \in S$ in the last for-loop. We again use $p_\ell = \prod_{\ell'=1}^{\ell-1} (1-z^*_{i_{\ell'}})$ to denote the probability that the scheme arrives in iteration $\ell$. Due to independent coin flips in the for-loop, the probability that the signal is sent in iteration $\ell$ is $\sum_{j=1}^m q_{i_{\ell},j} \cdot x^*_{i_{\ell},j}/q_{i_{\ell},j} = z^*_{i_{\ell}}$, where we assume the equality $z^*_i = \sum_{j=1}^m x^*_{ij}$ as observed above. A signal for action $i_{\ell} \neq n$ yields a conditional expected utility for $\receiver$ of
  \[
    \frac{1}{p_{\ell} \cdot z^*_{i_{\ell}}} \cdot p_{\ell} \cdot \sum_{j=1}^m q_{i_{\ell},j} \cdot (x^*_{i_{\ell},j}/q_{i_{\ell},j}) \cdot \rv_{i_{\ell},j} = \frac{1}{z^*_{i_{\ell}}} \sum_{j=1}^m x^*_{i_{\ell},j}  \rv_{i_{\ell},j} \ge \rv_E\enspace,
  \]
  where the inequality follows from the second constraint in~\eqref{eqn:LP-submodular2}.

  Now suppose ComputeSignal signals action $n$. First, suppose the signal results from the last line of the scheme. Then all coins in other iterations $\ell' \neq \ell$ with $i_{\ell'} \neq n$ have not come up heads, which has probability $p_{-\ell} =  \prod_{\ell'\neq \ell} (1-z^*_{i_{\ell'}})$. In addition, the signal in iteration $\ell$ with $i_{\ell} = n$ must not be sent. $\receiver$ obtains an expected utility of
  \[
    p_{-\ell} \cdot \sum_{j=1}^m q_{n,j} \cdot \left(1 - x^*_{nj}/q_{n,j}\right) \cdot \rv_{nj} = p_{-\ell} \cdot \left(\rv_E - \sum_{j=1}^m x^*_{nj}\rv_{nj}\right)\enspace.
  \]
  Second, assume the signal results from iteration $\ell$ of the for-loop, then the expected utility is
  \[
    p_{\ell} \cdot \sum_{j=1}^m q_{i_{\ell},j} \cdot (x^*_{i_{\ell},j}/q_{i_{\ell},j}) \cdot \rv_{i_{\ell},j} = p_\ell \sum_{j=1}^m x^*_{i_{\ell},j} \rv_{i_{\ell},j}\enspace.
  \]
  A signal for action $n$ yields a conditional expected utility for $\receiver$ of
  \begin{align*}
    \frac{p_{\ell} \D \sum_{j=1}^m x^*_{nj} \rv_{nj}
    + p_{-\ell} \left(\rv_E - \sum_{j=1}^m x^*_{nj} \rv_{nj}\right)}{p_{\ell}\cdot z^*_n + p_{-\ell}\cdot(1-z^*_n)} &= \frac{p_{-\ell}\cdot \rv_E + (p_{\ell}-p_{-\ell}) \D \sum_{j=1}^m x^*_{nj} \rv_{nj}}{p_{-\ell} +(p_{\ell} - p_{-\ell}) \cdot z_n^*}\\ &\ge \frac{\rv_E \cdot (p_{-\ell} + (p_{\ell}-p_{-\ell}) \cdot z_n^*)}{p_{-\ell} + (p_{\ell} - p_{-\ell})\cdot z_n^*} = \rv_E\enspace,
  \end{align*}
  where the inequality follows from the equality $z^*_i = \sum_{j=1}^m x^*_{ij}$ and the second constraint in~\eqref{eqn:LP-submodular2}.

  Hence, for every recommended action, the expected value for $\receiver$ is at least $\rv_E$. Thus, deviating to any action $i \not\in S \cup \{n\}$ is not profitable for $\receiver$, since the type of action $i$ is independent of the signal, and every action a priori has expected value at most $\rv_E$ for $\receiver$.

  We condition on the case that ComputeSignal sends a signal for action $i_\ell \neq n$ in the for-loop. The expected value of action $i_{\ell'}$ with $\ell' > \ell$ is at most $\rv_E$, since the type of action $i_{\ell'}$ has not been observed. For $\ell' < \ell$, the scheme decided not to send a signal using an independent coin flip in iteration $\ell'$. The overall value of action $i_{\ell'}$ for $\receiver$ is most $\rv_E$, the value of a signal is at least $\rv_E$, so a non-signal for action $i_{\ell'}$ has value at most $\rv_E$ for $\receiver$. Similar arguments show that conditioned on a signal for action $n$, every other action has expected value at most $\rv_E$. This proves that the resulting scheme is persuasive.
\end{proof}

In terms of running time, GreedyActions solves~\eqref{eqn:f} an $O(nk)$ number of times. ComputeSignal solves~\eqref{eqn:f} only once, and then computes at most $k-1$ independent coin flips. Clearly, both algorithms can be implemented to run in time polynomial in the representation of the input. This concludes the proof of Theorem~\ref{thm:independentGreedy}.


\subsection{Improved Approximation and Tightness}
\label{sec:FPTAS}

In this section, we improve the approximation ratio of the scheme from the previous section. It is easy to see that Lemma~\ref{lem:signalApx} is tight -- there are cases\footnote{Consider a set $S \cup \{n\}$ consisting of $k$ IID actions. Every action $i \in S \cup \{n\}$ has two possible types $\Theta^i = \{\theta_1,\theta_0\}$, where $(\rv(\theta_1),\sv(\theta_1)) = (1,1)$, $q_{\theta_1} = 1/k$, and $(\rv(\theta_0),\sv(\theta_0)) = (0,0)$. Observe that $f(S) = 1$. The best persuasive scheme recommends an action with type $\theta_1$ whenever there is one, which happens only with probability $1-(1-1/k)^k$.} in which the sender utility of any persuasive scheme for action set $S \cup \{n\}$ can indeed recover at most a fraction of $1-(1-1/k)^k$ of $f(S)$.

Instead, we replace the standard greedy algorithm for submodular maximization by a more elaborate procedure to carefully choose a subset of actions. In this section, we describe an FPTAS to compute, for every given constant $\varepsilon > 0$, a set $S \subseteq [n-1]$ of $k-1$ actions such that $f(S) \ge (1-\varepsilon) \cdot f(S^*)$ for the set $S^* \subseteq [n-1]$ of $k-1$ actions that maximizes $f$.

Our approach in the algorithm described below is to use a discretized version $\hat{f}$ of function $f$. In $\hat{f}(S)$ we restrict the possible values for $z_i$, for every action $i \in [n]$, to $z_i \in \{0, \tau, 2\tau, 3\tau,\ldots,1\}$, where $\tau = 1/\lceil k/\delta\rceil$. This restriction decreases the optimal value by at most a factor of $\delta$, i.e., $\hat{f}(S) \ge (1-\delta)f(S)$ for every subset $S \subseteq [n-1]$. We then construct a knapsack-style FPTAS to find, for any constant $\delta > 0$, a subset $S$ such that $\hat{f}(S) \ge (1-\delta) \hat{f}(S^*) \ge (1-\delta)^2 f(S^*)$ in polynomial time, where $S^* \subseteq [n-1]$ is the set of $k-1$ actions maximizing $f$. Using $\delta = \varepsilon/2$ then yields $\hat{f}(S) \ge (1-\varepsilon) f(S^*)$. By submodularity, $f(S^*) \ge \frac{k-1}{k} \cdot f(S^*_k)$, and, hence, $f(S^*) \ge \frac{k-1}{k} \cdot f(K) \ge \frac{k-1}{k} \cdot u_\sender(\varphi^*)$.

The following proposition summarizes the main insight from this section.
\begin{proposition}
  \label{prop:FPTAS}
  For every $k \ge 2$ and every constant $\varepsilon > 0$, there is a polynomial-time algorithm to compute a subset $S$ of $k-1$ actions such that
    \[  f(S) \ge (1-\varepsilon) \cdot \left(1-\frac 1k\right) \cdot u_\sender(\varphi^*)\enspace.  \]
\end{proposition}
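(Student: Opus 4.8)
The plan is to build a knapsack-style FPTAS for maximizing the discretized function $\hat f$ over all $(k-1)$-subsets of $[n-1]$, exploiting the fact that $f$ (and hence $\hat f$) is a separable sum of the concave one-dimensional functions $g_i$ evaluated at mass levels $z_i$. First I would formalize the discretization: in $\hat f(S)$ the mass assigned to each action is forced into the grid $\{0,\tau,2\tau,\dots,1\}$ with $\tau = 1/\lceil k/\delta\rceil$. Since the optimal assignment in $f(S)$ distributes a total mass of $1$ among at most $k$ actions, rounding each $z_i^*$ down to the nearest grid multiple loses total mass at most $k\tau \le \delta$, and because each $g_i$ is concave and non-decreasing with $g_i(0)=0$, a standard scaling argument (shrinking every $z_i^*$ by a factor $1-\delta$ and rounding) shows $\hat f(S) \ge (1-\delta) f(S)$ for every $S$; conversely $\hat f(S)\le f(S)$ trivially. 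This reduces the task to: find $S$ with $|S|=k-1$ and $\hat f(S) \ge (1-\delta)\hat f(S^*_{\hat f})$, where $S^*_{\hat f}$ maximizes $\hat f$, and then chain the inequalities $\hat f(S)\ge(1-\delta)\hat f(S^*_{\hat f}) \ge (1-\delta)\hat f(S^*) \ge (1-\delta)^2 f(S^*)$ and set $\delta=\varepsilon/2$.

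Next I would set up the dynamic program / FPTAS. Observe that choosing a $(k-1)$-subset $S$ together with a grid assignment of masses to $S\cup\{n\}$ summing to $1$ is exactly a two-constraint knapsack: one ``budget'' constraint is the cardinality (at most $k-1$ actions chosen from $[n-1]$, plus the mandatory action $n$), and the other is the total mass $\le 1$ on the $\tau$-grid (at most $1/\tau = \lceil k/\delta\rceil$ units). Each action $i$, if selected and given $t$ grid units of mass, contributes value $g_i(t\tau)$, which is computable in polynomial time by solving LP~\eqref{eqn:LP-submodular2}. I would first run a DP over items $i=1,\dots,n-1$ with state $(\text{number of actions chosen so far}, \text{grid units of mass used so far})$, storing the maximum achievable sender value; pre-assign to action $n$ whatever residual mass is optimal by trying all $1/\tau$ possibilities for $z_n$. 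This DP is exact but its running time depends on the magnitude of the values $g_i$; to get a genuine FPTAS independent of value magnitudes I would apply the usual value-rounding trick — scale and round the $g_i(t\tau)$ values to multiples of $\varepsilon' V/(n)$ where $V$ is an easily computed upper bound on $f(S^*)$ (e.g. $V = \sum_i \max_t g_i(t\tau)$ or the single best $g_i$), and run the DP on the rounded values with a state that also tracks accumulated rounded value. The number of distinct state coordinates is then polynomial in $n$, $k$, $1/\delta$, and $1/\varepsilon'$, giving polynomial running time, and the value-rounding costs another $(1-\varepsilon')$ factor which is absorbed into $\delta$ by choosing constants appropriately.

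Having obtained such a set $S$, the proposition follows by combining with submodularity of $f$ (proved in the earlier lemma): $f(S^*) \ge \frac{k-1}{k} f(S^*_k)$ where $S^*_k$ is the best $k$-subset, because dropping the least-valuable element of $S^*_k$ loses at most a $1/k$ fraction of its marginal contributions under a submodular non-decreasing function — more precisely one uses that for submodular non-decreasing $f$ with $f(\emptyset)=0$, any size-$k$ set has a size-$(k-1)$ subset of value at least $\frac{k-1}{k}$ times as large. Then $f(S) \ge \hat f(S) \ge (1-\varepsilon) f(S^*) \ge (1-\varepsilon)\frac{k-1}{k} f(S^*_k) \ge (1-\varepsilon)\frac{k-1}{k} f(K) \ge (1-\varepsilon)\left(1-\frac1k\right) u_\sender(\varphi^*)$, using $f(K) \ge u_\sender(\varphi^*_{K\cup\{n\}}) \ge u_\sender(\varphi^*)$ as established before Lemma~\ref{lem:greedyApx}.

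The main obstacle I anticipate is not the DP recurrence itself but making the FPTAS genuinely \emph{fully} polynomial: the function $g_i$ is only implicitly defined through an LP, so I must be careful that (a) evaluating $g_i$ on grid points is poly-time (it is, being a single LP), (b) the value-rounding is done against a valid and polynomially-sized upper bound so that the number of value-levels in the DP state is polynomial, and (c) the two-dimensional knapsack (cardinality $\times$ mass) does not blow up — here the key point is that the mass dimension has only $\lceil k/\delta\rceil + 1$ discrete levels, a constant for fixed $k$ and $\varepsilon$, so the extra dimension is harmless. A secondary subtlety is verifying the discretization bound $\hat f(S) \ge (1-\delta) f(S)$ uniformly over all $S$, including the rounding-down-then-rescale step and handling the action $n$ consistently; this is routine given concavity of the $g_i$ but must be stated carefully so that the chain of $(1-\delta)$ factors is airtight.
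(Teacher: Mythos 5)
Your proposal is correct but takes a genuinely different, and in fact simpler, route than the paper. The paper reformulates the problem as a $1.5$-dimensional knapsack: it guesses the marginal profit $m^*$ of the last ``particle'' assigned in the optimal discretized solution (enumerating $O(nk/\delta)$ candidate values of $m$), and for each candidate runs a dynamic program whose state tracks rounded profits $(\bar p^r,\bar p^o)$ and stores the minimum required mass. This is the classical ``round the values, track value in the state, store min weight'' knapsack FPTAS pattern, and it necessitates the auxiliary quantities $w_i^r,w_i^o,p_i^r,p_i^o$ and a separate lemma (Lemma~\ref{lem:h}) relating $h(m)$ to $\hat f(\hat S^*)$. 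You instead observe that once masses are discretized to the grid $\{0,\tau,\dots,1\}$ with $\tau = 1/\lceil k/\delta\rceil$, a direct DP over the state $(\text{index},\,\text{actions chosen},\,\text{grid mass used})$ already solves the problem \emph{exactly} in polynomial time, because the mass dimension has only $O(k/\delta)$ levels; this avoids guessing $m^*$ entirely and, by a back-of-the-envelope count, is also asymptotically faster than the paper's $O(n^2k^6/\varepsilon^3)$ bound. The discretization lemma (your rescale-then-round argument matches the paper's Lemma~\ref{lem:discrete}) and the final submodularity step ($f(S^*)\ge\tfrac{k-1}{k}f(S^*_k)\ge\tfrac{k-1}{k}u_\sender(\varphi^*)$) are identical to the paper's.

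One small correction to your reasoning, which does not affect correctness: you assert that your mass-grid DP ``is exact but its running time depends on the magnitude of the values $g_i$'' and therefore add a value-rounding pass. This is a misconception. The state in your DP is $(\text{items},\,\text{count},\,\text{mass})$ and the table \emph{stores} the maximum achievable value -- the value magnitudes never enter the state space, so the running time is already polynomial in $n$, $k$, and $1/\delta$ with no pseudo-polynomial blow-up. (Value-rounding is needed in the classical knapsack FPTAS precisely because there the \emph{weight} budget can be exponentially large; here the discretization has already collapsed the weight dimension to $O(k/\delta)$ levels.) Adding the value-rounding pass is harmless -- it just costs an extra $(1-\delta)$ factor, which you correctly absorb -- but it is superfluous, and dropping it makes your argument both shorter and cleaner than the paper's.
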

Combining the algorithm for selection of $S$ with ComputeSignal, we obtain a signaling scheme that we term the Improved Independent Scheme. Proposition~\ref{prop:FPTAS} together with Lemmas~\ref{lem:signalApx} and~\ref{lem:signalPersuasive} imply Theorem~\ref{thm:independentFPTAS}.

Let us now describe the algorithm and the guarantee in Proposition~\ref{prop:FPTAS} in more detail. We first apply a discretization, for which we need to solve LP~\eqref{eqn:LP-submodular2} a total of at most $O(nk/\varepsilon)$ times. The subsequent FPTAS procedure needs $O(n^2 k^6/\varepsilon^3)$ time which, arguably, seems rather high. Our goal here was to simplify the exposition and the analysis of the FPTAS as much as possible. It is an interesting direction for future work to improve the running time in terms of the dependence on $k$ and $\varepsilon$.

\paragraph{Discretization}
For approximating $f$, we consider approximating the function $\hat{f}$. The definition of $\hat{f}$ is the same as for $f$ in~\eqref{eqn:f}, where we add a discretization constraint that $z_i \in \{0,\tau,2\tau,3\tau,\ldots,\frac{\tau-1}{\tau},1\}$ with $\tau = 1/\lceil k/\delta \rceil$.

\begin{lemma}
    \label{lem:discrete}
    Consider the subset $S^* \subseteq [n-1]$ that maximizes $f(S^*)$. It holds that $\hat{f}(S^*) \ge (1-\delta)f(S^*)$.
\end{lemma}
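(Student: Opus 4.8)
The claim is that restricting each $z_i$ to the grid $\{0,\tau,2\tau,\ldots,1\}$ with $\tau = 1/\lceil k/\delta\rceil$ costs at most a factor $(1-\delta)$ in the value of $f$. The plan is to take an optimal assignment $\vecz^*$ for $f(S^*)$, which uses at most $k$ positive entries (one per action in $S^*\cup\{n\}$), and round each $z_i^*$ \emph{down} to the nearest grid point $\hat z_i = \tau\lfloor z_i^*/\tau\rfloor$. Rounding down keeps the packing constraint $\sum_i \hat z_i \le \sum_i z_i^* \le 1$ satisfied, so $\hat{\vecz}$ is feasible for $\hat f(S^*)$. It remains to bound the value loss. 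Since each $g_i$ is non-negative, non-decreasing, piecewise linear and concave (this was established in the proof of Lemma~\ref{lem:signalPersuasive}), and $g_i(0)=0$, concavity gives $g_i(\hat z_i) \ge \frac{\hat z_i}{z_i^*} g_i(z_i^*)$ whenever $z_i^* > 0$ (the chord from the origin lies below the graph). Because $z_i^* - \hat z_i < \tau$ and we only lose value on the actions with $z_i^* > \tau$ — on those $\hat z_i/z_i^* \ge 1 - \tau/z_i^* $, which is not directly uniform — I instead argue via the total mass.

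A cleaner route: rounding down frees up a total mass of $\sum_i (z_i^* - \hat z_i) < k\tau \le \delta$ (at most $k$ positive entries, each losing less than $\tau$, and $k\tau = k/\lceil k/\delta\rceil \le \delta$). Now redistribute this freed mass $\Delta := \sum_i(z_i^* - \hat z_i) \le \delta$ back, not on the grid but just to compare values: consider the non-discretized optimum restricted to total mass $1-\Delta$, call its value $f_{1-\Delta}(S^*)$. By concavity of the optimal-value-as-a-function-of-budget (the waterfilling value is concave in the total budget, again from concavity of the $g_i$), scaling the budget from $1$ down to $1-\Delta$ loses at most a factor $1-\Delta \ge 1-\delta$, i.e. $f_{1-\Delta}(S^*) \ge (1-\delta) f(S^*)$. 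Finally, $\hat{\vecz}$ is a \emph{feasible grid point} with total mass $\le 1$, and one checks $\hat f(S^*) \ge f_{1-\Delta}(S^*)$: indeed the grid assignment $\hat{\vecz}$ dominates, coordinate-wise, some assignment achieving $f_{1-\Delta}$ only if we are careful — so instead I directly lower-bound $\hat f(S^*) \ge \sum_i g_i(\hat z_i) \ge \sum_i \frac{\hat z_i}{z_i^*} g_i(z_i^*) $ and then show $\sum_i \frac{\hat z_i}{z_i^*} g_i(z_i^*) \ge (1-\delta)\sum_i g_i(z_i^*) = (1-\delta) f(S^*)$ using that the weighted average of the ratios $\hat z_i/z_i^*$ (with weights $g_i(z_i^*)$) is at least $\frac{\sum_i \hat z_i}{\sum_i z_i^*}$-ish — this last inequality needs the mediant-style bound, which is exactly where care is needed.

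The main obstacle is the last inequality: $\hat z_i / z_i^* $ can be as small as $0$ for an action with $0 < z_i^* < \tau$, so a naive per-coordinate bound fails. The fix is to observe that such an action contributes value $g_i(z_i^*) \le (\text{slope at }0)\cdot z_i^* < (\text{slope})\cdot\tau$, and there are at most $k$ of them, so the \emph{total} value sitting on sub-$\tau$ pieces plus the total value lost by rounding the larger entries is at most (common waterfilling slope)$\times \Delta \le$ (common slope)$\times \delta$; meanwhile $f(S^*)\ge$ (common slope)$\times 1$ is false in general, but one can lower bound $f(S^*)$ by the value on any single full unit of the best slopes. The robust argument is the budget-scaling one: define $h(t) = \max\{\sum_i g_i(z_i): \sum z_i \le t, z_i\ge 0\}$ for $t\in[0,1]$; then $h$ is concave and non-decreasing with $h(0)=0$, hence $h(1-\Delta)\ge(1-\Delta)h(1) \ge (1-\delta)f(S^*)$; and the down-rounded grid point $\hat{\vecz}$ has $\sum_i\hat z_i = 1-\Delta$ and, being a feasible point of the discretized program, $\hat f(S^*)\ge \sum_i g_i(\hat z_i)$, which need not equal $h(1-\Delta)$ — so one picks instead the optimal continuous assignment $\vecz'$ for budget $1-\Delta$ and rounds \emph{it} down; since $\sum \lfloor z_i'/\tau\rfloor\tau \le 1-\Delta \le 1$, this is grid-feasible, and now comparing $\sum g_i(\lfloor z_i'/\tau\rfloor\tau)$ to $\sum g_i(z_i') = h(1-\Delta)$ only moves mass within the $\le k$ coordinates but may still lose value. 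At that point I would simply iterate: round down, this frees $<k\tau$ mass; but we have already conceded that slack, so accept $\hat f(S^*) \ge h(1-\Delta) - (\text{loss from rounding }\vecz')$. The honest clean statement is that rounding down from budget $1$ directly — $\hat z_i = \tau\lfloor z_i^*/\tau\rfloor$ — gives $\sum_i g_i(\hat z_i) \ge \sum_i g_i(z_i^*) - (\text{common slope }\lambda)\cdot\sum_i(z_i^*-\hat z_i) \ge f(S^*) - \lambda k\tau \ge f(S^*) - \lambda\delta$, using concavity (each $g_i$ decreases by at most $\lambda\cdot(z_i^*-\hat z_i)$ since $\lambda$ upper-bounds every local slope of $g_i$ at $z_i^*$), together with $f(S^*) \ge \lambda$ when the optimal budget-$1$ waterfilling has positive common slope $\lambda$ (because $f(S^*) = \int_0^1 (\text{slope})\,dt \ge \lambda\cdot 1$ as the marginal slope is non-increasing and ends at $\lambda$) — hence $\sum_i g_i(\hat z_i) \ge (1-\delta) f(S^*)$, and since $\hat{\vecz}$ is grid-feasible, $\hat f(S^*)\ge (1-\delta)f(S^*)$. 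The edge case $\lambda = 0$ means the budget constraint is slack and discretization is free. This marginal-value/waterfilling accounting is the one delicate point; everything else is routine.
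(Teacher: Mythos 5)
Your proposal circles the right objects (the concavity of the $g_i$, the budget constraint, the $k\tau\le\delta$ accounting) but the final argument you settle on — round each $z_i^*$ \emph{down} to the grid and bound the loss by $\lambda\cdot\sum_i(z_i^*-\hat z_i)$ — has the inequality in the wrong direction. For a concave function $g_i$ with $g_i(0)=0$, the slopes are non-increasing in $z$, so on the interval $[\hat z_i, z_i^*]$ the slope is \emph{at least} the marginal slope $\lambda$ at $z_i^*$, not at most. Hence the loss $g_i(z_i^*)-g_i(\hat z_i)$ is bounded \emph{below} by $\lambda(z_i^*-\hat z_i)$, and can in fact be arbitrarily larger: an action with $0<z_i^*<\tau$ gets rounded to $\hat z_i=0$ and its entire contribution $g_i(z_i^*)$ vanishes, which is $\ge\lambda z_i^*$ but could dominate if $g_i$ is steep near $0$. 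A concrete instance where the waterfilling optimum assigns a small but very profitable mass to one action and a flat mass to another already breaks the bound. So rounding down does not give a $(1-\delta)$-approximation.

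The paper avoids this by combining scaling and rounding in the opposite order: first shrink to $z'_i=(1-\delta)z_i^*$, where concavity plus $g_i(0)=0$ gives the clean per-coordinate bound $g_i(z'_i)\ge(1-\delta)g_i(z_i^*)$ regardless of any slope at $z_i^*$; this leaves slack $\delta$ in the budget. Then round \emph{up}: $\hat z_i=\tau\lceil z'_i/\tau\rceil$. Rounding up only increases each $g_i(\hat z_i)$, and the budget stays feasible because the total increase is at most $k\tau\le\delta$, absorbed by the slack created in the first step. This sidesteps exactly the small-$z_i^*$ and steep-slope pathology that defeats your argument. Your intermediate budget-scaling idea ($h(t)$ concave, $h(1-\Delta)\ge(1-\Delta)h(1)$) is itself sound, and is essentially the same chord-below-graph fact the paper uses; the piece you are missing is to apply that scaling \emph{first} and then use the slack to round up rather than trying to control the damage of rounding down.
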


\begin{proof}
    Since \eqref{eqn:f} is a packing problem, we can assume w.l.o.g.\ that $|S^*| = k-1$. We denote by $\vecz^*$ the optimal solution for $f(S^*)$ in \eqref{eqn:f}. For $z'_i = (1-\delta)z^*_i$, concavity and monotonicity of $g_i$ implies $g_i(z'_i) \ge (1-\delta) g_i(z^*_i)$ for every $i \in S^* \cup \{n\}$. Observe that $\sum_{i \in S^* \cup \{n\}} z'_i \le (1-\delta)$ since $\vecz^*$ is a feasible solution. We round $z'_i$ up to the next multiple of $\tau$, i.e., $\hat{z}_i = \tau \cdot \lceil \frac{z'_i}{\tau} \rceil$. Then
    \[
    \sum_{i \in S^* \cup \{n\}} \hat{z}_i \le \sum_{i \in S^* \cup \{n\}} z'_i + \tau \le (1-\delta) + k \cdot \frac{1}{\lceil k/\delta \rceil} \le 1\enspace.
    \]
    Now $\bm{\hat{z}}$ is a feasible solution for the optimization problem of $\hat{f}(S)$, so
    \[
        \hat{f}(S) \ge \sum_{i \in S^* \cup \{n\}} g_i(\hat{z}_i) \ge \sum_{i \in S^* \cup \{n\}} g_i(z'_i) \ge (1-\delta) \sum_{i \in S^* \cup \{n\}} g_i(z^*_i) = (1-\delta) f(S^*)\enspace.
    \]
\end{proof}

We rephrase the optimization problem of $\hat{f}(S)$ as having $1/\tau$ many \emph{particles} that can be assigned to the actions $S \cup \{n\}$. The $\ell$-th particle assigned to action $i$ has marginal profit $m_i^{\ell} = g_i(\ell \tau) - g_i((\ell-1) \tau)$. For every action $i$, the marginal profit of the $\ell$-th assigned particle is $m_i^{\ell} \ge 0$ and $m_i^{\ell+1} \le m_i^{\ell}$, for all $\ell \ge 1$. Clearly, the optimal solution for $\hat{f}(S)$ can be computed by a simple greedy algorithm: Assign the $1/\tau$ particles to actions $S \cup \{n\}$ in non-increasing order of marginal profit. Consider the set $\hat{S}^*$ that optimizes $\hat{f}(S)$ over all subsets $S$ of size at most $k-1$. Let $m^*$ be the profit of the last particle assigned by the greedy algorithm to any action in $\hat{S}^* \cup \{n\}$.

Our main idea in the FPTAS is to guess $m^*$. Put differently, we run the algorithm discussed in the following for all marginal profits from all particles of all functions $g_i$, $i \in [n]$. Since only $1/\tau$ particles must be considered for any action, we have at most $n/\tau = O(nk/\delta)$ calls to the algorithm. For the rest of this section, we outline our approach for a given marginal profit value $m$.

Given a value $m$, consider an action $i$. We denote by $\ell_i(m)$ the largest number of a particle with marginal profit strictly larger than $m$. Suppose that $i \in \hat{S}^*$ and $m = m^*$. Then in $\hat{f}(\hat{S}^*)$ we will assign at least $z_i \ge \tau \ell_i(m)$ to action $i$. With foresight, we use the notation $w_i^r(m) = \tau \cdot \ell_i(m)$ and $p_i^r(m) = g_i(\tau \ell_i(m))$. Suppose $i$ has particles $\ell_i(m)+1, \ell_i(m)+2,\ldots,\ell_i(m) + t_i(m)$ with marginal profit $m$, then $\hat{f}$ assigns a mass of $z_i \in [\tau \ell_i(m), \tau(\ell_i(m)+t_i(m))]$. We use the notation $w_i^o(m) = \tau \cdot t_i(m)$ and $p_i^o(m) = m \cdot \tau \cdot t_i(m) = m \cdot w_i^o(m)$. Otherwise, if $g_i$ has no particle with marginal profit $m$, then $z_i = \tau \ell_i(m)$, and we set $w_i^o(m) = p_i^o(m) = 0$. With this notation, we can express $\hat{f}(\hat{S}^*)$ by
\begin{align*}
\hat{f}(\hat{S}^*) &= \sum_{i \in \hat{S}^* \cup \{n\}} g_i(\tau \ell_i(m^*)) + m^* \cdot \sum_{i \in \hat{S}^* \cup \{n\}} (z_i - \tau \ell_i(m^*)) \\
&= \sum_{i \in \hat{S}^* \cup \{n\}} p_i^r(m^*) + m^* \cdot \left(1 - \sum_{i \in \hat{S}^* \cup \{n\}} w_i^r(m^*)\right)\enspace.
\end{align*}
$\hat{f}$ distributes a total of $1/\tau$ particles to $\hat{S}^*$ such that for all functions $g_i$, $i \in \hat{S}^* \cup \{n\}$ we exhaust all particles from these functions with marginal profit strictly larger than $m^*$. The remaining particles achieve a marginal profit of exactly $m^*$. This implies, in particular, that
\begin{equation}
    \label{eqn:optSlopeIneq}
    0\; \le \; 1 - \sum_{i \in \hat{S}^* \cup \{n\}} w_i^r(m^*) \; \le \; \sum_{i \in \hat{S}^* \cup \{n\}} w_i^o(m^*)\enspace.
\end{equation}

\paragraph{Knapsack Problem}
Consider the following integer optimization problem for a given marginal profit value $m$. We strive to find a subset $S$ of at most $k-1$ actions such that $1/\tau$ particles can be assigned with a marginal profit of at least $m$ from actions $i \in S \cup \{n\}$ to maximize the resulting total profit.
\begin{equation}\label{eqn:knapsack}
\begin{array}{llrcll}
\renewcommand{\arraystretch}{0.5}
h(m) = &\mbox{Max.} & \multicolumn{4}{l}{\D \sum_{i = 1}^n y_i p_i^r(m) + \min\left(m  - m\sum_{i=1}^n y_i w_i^r(m), \sum_{i=1}^n y_i p_i^o(m)\right)}\\
& \mbox{s.t.} & \D \sum_{i=1}^n y_i w_i^r(m) &\le& 1 \\
& & \D \sum_{i=1}^{n-1} y_i & \le & k-1 \\
& & y_n & = &1\\
& & y_i & \in & \{0,1\}
\end{array}
\end{equation}
For given $m$, we denote the optimal solution for $h(m)$ by $\vecy^*$ and the action set optimizing $h(m)$ by $S^*_m = \{ i \mid y^*_i = 1, i \neq n\}$.
\begin{lemma}
\label{lem:h}
  For every marginal profit $m$, the following holds:
  \begin{enumerate}[\hspace{0.3cm}(a)]
  \item If $h(m)$ is feasible, then $h(m) \le \hat{f}(\hat{S}^*)$.
  \item If $m = m^*$, then $h(m^*)$ is feasible and $h(m^*) = \hat{f}(\hat{S}^*)$.
  \item If $h(m)$ is infeasible, then $m \neq m^*$.
  \end{enumerate}
\end{lemma}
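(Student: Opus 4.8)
The plan is to establish each of the three parts by comparing the knapsack problem $h(m)$ with the discretized particle-assignment problem defining $\hat f$. The key observation is that the objective of $h(m)$ is designed so that it mimics exactly the way $\hat f(\hat S^*)$ is decomposed above: a ``rigid'' part $\sum_i y_i p_i^r(m)$ that collects all particles with marginal profit strictly larger than $m$, plus an ``optional'' part that fills the remaining capacity $1 - \sum_i y_i w_i^r(m)$ with particles of marginal profit exactly $m$, capped at the available mass $\sum_i y_i p_i^o(m)$. I would first record the easy monotonicity facts: for any feasible $\vecy$ of \eqref{eqn:knapsack}, the set $S=\{i\neq n : y_i=1\}$ has size at most $k-1$, and the quantities $w_i^r(m), w_i^o(m), p_i^r(m), p_i^o(m)$ are exactly the mass/profit contributions one gets by assigning to action $i$ first all its particles above $m$ and then (up to) all its particles equal to $m$.

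For part (a), given a feasible $\vecy^*$ for $h(m)$ with set $S^*_m$, I would exhibit a feasible particle assignment for $\hat f(S^*_m)$ (hence a lower bound on $\hat f(\hat S^*)$ via optimality of $\hat S^*$ over sets of size $\le k-1$) whose value is at least $h(m)$. Concretely: assign to each $i \in S^*_m \cup \{n\}$ all $\ell_i(m)$ particles with marginal profit $> m$ — this uses total mass $\sum_i w_i^r(m) \le 1$ by the first constraint — and then distribute the remaining mass $1 - \sum_i w_i^r(m)$ greedily among the particles of marginal profit exactly $m$ across these actions. The resulting profit is $\sum_i p_i^r(m) + m\cdot(\text{remaining mass actually usable})$, and the ``usable'' remaining mass is $\min(1 - \sum_i w_i^r(m), \sum_i w_i^o(m))$, so the profit equals exactly the objective $h(m)$ was maximizing. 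Hence $\hat f(S^*_m) \ge h(m)$, and since $|S^*_m|\le k-1$ we get $\hat f(\hat S^*) \ge \hat f(S^*_m) \ge h(m)$.

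For part (b), I would set $\vecy$ to be the indicator of $\hat S^* \cup \{n\}$ and check it is feasible for $h(m^*)$: the cardinality and $y_n=1$ constraints are immediate, and $\sum_i w_i^r(m^*) \le 1$ is precisely the left inequality in \eqref{eqn:optSlopeIneq}. For this $\vecy$ the objective of $h(m^*)$ evaluates to $\sum_{i\in \hat S^*\cup\{n\}} p_i^r(m^*) + \min\big(m^* - m^*\sum_i w_i^r(m^*),\ \sum_i p_i^o(m^*)\big)$; by the right inequality in \eqref{eqn:optSlopeIneq}, $1 - \sum_i w_i^r(m^*) \le \sum_i w_i^o(m^*)$, and since $p_i^o(m^*) = m^* w_i^o(m^*)$, the minimum is attained by the first term and equals $m^*(1 - \sum_i w_i^r(m^*))$. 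That is exactly the two-line expression for $\hat f(\hat S^*)$ displayed before \eqref{eqn:optSlopeIneq}, so $h(m^*) \ge \hat f(\hat S^*)$; combined with part (a) this gives $h(m^*) = \hat f(\hat S^*)$ and feasibility. Part (c) is the contrapositive of (the feasibility half of) part (b): if $h(m)$ is infeasible then $m$ cannot equal $m^*$, since $h(m^*)$ is feasible.

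The main obstacle I anticipate is purely bookkeeping rather than conceptual: one must be careful that the ``greedy fill with marginal-profit-$m$ particles'' step in part (a) is genuinely feasible for the $\hat f$ optimization problem (all $z_i$ remain multiples of $\tau$ and the $g_i$ values add up correctly because each $g_i$ is piecewise linear with these exact breakpoints), and that the $\min$ in the objective correctly captures the two regimes — capacity-limited versus supply-limited — in the remaining-mass allocation. Handling degenerate cases (actions with $w_i^o(m)=p_i^o(m)=0$, or $\sum_i w_i^r(m) = 1$ so no optional mass remains) should be remarked on but is routine.
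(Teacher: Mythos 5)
Your proposal is correct and follows essentially the same route as the paper's own proof: part (a) by showing that any feasible solution of $h(m)$ induces a feasible particle assignment for $\hat f(S^*_m)$ of equal value (so $h(m)\le \hat f(S^*_m)\le\hat f(\hat S^*)$), part (b) by plugging the indicator of $\hat S^*\cup\{n\}$ into $h(m^*)$ and using the two inequalities in~\eqref{eqn:optSlopeIneq} to verify feasibility and that the objective simplifies to the displayed expression for $\hat f(\hat S^*)$, and part (c) as the contrapositive of the feasibility claim in (b). The only cosmetic difference is that you make the capacity-limited/supply-limited case split explicit via the constructed assignment, whereas the paper phrases it as ``$h$ undercounts $\hat f(S)$ when the supply of profit-$\ge m$ particles is exhausted''; these are the same observation.
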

\begin{proof}
In $h$ we sum the value from each action $i \in S^*_s$ for the required assignment of particles to arrive at marginal profit $m$, and then use the remaining particles to generate additional value at a rate of $m$. Consider any marginal profit $m$ and a feasible solution $\vecy$ for $h(m)$ with action set $S = \{i \mid y_i = 1, i\neq n\}$. If $S$ satisfies \eqref{eqn:optSlopeIneq}, then $h(m) = \hat{f}(S)$, since $h$ correctly captures the greedy algorithm to assign particles to $g_i$ in non-increasing order of marginal profit. However, there might be values $m$ and solutions $\vecy$, such that for the corresponding set $S \cup \{n\}$ of actions it is impossible to find a total of $1/\tau$ particles with marginal profit at least $m$. Clearly, if this happens, then
\[
   \sum_{i \in S \cup \{n\}} w_i^r(m) + \sum_{i \in S \cup \{n\}} w_i^o(m) < 1
\]
This implies, in particular, that either $m \neq m^*$ or $S \neq \hat{S}^*$, since otherwise we would violate~\eqref{eqn:optSlopeIneq}. Moreover, in $h$ the set $S$ only yields a value of
\begin{align*}
   &\sum_{i \in S \cup \{n\}} p_i^r(m) + \min\left(m - m\sum_{i=1}^n w_i^r(m), \sum_{i \in S \cup \{n\}} p_i^o(m)\right) = \sum_{i \in S \cup \{n\}} p_i^r(m) + \sum_{i \in S \cup \{n\}} p_i^o(m)\enspace,
\end{align*}
i.e., it only sums up the value generated by particles with marginal profit at least $m$. In contrast, in $\hat{f}(S)$ we would continue the greedy algorithm and assign particles beyond the ones with marginal profit at least $m$. This holds in particular for $S = S^*_m$, so $h(m) \le \hat{f}(S^*_m)$. Since $\hat{f}(S^*_m) \le \hat{f}(\hat{S}^*)$, this proves (b).

It is straightforward to verify that for $m^*$ and the optimal set $\hat{S}^*$ the conditions in~\eqref{eqn:optSlopeIneq} guarantee that $h(m^*)$ is feasible. Moreover, \eqref{eqn:optSlopeIneq} implies that in the objective function
\begin{align*}
  \min\left(m^* - m^* \sum_{i \in \hat{S}^* \cup \{n\}} w_i^r(m^*), \sum_{i \in \hat{S}^* \cup \{n\}} p_i^o(m^*) \right) = m^* \left(1 - \sum_{i \in \hat{S}^* \cup \{n\}} w_i^r(m^*)\right)\enspace.
\end{align*}
This implies that $h(m^*) = f(\hat{S}^*)$, and (a) follows.

If $h(m)$ is infeasible, then for every subset $S \subseteq [n-1]$ with $|S| \le k-1$ actions we have
\[ \sum_{i \in S \cup \{n\}} w_i^r(m) > 1\enspace.\]
Then $m \neq m^*$ since~\eqref{eqn:optSlopeIneq} is violated. This proves (c).
\end{proof}

\paragraph{Dynamic Program}
As a consequence of Lemma~\ref{lem:h}, in order to compute an approximation to $\hat{f}(\hat{S}^*)$ we focus on approximating $h(m)$ in~\eqref{eqn:knapsack} for every given value $m$. For convenience, we use a knapsack terminology. There is an \emph{required item} for action $i$ with size $w_i^r(m)$ and profit $p_i^r(m)$. In addition, there is an \emph{optional item} with size $w_i^o(m)$ and profit $p_i^o(m)$. The constraints in~\eqref{eqn:knapsack} (with the exception of the trivial constraint $y_n = 1$) exactly represent the constraint set of the 1.5-dimensional knapsack problem~\cite[Section 9.7]{KellererPP13}.

The objective function can be interpreted as follows. Upon packing a required item of action $i$ into the knapsack, we also allow to fill the remaining space in the knapsack with (any fraction of) the optional item of $i$. Note that all optional items correspond to particles with marginal profit $m$. Optional items can be removed to free space for required items of other actions. Since required items correspond to particles with marginal profit larger than $m$, they generate more value per unit of size they occupy in the knapsack. Hence, adding required items (as long as the constraint set allows it) and removing (parts of) optional ones is always desirable.

For every given $m$, we now describe an FPTAS to approximate the optimal solution of~\eqref{eqn:knapsack} by $(1-\delta)$ in polynomial time, for every constant $\delta > 0$. The approach resembles the standard dynamic programming approach for the knapsack problem. We assume w.l.o.g.\ that all required items fit into the knapsack, i.e., $w_i^r(m) \le 1$ for all $i \in [n-1]$, since otherwise we can drop the action from consideration.

Consider $p_{max}(m) = \max \{ p_i^r(m), \min(m, p_i^o(m)) \mid i \in [n]\}$, and assume $\kappa = \frac{\delta\cdot p_{max}(m)}{2k}$. We consider the adjusted profits $\bar{p}_i^r = \lfloor p_i^r(m) / \kappa \rfloor$ and $\bar{p}_i^o = \lfloor p_i^o(m) / \kappa \rfloor$. Our dynamic programming table is given by $A(i,j,\bar{p}^r,\bar{p}^o)$ with the interpretation that for this entry we consider a subset of solutions of the following form: (1) the packed required items are from actions $\{1,\ldots,i,n\}$, (2) we pack the required items of action $n$ and exactly $j$ of the remaining actions, (3) the packed required items have a total adjusted profit of $\bar{p}^r$, (4) the adjusted profit of optional items corresponding to packed required items sums to $\bar{p}^o$. For each entry $A(i,j,\bar{p}^r,\bar{p}^o)$ we store the minimum total size of required items of any solution that fulfills the conditions of this entry. The number of possible table entries is $O(n \cdot k^5/ \delta^2)$, which is a polynomial number in $n$ and $k$. We initialize all entries with $\infty$. Then the base cases of the recursion are
\begin{align*}
  &A(0,0,\bar{p}_n^r, \bar{p}_n^o) = w_n^r, \hspace{0.5cm} \text{ and }\\
  &A(0,0,x,y) = \infty \quad \text{ for every } x,y \in \{0,1,\ldots,k \cdot \lfloor k/\delta \rfloor\}, (x,y) \neq (\bar{p}_n^r, \bar{p}_n^o).
\end{align*}
We fill the table in increasing order of the parameters by setting
\[
    A(i,j,\bar{p}^r, \bar{p}^o) = \min\{ A(i-1,j,\bar{p}^r, \bar{p}^o), w_i^r + A(i-1,j-1,\bar{p}^r - \bar{p}_i^r, \bar{p}^o - \bar{p}_i^o)\}\enspace,
\]
where we assume the entry is $\infty$ whenever the arguments become negative. Clearly, this recursion allows to fill the table in time linear in the size of the table. As in the standard knapsack problem, the recursion simply distinguishes between packing the required item of action $i$ into the knapsack or not.

The rationale behind this approach is as follows. Consider the set of solutions represented by $A(i,j,\bar{p}^r,\bar{p}^o)$. Clearly when we have $\bar{p}^r$ adjusted profit from packed required items and a potential adjusted profit of $\bar{p}^o$ from optional items, the best solution is one that minimizes the size of packed required items to allow for a maximum portion of optional items to be included into the knapsack.

After completing the table, we consider all entries with $A(i,j,\bar{p}^r, \bar{p}^o) \le 1$, since these entries correspond to a feasible solution. From each of these entries, we pick the one that maximizes the adjusted profit $\kappa \cdot \bar{p}^r + \min(m - m\cdot A(i,j,\bar{p}^r, \bar{p}^o), \kappa \cdot \bar{p}^o)$.

\paragraph{Approximation Ratio}
Consider the adjusted profit of the optimal solution $S^*_m$, which is
\begin{align*}
\sum_{i \in S^*_m \cup \{n\}} \kappa \bar{p}_i^r +  \min\left(m - m\cdot \sum_{i \in S^*_m \cup \{n\}} w_i^r, \sum_{i \in S^*_m \cup \{n\}} \kappa \bar{p}_i^o\right) \; \ge \;  h(m) -2k \kappa \; = \; h(m) - \delta p_{max}
\end{align*}
If $p_{max}$ is attained for a profit of a required item $p_i^r$, then consider packing only the required item $i$. This is a feasible solution since $w_i^r \le 1$. Otherwise, suppose $p_{max}$ is attained for an entry $\min(m,p_i^o(m))$. We use $\min(m,p_i^o(m))$ in the definition of $p_{max}$, since the optional item is not assumed to fit into the knapsack completely, and $m \cdot 1$ is the profit of a knapsack filled completely with any set of (parts of) optional items. Now suppose we pack only the optional item of $i$ (or parts of it until the knapsack is full). Then pack the required item of $i$, thereby possibly replacing parts of the optional item. This is a feasible solution since $w_i^r(m) \le 1$. The replacement increases the profit over $\min(m,p_i^o(m))$. Overall, these observations imply $h(m) \ge p_{max}$.

The dynamic program computes a solution $S'$ with the best adjusted profit. The profit of $S'$ is more than the adjusted profit, which is more than the adjusted profit of $S^*_m$, which is more than $h(m) - \delta p_{max}$. Since $h(m) \ge p_{\max}$, the profit of $S'$ is at least $(1-\delta) \cdot h(m)$.

Since we run the dynamic program for all marginal profits of particles, the best solution $S$ that is found overall has value $f(S) \ge \hat{f}(S) \ge (1-\delta) h(m^*) = (1-\delta) \hat{f}(\hat{S}^*) \ge (1-\delta)^2 f(S^*) = (1-\varepsilon) f(S^*)$ due to Lemmas~\ref{lem:discrete} and~\ref{lem:h}.

\subsection{Beyond $\rv_E$-Optimality}

Let us briefly observe that our approach does not easily translate to independent instances without $\rv_E$-optimality. Consider the following example. There are $n = 2$ actions and $k = 2$ signals. Action 1 has deterministic type $\Theta_1 = \{\theta_{11}\}$ with $(\sv_{11},\rv_{11}) = (1,0)$. Action 2 has types $\Theta_2 = \{\theta_{21}, \theta_{22}\}$ with $(\sv_{21},\rv_{21}) = (0,1)$ and $(\sv_{22},\rv_{22}) = (0,0)$, and $q_{21} = q_{22} = 1/2$. Note $\rv_E = 1/2$ for action 2.

The optimal scheme $\varphi^*$ recommends action 1 in state $(\theta_{11},\theta_{22})$ and action 2 in state $(\theta_{11},\theta_{21})$. In the former case, $\calR$ has conditional expectation of 0 for each of the actions, so action 1 is a best response. In the latter case, the recommended action is optimal for $\calR$. The expected utility for $\calS$ in $\varphi^*$ is 1/2.

Instead, suppose we solve LP~\eqref{eqn:f}. Since the constraints in~\eqref{eqn:LP-submodular2} require a conditional expectation of $\rv_E$ for every signal, the optimal solution is $x_{21}^* = x_{22}^* = 1/2$, and thus $g_1(z^*) = g_2(z^*) = 0$. Hence, the optimal value of the LP is 0. Clearly, the optimal scheme does not give rise to a feasible LP solution, and the optimal LP-value does not upper bound the expected utility of $\varphi^*$ for $\calS$.

More fundamentally, any positive value for $\calS$ results from $\calR$ taking action 1, which in turn must be inherently correlated with the state of action 2. This correlation is not sufficiently reflected in the LP or the algorithms above, which exploit independence conditions. Obtaining a constant-factor approximation for general independent instances in polynomial time is an interesting open problem.

\section{Approximation by Restricted Signals}
\label{sec:apxNbyK}

Let $\opt_k$ denote the expected sender utility of the optimal scheme with $k$ signals. We quantify the performance loss against a case when the sender has (at least) $n$ signals available and achieves $\opt_n$.

\subsection{Symmetric Instances}

We define the Imitation Scheme $\varphi_{Imi}$ for any symmetric instance with $n$ actions and $k$ signals. It first runs an optimal symmetric scheme $\phi^*_n$ for $n$ signals. Let $i$ be the action chosen by $\phi^*_n$. If $i \in [k]$, we signal action $i$; otherwise, we signal any action chosen uniformly at random from $[k]$.

The running time of $\varphi_{Imi}$ is determined by the running time to implement an optimal symmetric scheme for $n$ signals. In particular, such a scheme can be computed with the Slope-Algorithm, so an efficient probability oracle is sufficient for polynomial running time of $\varphi_{Imi}$. We now show that $\varphi_{Imi}$ provides a tight approximation ratio in terms of $\opt_n$.

\begin{proposition}
	The Imitation Scheme is symmetric, direct, and persuasive in symmetric instances. For every $k\ge 2$ it holds $u_{\sender}(\phi_{Imi}) \ge k/n\cdot \opt_n$. There exists a random-order instance such that $\opt_k \le k/n \cdot \opt_n$.
\end{proposition}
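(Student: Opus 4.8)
The proposition has three parts, which I would address in sequence. First, to see that $\varphi_{Imi}$ is symmetric, direct, and persuasive: directness is immediate since each signal is used to recommend a single action; symmetry follows because $\phi^*_n$ is symmetric and the ``redistribution'' of actions in $[n]\setminus[k]$ uniformly over $[k]$ preserves the exchangeability of the recommended actions and the conditional type distributions. For persuasiveness, I would invoke Lemma~\ref{lem:symPersuasive}: it suffices to check $u_\receiver(\varphi_{Imi}) \ge \rv_E$. But the receiver utility of $\varphi_{Imi}$ is exactly the receiver utility of $\phi^*_n$ --- we never change which action is \emph{taken} in any state, only which \emph{signal} labels it --- so $u_\receiver(\varphi_{Imi}) = u_\receiver(\phi^*_n) \ge \rv_E$ by persuasiveness of $\phi^*_n$. (One must be a little careful: in $\varphi_{Imi}$ a signal for action $j\in[k]$ may be sent either because $\phi^*_n$ picked action $j$ directly, or because $\phi^*_n$ picked some action in $[n]\setminus[k]$ and the uniform redistribution landed on $j$; in the latter case the receiver, upon seeing the signal, should still want to follow it. Here I would use symmetry: conditioned on the recommended action being ``one of the high-numbered ones, relabeled,'' the posterior on that action's type is $\dist_{never}$-like with receiver value $\rv_E$, which is acceptable. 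I should make this conditioning argument precise, as it is the one genuinely fiddly point in part one.)

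For the lower bound $u_\sender(\varphi_{Imi}) \ge (k/n)\opt_n$: by symmetry of $\phi^*_n$, each of the $n$ actions is the one chosen by $\phi^*_n$ with probability $1/n$ conditioned on nothing, and more importantly the expected sender contribution is spread symmetrically. With probability that $\phi^*_n$'s chosen action lies in $[k]$ --- which by symmetry is exactly $k/n$ --- we faithfully relay the recommendation and obtain the full conditional sender value; with the remaining probability $(n-k)/n$ we relay a random action from $[k]$ instead, and I would simply lower-bound that contribution by $0$. Hence $u_\sender(\varphi_{Imi}) \ge (k/n) \cdot u_\sender(\phi^*_n) = (k/n)\opt_n$. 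The cleanest way to write this is to note that by symmetry the expected sender utility $u_\sender(\phi^*_n)$ equals $n$ times the expected sender value contributed by the event ``action $1$ is chosen by $\phi^*_n$,'' and $\varphi_{Imi}$ collects the analogous contributions from actions $1,\dots,k$ unchanged.

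For the tightness direction I need a random-order instance with $\opt_k \le (k/n)\opt_n$. The natural candidate generalizes the three-product example from the introduction: take $n$ products, one of which is $GB$ with utility pair $(\sv,\rv)=(1,0)$ and the other $n-1$ of which are $BG$-type with pair $(0,1)$ (or, to keep $\rv_E$ clean, a mix making $\rv_E$ equal to some fixed value), presented in uniformly random order. With $n$ signals the sender can run the Kamenica--Gentzkow-style scheme and extract $\opt_n$ bounded below by a constant; with only $k$ signals, I would argue via Lemma~\ref{lem:directPersuasive} that the optimal scheme recommends $k$ fixed actions, and then show that the persuasiveness constraint (receiver value $\ge \rv_E$ on every signal, combined with the fact that only a $k/n$ fraction of the realizations place the lucrative $GB$ product among those $k$ actions) caps the sender's utility at $(k/n)\opt_n$. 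The main obstacle here --- and the part I would spend the most care on --- is getting the instance parameters so that the bound is \emph{exactly} $k/n$ rather than merely $\Theta(k/n)$: I would likely need to compute $\opt_n$ and $\opt_k$ in closed form for the chosen instance, which means solving the small LP~\eqref{eq:slopeLP} (or arguing directly about optimal schemes) for both signal counts and checking the ratio is tight. Matching the $\Theta(k/n)$ upper bound with an exact $k/n$ example is the crux.
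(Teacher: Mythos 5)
Your decomposition into the three claims and your treatment of directness, symmetry, and the lower bound $u_\sender(\varphi_{Imi}) \ge (k/n)\opt_n$ match the paper's proof. However, two of your arguments have genuine gaps.

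First, the persuasiveness step. You write that the receiver utility of $\varphi_{Imi}$ equals that of $\phi^*_n$ because ``we never change which action is \emph{taken}, only which \emph{signal} labels it.'' This is false: when $\phi^*_n$ recommends some $j \notin [k]$, the scheme $\varphi_{Imi}$ instead recommends a uniformly random $i \in [k]$, and the receiver \emph{takes} $i$, not $j$. So in general $u_\receiver(\varphi_{Imi}) < u_\receiver(\phi^*_n)$. Your parenthetical backup is also off: conditioned on $\varphi_{Imi}$'s signal arising from a redistribution, the posterior on the recommended action's type is the distribution $\dist_{no}$ of a \emph{non-recommended} action under $\phi^*_n$, with receiver value $\rv_{no} \le \rv_E$ --- not a ``$\dist_{never}$-like'' posterior with value $\rv_E$ (there is no $\dist_{never}$ when $\phi^*_n$ uses all $n$ signals). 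The paper's argument is the correct version of the conditioning you sketched: given that $\varphi_{Imi}$ recommends $i \in [k]$, the posterior is $\dist_{yes}$ with probability $k/n$ and $\dist_{no}$ with probability $(n-k)/n$, so the conditional receiver value is $(k/n)\rv_{yes} + ((n-k)/n)\rv_{no}$; since for $\phi^*_n$ we have $\rv_E = (1/n)\rv_{yes} + ((n-1)/n)\rv_{no}$ with $\rv_{yes} \ge \rv_{no}$, shifting mass towards $\rv_{yes}$ can only help, so this quantity is $\ge \rv_E$, and Lemma~\ref{lem:symPersuasive} concludes.

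Second, the tightness example. Your candidate instance --- one $GB$ product $(\sv,\rv)=(1,0)$ and $n-1$ products $(0,1)$ in random order --- does \emph{not} give ratio $k/n$. Here $\rv_E = (n-1)/n$, and persuasiveness already caps the probability of recommending the $GB$ slot at $1/n$ \emph{regardless} of the signal budget, so $\opt_k = \opt_n = 1/n$ and the ratio is $1$. The bottleneck in your instance is persuasiveness, not the number of signals. The paper instead chooses a random-order instance in which persuasiveness never binds: one type with pair $(1,1)$ and $n-1$ types with $(0,0)$. Then $\rv_E = 1/n$, recommending the $(1,1)$-slot is always persuasive, so $\opt_n = 1$; with $k$ signals the sender can only recommend from $[k]$ (Lemma~\ref{lem:directPersuasive}), the $(1,1)$ type lands there with probability $k/n$, and otherwise every recommendable action has sender value $0$, giving $\opt_k = k/n$ exactly. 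The key modelling choice is to align sender and receiver interests on the single valuable product, which is the opposite of the tension built into the three-product example you adapted.
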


\begin{proof}
    We first prove the result for the Imitation Scheme. The optimal scheme $\phi^*_n$ is symmetric. If $\phi_{Imi}$ deviates from the recommendation of $\phi^*_n$, it recommends a uniform random action in $[k]$. Hence, $\phi_{Imi}$ is also symmetric.

    Conditioned on action $i$ being recommended by $\phi_{Imi}$, the type distribution of action $i$ is $\dist_{yes}$ from $\phi^*_n$ with probability $k/n$ or $\dist_{no}$ from $\phi^*_n$ with probability $(n-k)/n$. If an action $i \in [k]$ is not recommended, the type distribution is $\dist_{no}$ from $\phi^*_n$, no matter which action $j \in [k]$ is recommended. Let $\rv_{yes}$ and $\rv_{no}$ be the expected utility of $\receiver$ in $\dist_{yes}$ and $\dist_{no}$ from $\phi^*_n$, respectively. In $\phi_{Imi}$, the expected utility for $\receiver$ for any given action $i \in [k]$ must satisfy
    \[
    \frac{1}{k} \left(\frac{k}{n} \cdot \rv_{yes} + \frac{n-k}{n} \cdot \rv_{no} \right) + \frac{k-1}{k} \cdot \rv_{no} = \rv_E\enspace.
    \]
    Since $\rv_{no} \le \rv_E$, the expected utility for $\receiver$ when following a recommended action in $\phi_{Imi}$ can be bounded by $\frac{k}{n} \cdot \rv_{yes} + \frac{n-k}{n} \cdot \rv_{no} \ge \rv_E$. By Lemma~\ref{lem:symPersuasive} we see that $\phi_{Imi}$ is persuasive.

    The optimal scheme $\phi^*_n$ is symmetric and recommends each action with probability $1/n$. With probability $k/n$, $\phi_{Imi}$ recommends the same action as $\phi^*_n$, so $u_{\sender}(\phi_{Imi}) \ge k/n\cdot \opt_n$.

    For the upper bound on $\opt_k$, we consider an instance from the random-order scenario. There are $n$ types. $\theta_1$ has utility pair (1,1), all $n-1$ remaining types have utility pair $(0,0)$. Obviously, $\opt_n = 1$, the sender gives a signal for the action with type 1. With $k$ signals, there is an optimal scheme that recommends only the first $k$ actions. With probability $k/n$, type 1 is among those $k$ actions. Otherwise, type 1 cannot be recommended. Hence, $\opt_k \le k/n$, which completes the proof.
\end{proof}

\subsection{Independent Instances}

The first lemma shows that there are independent (and symmetric) instances such that the best approximation ratio is in $O(k/n)$.

\begin{lemma}
   There exists an IID instance such that $\opt_k \le \frac{e}{e-1} \cdot k/n \cdot \opt_n$.
\end{lemma}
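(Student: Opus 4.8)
The plan is to exhibit a concrete IID instance where the sender can extract nearly full value with $n$ signals but loses a factor of roughly $k/n$ (up to the constant $e/(e-1)$) when restricted to $k$ signals. Building on the random-order lower bound already in the excerpt, I would use a single-action type distribution with two types: a ``good'' type $\theta_1$ with utility pair $(1,1)$ occurring with some small probability $p$, and a ``bad'' type $\theta_0$ with utility pair $(0,0)$ occurring with probability $1-p$. Each of the $n$ actions draws its type i.i.d.\ from this distribution. I would tune $p$ (as a function of $n$) so that in expectation only a constant number --- ideally about one --- good action appears, which is where the factor $e/(e-1)$ will come from.

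First I would compute $\opt_n$. With $n$ signals the sender can recommend any realized good action; the receiver's a-priori value is $\rv_E = p$, and recommending a good action whenever one exists is clearly persuasive (conditional receiver value $1 \ge \rv_E$), while the ``no good action'' event forces value $0$. So $\opt_n = \Pr[\text{at least one good action}] = 1 - (1-p)^n$. Choosing $p = 1/n$ gives $\opt_n = 1 - (1-1/n)^n \to 1 - 1/e$.

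Next I would bound $\opt_k$ from above. By Lemma~\ref{lem:directPersuasive} an optimal $k$-signal scheme recommends $k$ fixed actions, say $[k]$; any sender value must come from recommending a good action among those, and (as in the $\rv_E$-optimality discussion) each recommended action must deliver conditional receiver value at least $\rv_E = p$. The crude bound $\opt_k \le \Pr[\exists \text{ good action in } [k]] = 1 - (1-p)^k$ already suffices: with $p = 1/n$ this is $1 - (1-1/n)^k \le k/n$ (using $(1-1/n)^k \ge 1 - k/n$). Then
\[
\frac{\opt_k}{\opt_n} \le \frac{k/n}{1-(1-1/n)^n} \le \frac{k/n}{1-1/e} = \frac{e}{e-1}\cdot\frac{k}{n},
\]
where the last inequality uses $(1-1/n)^n \le 1/e$. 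This gives exactly the claimed bound $\opt_k \le \frac{e}{e-1}\cdot \frac{k}{n}\cdot \opt_n$.

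The main thing to be careful about --- the only real obstacle --- is justifying the upper bound on $\opt_k$ rigorously, i.e.\ arguing that no clever persuasive scheme using $k$ signals can exceed $1-(1-p)^k$. The point is that in any direct persuasive scheme on actions $[k]$, whenever every action in $[k]$ is bad, the realized sender utility is $0$ no matter which action is recommended (all utility pairs in $[k]$ are $(0,0)$), so sender utility is at most the probability that $[k]$ contains a good action; this is a clean, unconditional bound and does not even need persuasiveness. I would also double-check the elementary inequalities $(1-1/n)^n \le 1/e \le (1-1/n)^{n-1}$ and $(1-1/n)^k \ge 1-k/n$ so the constants line up, and note the bound is asymptotically tight as $n\to\infty$ with $k$ fixed (or $k/n\to 0$).
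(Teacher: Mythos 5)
Your proof is correct and uses exactly the same two-type IID instance ($\theta_1=(1,1)$ with probability $1/n$, $\theta_0=(0,0)$ otherwise) as the paper, but you bound the ratio by a more elementary route. You show $\opt_k \le 1-(1-1/n)^k \le k/n$ directly via Bernoulli's inequality $(1-1/n)^k \ge 1-k/n$, and combine this with $\opt_n = 1-(1-1/n)^n \ge 1-1/e$. The paper instead derives the same bound $\opt_k \le k/n$ indirectly: it invokes the Imitation Scheme to prove $\opt_k \ge \frac{k}{k+1}\opt_{k+1}$ for symmetric instances, so that $\opt_k/k$ is non-increasing in $k$, hence $\opt_k/k \le \opt_1 = 1/n$. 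Both approaches then finish with the identical estimate $\frac{1}{1-(1-1/n)^n} \le \frac{e}{e-1}$. Your Bernoulli shortcut is shorter and self-contained; the paper's monotonicity argument is a bit heavier for this one lemma but exposes a structural fact ($\opt_k/k$ non-increasing) that holds for any symmetric instance and is interesting in its own right. One small remark: your observation that the upper bound $\opt_k \le \Pr[\exists\ \text{good action in}\ [k]]$ needs no persuasiveness at all (it is a pointwise bound, since sender value is $0$ whenever all of $[k]$ are bad) is a clean way to avoid arguing about what the ``optimal mechanism'' is, and is slightly more careful than the paper's phrasing.
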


\begin{proof}
In the distribution for every action there is a good type $\state_1$ with utility pair $(1,1)$ and $q_{\state_1} = 1/n$, and a bad type $\state_0$ with utility pair $(0,0)$ and $q_{\state_0} = 1-1/n$. Clearly, the optimal mechanism is to signal an action with $\state_1$ whenever it exists (within the first $k$ actions). This yields a ratio of
\[
\frac{\opt_k}{\opt_n} = \frac{1 - \left(1- \frac 1n \right)^k}{1 - \left(1- \frac 1n \right)^n} = \frac{k/n - \sum_{i=2}^k {k \choose i} \left(\frac{-1}{n}\right)^i}{1 - \sum_{i=2}^n {n \choose i} \left(\frac{-1}{n}\right)^i} 
\]
%

This ratio is at most $e/(e-1) \cdot k/n$, for all $k \in \{2,\ldots,n\}$, where $e/(e-1) \approx 1.58$. To see this, observe
\[
\opt_k \ge \frac{k}{k+1} \cdot \opt_{k+1}
\]
since the left-hand side is a lower bound on the sender utility when using $\phi_{Imi}$ for $k$ signals based on the optimal IID-scheme for $k+1$ signals. Hence,
\[
\frac{\opt_k}{k} \ge \frac{\opt_{k+1}}{k+1}
\]
Therefore, for all $k=2,\ldots,n$ we have $\opt_k/k \le \opt_1/1 = \opt_1 = 1/n$, i.e.,
\[
\frac{\opt_k}{\opt_n} \Big/ \frac{k}{n} = \frac{n}{\opt_n} \cdot \frac{\opt_k}{k} \le \frac{1}{\opt_n} = \frac{1}{1 - \left(1- \frac 1n \right)^n}\enspace.
\]
The factor is monotone in $n$ and grows to $1/(1-1/e) = e/(e-1)$.
\end{proof}

To provide an asymptotically tight bound of $\Omega(k/n)$, we consider the Independent-Imitation Scheme. Similar to the schemes in Sections~\ref{sec:constantFactor} and~\ref{sec:FPTAS} above, it consists of the two steps of (a) choosing a suitable subset of actions and (b) computing a good direct signaling scheme for the chosen subset of actions. In the Independent-Imitation Scheme we use ActionsReduce (Algorithm~\ref{algo:reduceActions}) for step (a), and ComputeSignal (Algorithm~\ref{algo:signal}) for step (b) as above. Since the main computational step in both algorithms is to solve a single linear program, the scheme can be implemented in polynomial time.

\begin{algorithm}[t]
   \DontPrintSemicolon
	\KwIn{Type sets $\Theta_1, \ldots, \Theta_n$ and distributions $q_1,\ldots,q_n$, s.t.\ $\sum_j q_{n,j}\rv_{nj} = \rv_E$ and $\sum_j q_{n,j}\sv_{nj} = \max_{i \in [n] \, : \, \sum_j q_{i,j}\rv_{ij} = \rv_E} \sum_j q_{i,j} \sv_{ij}$, parameter $2 \le k \le n$}

    Compute $f([n-1])$.\;
    For every $i \in [n]$, let $z^*_i$ be the values of the optimal solution in $f([n-1])$.\;
    Let $S$ be the set of the $k-1$ actions from $[n-1]$ with largest values $g_i(z^*_i)$.\;
    \Return{$S$}

   \caption{\label{algo:reduceActions} ActionsReduce}
\end{algorithm}
\begin{theorem}
   The Independent-Imitation Scheme is direct and persuasive for independent instances with $k$ signals. It can be implemented in time polynomial in the input size. For every $k \ge 2$,
   \[ u_{\sender}(\phi_{ImiIS}) \ge \left(1-\left(1-\frac 1k\right)^k\right) \cdot \left(1 - \frac 1k\right) \cdot \frac{k}{n} \cdot \opt_n \enspace.\]
\end{theorem}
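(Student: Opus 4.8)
The plan is to prove the theorem in three parts: direct and persuasive, polynomial-time implementation, and the approximation ratio. The overall structure mirrors the analysis of the Independent Scheme $\varphi_{IS}$ in Section~\ref{sec:constantFactor}, with ComputeSignal reused verbatim for step (b); the only genuinely new ingredient is the analysis of ActionsReduce for step (a).

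For directness and persuasiveness, observe that ActionsReduce produces a set $S\subseteq[n-1]$ with $|S|=k-1$ and hands the pair $(S,S\cup\{n\})$ to ComputeSignal. By Lemma~\ref{lem:signalPersuasive}, ComputeSignal returns a direct and persuasive scheme for any such $S$, regardless of how $S$ was chosen. So this part is immediate. For the running time: ActionsReduce solves $f([n-1])$ once (one parametric LP solve in the sense of~\eqref{eqn:f}, plus the waterfilling over the concave pieces), then sorts the $n-1$ values $g_i(z^*_i)$ and picks the top $k-1$ --- all polynomial. ComputeSignal solves $f(S)$ once and does $k-1$ coin flips. So the whole scheme is polynomial-time.

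The heart of the proof is lower-bounding $f(S)$ for the set $S$ returned by ActionsReduce, in terms of $\opt_n = u_\sender(\varphi^*_n)$, where $\varphi^*_n$ is the optimal scheme with $n$ signals. I would argue: by Lemma~\ref{lem:directPersuasive}, $\varphi^*_n$ directly recommends all $n$ actions, so setting $x_{ij}$ to its ex-post recommendation probabilities and $z_i=\sum_j x_{ij}$ gives a feasible solution to the program defining $f([n-1])$ (here using $\rv_E$-optimality so that the second constraint of~\eqref{eqn:LP-submodular2} holds); hence $f([n-1]) \ge u_\sender(\varphi^*_n) = \opt_n$. Now let $z^*_i$ be the optimal waterfilling solution for $f([n-1])$, so $\sum_{i\in[n]} g_i(z^*_i) = f([n-1]) \ge \opt_n$. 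ActionsReduce keeps the $k-1$ indices in $[n-1]$ with the largest $g_i(z^*_i)$ and always keeps $n$. The key averaging step: among the $n-1$ values $\{g_i(z^*_i)\}_{i\in[n-1]}$, the top $k-1$ sum to at least a $\tfrac{k-1}{n-1}$ fraction of their total, and the total is at least $\opt_n - g_n(z^*_n)$. Then $\sum_{i\in S\cup\{n\}} g_i(z^*_i) \ge \tfrac{k-1}{n-1}(\opt_n - g_n(z^*_n)) + g_n(z^*_n) \ge \tfrac{k-1}{n-1}\opt_n$, and since the restriction $(z^*_i)_{i\in S\cup\{n\}}$ is feasible for $f(S)$ (its total mass is at most $1$), $f(S) \ge \sum_{i\in S\cup\{n\}} g_i(z^*_i) \ge \tfrac{k-1}{n-1}\opt_n \ge \tfrac{k-1}{n}\opt_n = (1-\tfrac{1}{k})\cdot\tfrac{k}{n}\cdot\opt_n$. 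Finally apply Lemma~\ref{lem:signalApx}: ComputeSignal on $S\cup\{n\}$ yields $u_\sender(\varphi_{ImiIS}) \ge \bigl(1-(1-\tfrac1k)^k\bigr) f(S) \ge \bigl(1-(1-\tfrac1k)^k\bigr)(1-\tfrac1k)\tfrac{k}{n}\opt_n$, which is exactly the claimed bound.

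The main obstacle I anticipate is making the averaging step fully rigorous while handling the role of action $n$ correctly --- in particular checking that dropping $n$ from the ``top $k-1$'' count is legitimate (it is always included separately, so we only need the top $k-1$ among $[n-1]$), and that the feasibility of $(z^*_i)_{i\in S\cup\{n\}}$ for $f(S)$ is genuine (it is, since $\sum_{i\in S\cup\{n\}} z^*_i \le \sum_{i\in[n]} z^*_i \le 1$, and the per-action second constraint carries over unchanged because it only couples the $x_{ij}$ of a single action). A secondary subtlety is the edge case where some $g_i(z^*_i)=0$: this only helps, since those actions contribute nothing and can be thought of as replaced by dummies. No submodularity or greedy machinery is needed here --- the argument is a clean pigeonhole on the concave-waterfilling values --- so this proof is considerably shorter than the one for Theorem~\ref{thm:independentGreedy}.
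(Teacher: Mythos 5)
Your proposal is correct and follows essentially the same route as the paper: bound $f([n-1]) \ge \opt_n$ via feasibility of the optimal scheme's recommendation probabilities, restrict the optimal waterfilling solution to $S \cup \{n\}$ to get $f(S) \ge \sum_{i \in S \cup \{n\}} g_i(z^*_i)$, apply the top-$(k-1)$ averaging argument to obtain the $\frac{k-1}{n}$ factor, and finish with Lemmas~\ref{lem:signalApx} and~\ref{lem:signalPersuasive}. Your write-up merely spells out the averaging step (separating $g_n(z^*_n)$ from the other $n-1$ values) in more detail than the paper does, and correctly flags the implicit reliance on $\rv_E$-optimality.
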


\begin{proof}
   Following \eqref{eqn:f} above we observed that $f(S) \ge u_{\sender}(\varphi^*_{S \cup \{n\}})$, so in particular $f([n-1]) \ge \opt_n$. For every action $i \in [n-1]$ and every type $j \in \Theta_i$, let $z^*_i$ and $x^*_{ij}$ be the values of the optimal LP solution for $f([n-1])$. It is straightforward to verify that for every subset $S$, the values $(z^*_i)_{i \in S \cup \{n\}}$ and $(x^*_{ij})_{i \in S \cup \{n\}, j \in \Theta_i}$ constitute a feasible solution for the LP when optimizing $f(S)$. Since ActionsReduce chooses $S$ to contain the $k-1$ actions with largest $g_i(z^*_i)$,
   \[
       f(S) \; \ge \; \sum_{i \in S \cup \{n\}} g_i(z^*_i) \; \ge \; \frac{k-1}{n} \cdot f([n-1]) \; \ge \; \left(1 - \frac 1k \right) \cdot \frac{k}{n} \cdot \opt_n
   \]
   The approximation ratio now follows using Lemma~\ref{lem:signalApx}. By Lemma~\ref{lem:signalPersuasive}, the resulting signaling scheme is direct and persuasive.
\end{proof}

\bibliographystyle{plain}
\bibliography{literature,martin}

\clearpage
\appendix

\section{Proof of Lemma~\ref{lem:directPersuasive}}
\label{app:directPersuasive}

\begin{proof}
    The first statement follows from a simple revelation-principle-style argument. Consider any signaling scheme $\varphi$. Given any signal $\sigma$, we can assume $\receiver$ chooses one action that maximizes the conditional expectation of her utility. Suppose for two signals $\sigma, \sigma'$, $\receiver$ chooses the same action. Then $\sender$ can simply drop $\sigma'$ and issue $\sigma$ every time it issued $\sigma'$, thereby achieving the same behavior of $\receiver$. Thus, each of the $k$ signals can be assumed to correspond to a distinct choice of action of $\receiver$, which maximizes the conditional expectation of the utility of $\receiver$. Hence, we can equivalently assume that $\sender$ uses the signal to issue a direct recommendation for an action such that $\receiver$ wants to follow the recommendation.

    For the second statement, symmetry in the instance allows to restrict attention to the first $k$ actions. Consider an optimal direct and persuasive scheme $\varphi$ that recommends actions from a size-$k$-subset $K \subseteq [n]$. Permute the labels of all actions in $\varphi$ (and w.l.o.g.\ tie-breaking rule of $\receiver$) such that it recommends actions from $[k]$. Denote the permuted $\varphi$ by $\varphi'$. Since the distribution over states of nature is symmetric, it is invariant to permutation of action labels. Hence, applying $\varphi'$ yields the same conditional expectations for the utility of $\receiver$ for the actions in $[k]$ as $\varphi$ yields for $K$. Thus, $\varphi'$ is direct and persuasive with recommendations from $[k]$ and $u_\sender(\varphi) = u_\sender(\varphi')$.
\end{proof}

\section{Efficient Probability Oracles}
\label{app:efficientProbability}
We divide the proof of Theorem~\ref{thm:symmetricPolyTime} into two subsections for the prophet-secretary and the $d$-random-order scenarios.
\subsection{Prophet-Secretary}
\label{sec:prophetSec}

Consider the prophet-secretary scenario, in which we have $n$ probability distributions over type spaces $\Theta^1,\ldots,\Theta^n$, respectively. For simplicity, we reverse the generation process of the state of nature $\vecState$: First, permute the $n$ distributions in uniform random order, then draw a single type from each distribution independently.

We denote by $q^i_\theta$ the probability that type $\theta \in \Theta^i$ is drawn from distribution $i$. For the $n$ type spaces, we assume w.l.o.g.\ that they are mutually disjoint. In addition, we assume for simplicity that types are in general position, i.e., there are no more than two distinct types on any given straight line. We discuss in the end how our observations can be adapted when this assumption does not hold.

Overall, the representation size of the input is at least linear in $n$, $\max_i |\Theta^i|$, and $\max_{i,\theta} \log 1/q^i_\theta$. For a polynomial-time probability oracle, we have to implement two classes of queries in time polynomial in the aforementioned quantities:
\begin{enumerate}[a)]
	\item Given a pair of types $a$ and $b$, return the probability $p_{ab}$ that $\overline{ab}$ is in the Pareto-frontier of the type set $C$ of the first $k$ actions.
	\item Given a type $c$ and a slope $s$, return the probability $p_c^{(s)}$ that $c$ is the unique point that corresponds to slope $s$ on the Pareto-frontier of the type set $C$ of the first $k$ actions.
\end{enumerate}

\paragraph{Class a)}
If the two types are from the same distribution, then $p_{ab} = 0$. Otherwise, let $i_a$ and $i_b$ be such that $a \in \Theta^{i_a}$ and $b \in \Theta^{i_b}$. For each distribution $i \neq i_a,i_b$ we consider every type $c \in \Theta^i$. If $c$ lies above the line through $a$ and $b$ and is included in $C$, then $c$ lifts the Pareto frontier above $\overline{ab}$, and the segment would vanish from the Pareto-frontier. Thus, if $c$ lies above the line through $a$ and $b$, then $c$ must not be in $C$. Otherwise, $c$ is an \emph{allowed type}. We denote by $\Theta^i_{ab}$ the set of allowed types of distribution $i$, and by $q^i_{ab} = \sum_{c \in \Theta_{ab}^i} q_c^i$ the probability to draw an allowed type in distribution $i$. Clearly, these probabilities can be determined in time linear in the total number of types.

Now in order to have $\overline{ab}$ on the Pareto-frontier, it must be the case that (1) distributions $i_a$ and $i_b$ are permuted to the first $k$ actions, (2) $a$ and $b$ are drawn from distributions $i_a$ and $i_b$, respectively, and (3) for every other distribution $i$ permuted to the first $k$ actions, we draw an allowed type. The probability for (1) is $\frac{k}{n} \cdot \frac{k-1}{n-1}$, the probability for (2) is $q^{i_a}_a \cdot q^{i_b}_b$. To compute the probability of (3), we consider every subset $A \subseteq \{1,\ldots,n\} \setminus \{i_a,i_b\}$ of $|A| = k-2$ distributions and compute the probability that from every distribution of $A$ we draw an allowed type. Overall,
\[
p_{ab} = \frac{k}{n} \cdot \frac{k-1}{n-1} \cdot q^{i_a}_a \cdot q^{i_b}_b \cdot \frac{1}{{n-2 \choose k-2}} \cdot \sum_{\substack{A \subseteq \{1,\ldots,n\} \setminus \{i_a,i_b\}\\ |A| = k-2}} \prod_{i \in A} q^i_{ab}\enspace.
\]
To compute the last term, we need to compute the sum of products of all $(k-2)$-size subsets of $n-2$ numbers. This can done in time $O(nk)$ using a dynamic program.

\paragraph{Class b)}
Let $i_c$ be such that $c \in \Theta^{i_c}$. For each distribution $i \neq i_c$ we again consider every type $d \in \Theta^i$. $c$ shall be the unique point corresponding to slope $s$ on the Pareto frontier, so there must not be any type on or above the line going through $c$ with slope $s$. Hence, all types that remain strictly below this line are \emph{allowed types}.  We denote by $\Theta^i_c$ the set of allowed types of distribution $i$, and by $q^i_c = \sum_{d \in \Theta_c^i} q_d^i$ the probability to draw an allowed type in distribution $i$. These probabilities can be determined in time linear in the total number of types.

For $c$ to be the unique point that corresponds to $s$ on the Pareto-frontier, it must be the case that (1) distribution $i_c$ is permuted to the first $k$ actions, (2) $c$ is drawn from distribution $i_c$, and (3) for every other distribution $i$ permuted to the first $k$ actions, we draw an allowed type. The probability for (1) is $\frac{k}{n}$, the probability for (2) is $q^{i_c}_c$. To compute the probability of (3), we consider every subset $A \subseteq \{1,\ldots,n\} \setminus \{i_c\}$ of $|A| = k-1$ distributions and compute the probability that from every distribution of $A$ we draw an allowed type. Overall,
\[
p_c^{(s)} = \frac{k}{n} \cdot q^{i_c}_c \cdot \frac{1}{{n-1 \choose k-1}} \cdot \sum_{\substack{A \subseteq \{1,\ldots,n\} \setminus \{i_c\}\\ |A| = k-1}} \prod_{i \in A} q^i_c\enspace.
\]
Again, the last term can be computed by a dynamic program in time $O(nk)$.

\paragraph{On general position}
When types are not in general position, i.e., there are three or more types on a straight line, the events of them forming a line segment on the Pareto frontier are not disjoint. Hence, the probabilities to set up the LP have to be computed in a slightly different manner.

We ensure that a segment is not counted multiple times by considering, for any given slope, only the longest possible line segment in the Pareto frontier. Hence, the following modification has to be made for queries of class a) when determining the set of allowed types $\Theta^i_{ab}$: All types of $\Theta^i$ that are on the segment $\overline{ab}$ are allowed since this does not prohibit $\overline{ab}$ to be the longest possible line segment. All types of $\Theta^i$ that are on the straight line going through $a$ and $b$ but not on $\overline{ab}$ must not be allowed. With this modification to calculate the probabilities for $p_{ab}$, general position of types is no longer required.

The main insight in this section is summarized in the following proposition.

\begin{proposition}
	For the prophet-secretary scenario we can implement a probability oracle for the Slope-Algorithm in polynomial time.
\end{proposition}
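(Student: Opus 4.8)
The plan is to verify that the Slope-Algorithm (Algorithm~\ref{algo:slope}) only ever consults the probability oracle through two kinds of queries — the probability $p_{ab}$ in line~\ref{line:prob1} that a segment $\overline{ab}$ lies on the Pareto frontier of the types of the first $k$ actions, and the probability $p_c^{(s)}$ in line~\ref{line:prob2} that a type $c$ is the unique frontier point of slope $s$ — and to give a closed-form expression for each that is evaluable in time polynomial in $n$, $\max_i|\Theta^i|$, and the bit-complexity $\max_{i,\theta}\log(1/q^i_\theta)$ of the prior. Together with Proposition~\ref{prop:Slope}, this establishes the claim.

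For a class~a) query I would first reverse the sampling process: draw a uniformly random permutation placing the $n$ distributions into slots, then draw one type from each distribution independently. If $a$ and $b$ belong to the same distribution, $\overline{ab}$ can never appear and $p_{ab}=0$; otherwise let $a\in\Theta^{i_a}$, $b\in\Theta^{i_b}$. The segment $\overline{ab}$ survives on the Pareto frontier of the realized set $C$ precisely when (1) both $i_a$ and $i_b$ land among the first $k$ slots, (2) the realizations there are exactly $a$ and $b$, and (3) every other distribution among the first $k$ slots realizes an \emph{allowed} type, i.e., one not strictly above the line through $a$ and $b$. Writing $q^i_{ab}$ for the total probability of an allowed type in distribution $i$ (computable in time linear in the number of types), these events are independent given the permutation, so summing over the choice of the remaining $k-2$ distributions occupying the first $k$ slots yields
\[
p_{ab}=\frac{k}{n}\cdot\frac{k-1}{n-1}\cdot q^{i_a}_a q^{i_b}_b\cdot\frac{1}{\binom{n-2}{k-2}}\sum_{\substack{A\subseteq\{1,\dots,n\}\setminus\{i_a,i_b\}\\|A|=k-2}}\prod_{i\in A}q^i_{ab}\enspace.
\]
The only non-elementary ingredient is the last sum, an elementary symmetric polynomial of degree $k-2$ in $n-2$ numbers, which a standard $O(nk)$ dynamic program evaluates. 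A class~b) query is handled identically, with the \emph{allowed} types of each $i\neq i_c$ being those strictly below the line through $c$ of slope $s$, the surviving-uniqueness event decomposing into $i_c$ landing in the first $k$ slots, $c$ being realized there, and all other first-$k$ distributions realizing allowed types; this gives $p_c^{(s)}=\frac{k}{n}\,q^{i_c}_c\binom{n-1}{k-1}^{-1}\sum_{|A|=k-1}\prod_{i\in A}q^i_c$, again evaluated by the same dynamic program.

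The one delicate point — and the step I expect to be the main obstacle — is that these decompositions implicitly assume no three types are collinear: otherwise the events "$\overline{ab}$ on the frontier" for different pairs $a,b$ along a common supporting line overlap, so that feeding the resulting probabilities into LP~\eqref{eq:slopeLP} would double-count that line. I would fix this by always charging a given slope to its \emph{longest} realized segment: in the class~a) allowed-type definition, types of $\Theta^i$ lying on $\overline{ab}$ itself remain allowed (they do not prevent $\overline{ab}$ from being the longest segment), while types on the supporting line but outside $\overline{ab}$ become forbidden. With this adjustment the relevant events are again pairwise disjoint, the formulas go through verbatim, and no general-position hypothesis is needed. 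The remaining verification — that every quantity above is computed in polynomial time and that these are the only oracle calls the algorithm makes — is routine bookkeeping, and the proposition follows.
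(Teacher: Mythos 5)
Your proposal is correct and follows essentially the same route as the paper: the same reversal of the sampling process, the same decomposition into the three independent events, the identical closed-form expressions for $p_{ab}$ and $p_c^{(s)}$ evaluated via the $O(nk)$ dynamic program for elementary symmetric polynomials, and even the same fix for non-general position (allowing types on $\overline{ab}$ but forbidding collinear types outside the segment). No substantive difference from the paper's argument.
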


%

\subsection{$d$-Random-Order}
\label{sec:dRandomOrder}

For $d$-random-order instances, we have $d$ type vectors $\vecState^1,\ldots,\vecState^d$ and a distribution over these vectors. We denote by $q_{\vecState^i}$ the probability of $\vecState^i$. Without loss of generality we assume that all $dn$ types in the $d$ vectors are distinct. To generate a state of nature, we draw vector $\vecState^i$ with probability $q_{\vecState^i}$, and then permute the vector uniformly at random. The representation size of the input is linear in $dn$ and $\max_{i} \log 1/q_{\vecState^i}$. For a polynomial-time probability oracle, we again have to implement two classes of queries discussed in the previous section. The running time will be polynomial in the aforementioned quantities. For simplicity, we again assume types are points in general position.

\paragraph{Class a)}
If the two types $a$ and $b$ come from different vectors $\vecState^j$ and $\vecState^{j'}$, then $p_{ab} = 0$. Otherwise, suppose $a$ and $b$ are from $\vecState^j$. Consider each type $c$ from $\vecState^j$ with $c \neq a,b$. If $c$ lies above the line through $a$ and $b$ and is included in $C$, then $c$ lifts the Pareto frontier above $\overline{ab}$, and the segment would vanish from the Pareto-frontier. Thus, if $c$ lies above the line through $a$ and $b$, then $c$ must not be in $C$. Otherwise, $c$ is an \emph{allowed type}. We denote by $A^j_{ab}$ the set of allowed types from vector $\vecState^j$. Clearly, $A^j_{ab}$ can be computed in time linear in $n$.

Now in order to have $\overline{ab}$ on the Pareto-frontier, it must be the case that (1) $\vecState^j$ is drawn from the distribution, (2) $a$ and $b$ are permuted to the first $k$ actions, and (3) every other type from $\vecState^j$ permuted to the first $k$ actions is an allowed type. The probabilities for these events are (1) $q_{\vecState^j}$, (2) $\frac{k}{n} \cdot \frac{k-1}{n-1}$, and (3) ${|A^j_{ab}| \choose k-2} \Big/ {n-2 \choose k-2}$, where we assume that ${|A^j_{ab}| \choose k-2} = 0$ if $|A^j_{ab}| < k-2$. Overall,
\[
p_{ab} = q_{\vecState^j} \cdot \frac{k}{n} \cdot \frac{k-1}{n-1} \cdot \frac{1}{{n-2 \choose k-2}} \cdot {|A^j_{ab}| \choose k-2}\enspace.
\]
Clearly, this expression can be computed in polynomial time for every pair of types $a,b$.

\paragraph{Class b)}
Let $c$ be a type from vector $\vecState^j$. Consider each type $d$ from $\vecState^j$ with $d \neq c$. Type $c$ shall be the unique point corresponding to slope $s$ on the Pareto frontier, so $d$ must not be on or above the line going through $c$ with slope $s$. If $d$ remains strictly below this line, it is an \emph{allowed type}. We denote by $A^j_c$ the set of allowed types from $\vecState^j$. Clearly, $A^j_c$ can be computed in time linear in $n$.

For $c$ to be the unique point that corresponds to $s$ on the Pareto-frontier, it must be the case that (1) $\vecState^j$ is drawn from the distribution, (2) $c$ is permuted to the first $k$ actions, and (3) every other type from $\vecState^j$ permuted to the first $k$ actions is an allowed type. The probabilities for these events are (1) $q_{\vecState^j}$, (2) $\frac{k}{n}$, and (3) ${|A^j_c| \choose k-1} \Big/ {n-1 \choose k-1}$, where we assume that ${|A^j_{ab}| \choose k-1} = 0$ if $|A^j_{ab}| < k-1$. Overall,
\[
p_c^{(s)} = q_{\vecState^j} \cdot \frac{k}{n} \cdot \frac{1}{{n-1 \choose k-1}} \cdot {|A^j_c| \choose k-1}\enspace.
\]
Again, the expression can be computed in polynomial time for every type $c$.

The adjustments to remove the assumption of general position are the same as for prophet-secretary in the previous section. The main insight in this section is summarized in the following proposition.

\begin{proposition}
	For the $d$-random-order scenario we can implement a probability oracle for the Slope-Algorithm in polynomial time.
\end{proposition}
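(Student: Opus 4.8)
The plan is to show that both query types the probability oracle must support — computing $p_{ab}$ for a pair of types (line~\ref{line:prob1} of the Slope-Algorithm) and computing $p_c^{(s)}$ for a type together with a slope (line~\ref{line:prob2}) — admit closed-form expressions evaluable in time polynomial in $dn$ and $\max_j \log(1/q_{\vecState^j})$. Together with Proposition~\ref{prop:Slope} this yields the claim. The structural observation driving everything is that in the $d$-random-order model a state of nature is produced by first drawing one of the $d$ vectors $\vecState^1,\dots,\vecState^d$ (vector $\vecState^j$ with probability $q_{\vecState^j}$) and then applying a uniform random permutation; conditioned on the drawn vector, the set $C$ of types assigned to the first $k$ actions is a uniformly random $k$-element subset of the $n$ entries of that vector. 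Hence every probability the oracle reports factors as $\sum_j q_{\vecState^j}\cdot(\text{a combinatorial term depending only on }\vecState^j)$, and within a single vector the relevant randomness is just a hypergeometric draw.

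For a Class~a) query on types $a,b$: if $a$ and $b$ lie in different vectors then $\overline{ab}$ can never be a segment of a Pareto frontier, so $p_{ab}=0$. If both lie in $\vecState^j$, then $\overline{ab}$ appears on the Pareto frontier of $C$ precisely when (i) $\vecState^j$ is drawn, (ii) both $a$ and $b$ land among the first $k$ actions, and (iii) no entry of $\vecState^j$ lying strictly above the line through $a$ and $b$ lands among the first $k$ actions. Writing $A^j_{ab}$ for the set of \emph{allowed} entries of $\vecState^j$ (those on or below that line, excluding $a,b$ themselves), these events contribute the factors $q_{\vecState^j}$, $\frac{k}{n}\cdot\frac{k-1}{n-1}$, and $\binom{|A^j_{ab}|}{k-2}\big/\binom{n-2}{k-2}$, all computable in polynomial time; summing over $j$ gives $p_{ab}$. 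A Class~b) query on $(c,s)$ is handled symmetrically: with $c$ in vector $\vecState^j$, $c$ is the unique point corresponding to slope $s$ on the frontier exactly when $\vecState^j$ is drawn, $c$ lands among the first $k$ actions, and every other entry of $\vecState^j$ that lands among the first $k$ actions lies strictly below the slope-$s$ line through $c$; letting $A^j_c$ be the set of such allowed entries, $p_c^{(s)} = q_{\vecState^j}\cdot\frac{k}{n}\cdot\binom{|A^j_c|}{k-1}\big/\binom{n-1}{k-1}$.

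The main subtlety — and the only place genuine care is required — is the general-position assumption, which is what guarantees that for distinct pairs the events ``$\overline{ab}$ on the frontier'' are disjoint and that the formulas above count each realized frontier exactly once. When three or more types are collinear this breaks. The fix is to charge each slope only to the \emph{longest} segment of that slope on the frontier: in a Class~a) query, entries of $\vecState^j$ that lie on the segment $\overline{ab}$ itself should stay allowed (they do not prevent $\overline{ab}$ from being the maximal segment), while entries on the supporting line but outside $\overline{ab}$ must be disallowed; with this single adjustment to the definition of $A^j_{ab}$ (and an analogous remark for Class~b)), the same formulas remain valid without the general-position hypothesis. I expect this disjointness/double-counting bookkeeping to be the only real obstacle; the remainder is a routine unfolding of the uniform-permutation model into hypergeometric probabilities — which is precisely why the $d$-random-order case is simpler than prophet-secretary, since within a single vector ``allowed vs.\ not'' is a clean $0/1$ partition and no dynamic program over subsets of distributions is needed.
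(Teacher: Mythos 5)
Your argument matches the paper's proof essentially line for line: the same conditioning on the drawn vector $\vecState^j$, the same decomposition into events (vector drawn, relevant type(s) land in the first $k$ slots, all other slots filled with allowed types) yielding the same hypergeometric formulas $p_{ab}=q_{\vecState^j}\cdot\frac{k}{n}\cdot\frac{k-1}{n-1}\cdot\binom{|A^j_{ab}|}{k-2}/\binom{n-2}{k-2}$ and $p_c^{(s)}=q_{\vecState^j}\cdot\frac{k}{n}\cdot\binom{|A^j_c|}{k-1}/\binom{n-1}{k-1}$, and the same longest-segment fix for dropping the general-position assumption. (The only cosmetic difference is your phrase ``summing over $j$,'' which is harmless but vacuous since the types across the $d$ vectors are assumed distinct, so exactly one $j$ contributes.)
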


%

\section{Bicriteria Approximation}
\label{app:bicriteria}

For a symmetric instance with $k$ signals and $n$ actions, consider a \emph{truncation} operation: Remove actions $k+1,\ldots,n$ from consideration and restrict every state of nature $\vecState$ to its first $k$ entries. This yields the \emph{truncated instance} with $k$ signals and $k$ actions. Suppose we apply the characterization from Theorem~\ref{thm:sPareto} and the Slope-Algorithm to compute an optimal scheme in the original instance and the truncated instance. Indeed, it is a straightforward consequence of symmetry that the resulting scheme is the same.

\begin{proposition}
   For symmetric instances with $k$ signals and $n$ actions, there is an optimal scheme that is an optimal scheme for the truncated instance with $k$ signals and $k$ actions, and vice versa.
\end{proposition}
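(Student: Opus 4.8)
I would prove the proposition by routing everything through the geometric characterization of Theorem~\ref{thm:sPareto}. The crucial observation is that the only data the characterization depends on are (i) the family of realizable type sets $C$ for the first $k$ actions together with their probabilities $q_C$, and (ii) the threshold $\rv_E$; and the \emph{truncation} operation leaves both of these untouched. So an $s$-Pareto point collection for the original instance \emph{is} an $s$-Pareto point collection for the truncated instance and conversely, with the same sender objective, and the common sender-optimal collection yields a single scheme that is simultaneously optimal for both instances, establishing both directions of the statement at once.

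\textbf{Steps.} First I would check that truncation maps a symmetric instance to a symmetric instance: if the prior over $\Theta$ is invariant under permutations of the $n$ coordinates, then its marginal on the first $k$ coordinates is invariant under permutations of the $k$ coordinates. From this, two invariants follow by marginalization. The quantity $\rv_E = \sum_{\vecState} q_{\vecState}\,\rv(\state_i)$ is the a-priori expected receiver utility of a single action and depends only on the (common) marginal of one coordinate, so it is unchanged. Likewise $q_C = \Pr[\bigcup_{i\in[k]}\{\state_i\}=C]$ depends only on the marginal of $(\state_1,\dots,\state_k)$, which is exactly what truncation keeps; hence the family $\{(C,q_C)\}$ is identical in the two instances. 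Second, consulting Definition~\ref{defi:sOptimal}, an $s$-Pareto point collection is given by a choice, for each realizable $C$, of a point on the Pareto frontier of $C$ corresponding to slope $s$, subject to $u_\receiver(\calP)=\sum_C q_C\,p_\receiver(C)\ge\rv_E$, and the sender objective is $\sum_C q_C\,p_\sender(C)$. Since the sets $C$, the weights $q_C$, and $\rv_E$ coincide, the collection of $s$-Pareto point collections and the sender objective over it coincide, so a sender-optimal $s$-Pareto collection over all $s\le 0$ is common to both instances, and by Theorem~\ref{thm:sPareto} it certifies $\opt_k$ in each. Third, I would note that the scheme $\phi^*$ produced from such a collection in Lemma~\ref{lem:persuasive} only inspects the set of types of the first $k$ actions and recommends within $[k]$ with probabilities depending only on that set; so, up to post-composition with the projection onto the first $k$ coordinates, it is literally the same scheme in the original and the truncated instance, and it attains the common optimal value $u_\sender(\calP)$ in both.

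\textbf{Main obstacle.} The one point that needs care is that persuasiveness transfers between the instances even though persuasiveness in the original instance a priori guards against deviations to all $n$ actions, not just the $k$ recommended ones. For the $s$-Pareto scheme $\phi^*$ this is not actually an obstacle: $\phi^*$ is symmetric and satisfies $u_\receiver(\phi^*)=u_\receiver(\calP)\ge\rv_E$, so Lemma~\ref{lem:symPersuasive} applies verbatim in the original instance; concretely, by symmetry any non-recommended action in $[n]\setminus[k]$ has conditional expected receiver utility exactly $\rv_E$ and so never becomes a profitable deviation. Thus the only genuinely load-bearing step is the marginalization argument showing $q_C$ and $\rv_E$ are invariant under truncation, which is a one-line computation, and I expect the rest to follow mechanically from Theorem~\ref{thm:sPareto} and Lemmas~\ref{lem:persuasive} and~\ref{lem:symPersuasive}.
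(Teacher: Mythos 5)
Your proposal is correct and follows the same route the paper takes: the paper itself gives no formal proof of this proposition, only remarking that it is a straightforward consequence of symmetry since the Slope-Algorithm and the $s$-Pareto characterization of Theorem~\ref{thm:sPareto} depend only on data preserved by truncation. Your writeup supplies exactly the details the paper omits --- that the truncated instance remains symmetric, that $q_C$ and $\rv_E$ are invariants of truncation, that the resulting scheme factors through the projection onto $[k]$, and that persuasiveness against deviations to actions in $[n]\setminus[k]$ is handled by Lemma~\ref{lem:symPersuasive}.
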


We show that one can apply algorithms to the truncated instance and obtain similar results for the scenario with $k < n$ actions. By truncating the instance, we return to the standard scenario of Bayesian persuasion with $n = k$ actions and signals.

In particular, for the IID scenario, truncation yields an instance where we draw from the same underlying distribution simply for $k$ instead of $n$ actions. Hence, an optimal scheme with $n$ IID actions and $k$ signals is an optimal scheme for $k$ IID actions and $k$ signals. Instead of using the Slope-Algorithm, it can also be obtained by solving a single LP of polynomial size~\cite{DughmiX16}.

More generally, applying a Monte-Carlo sampling approach we obtain a bicriteria approximation when having black-box access to the prior over states of nature. In the following, we assume that all utility values are in $\rv(\state_i), \sv(\state_i) \in [-1,1]$. We assume $\sender$ has black-box oracle access to the prior, i.e., she can draw states of nature as samples from the distribution.

Let $\varphi^*$ be an optimal direct and persuasive scheme. Given any parameter $\varepsilon > 0$, a direct scheme $\varphi$ is $\varepsilon$-persuasive if $\Ex{\rv(\state_i) \mid \sigma = i} \ge \Ex{\rv(\state_j) \mid \sigma = i} - \varepsilon$ for all actions $j \in [n]$. A direct scheme is \emph{$\varepsilon$-optimal} if $u_\sender(\varphi) \ge u_\sender(\varphi^*) - \varepsilon$, where for $u_\sender(\varphi)$ we assume that $\receiver$ follows the recommendation. An $\varepsilon$-persuasive and $\varepsilon$-optimal scheme gives both players a guarantee that their expected utility is at most an (additive) $\varepsilon$ away from a utility benchmark. For $\receiver$ the benchmark is the utility of the best action given $\varphi$, for $\sender$ it is the utility obtained by an optimal persuasive scheme.

The main result of this section is that the bicriteria FPTAS from~\cite{DughmiX16} can be applied to the truncated instance.

\begin{corollary}
  \label{cor:bicriteria}
  In symmetric instances with $k \le n$ signals, utility values in $[-1,1]$, and black-box oracle access to the distribution over states of nature, an $\varepsilon$-persuasive and $\varepsilon$-optimal scheme can be computed in time polynomial in $n$ and $1/\varepsilon$, for every $\varepsilon > 0$.
\end{corollary}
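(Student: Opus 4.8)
The plan is to reduce to the standard $k=n$ regime and then invoke the sampling-based bicriteria FPTAS of~\cite{DughmiX16}. First I would pass to the truncated instance, which has $k$ actions and $k$ signals; black-box sample access to its prior is free, since sampling a state $\vecState$ from the original distribution and discarding all but its first $k$ coordinates yields a sample from the truncated prior (and, by symmetry, $\rv_E$ is unchanged by truncation). I would then run the bicriteria FPTAS of~\cite{DughmiX16} on this $k$-action instance with parameter $\varepsilon/2$; its running time is polynomial in the number of actions and $1/\varepsilon$, hence polynomial in $k \le n$ and $1/\varepsilon$, and it returns a direct scheme $\varphi$ that is $(\varepsilon/2)$-persuasive and $(\varepsilon/2)$-optimal for the truncated instance.

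Next I would symmetrize. Let $\bar\varphi$ be the scheme on the original $n$-action instance that, at signal time, draws a uniformly random relabeling $\pi$ of $[k]$, applies $\varphi$ to the $\pi$-relabeled realized types of the first $k$ actions, and issues the correspondingly relabeled recommendation. Then (i) $u_\sender(\bar\varphi) = u_\sender(\varphi)$, since the sender's utility of $\varphi$ depends only on the first $k$ actions (whose joint law is the same under the original and the truncated priors) and symmetrization preserves it; combined with the preceding Proposition, which says that the optimal sender utility of the truncated $k$-action instance equals $\opt$ of the original $n$-action, $k$-signal instance, i.e.\ equals $u_\sender(\varphi^*)$, this gives $u_\sender(\bar\varphi) \ge u_\sender(\varphi^*) - \varepsilon/2$, so $\bar\varphi$ is $\varepsilon$-optimal. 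And (ii) $\bar\varphi$ is a symmetric scheme in the sense of Section~\ref{sec:symmetric}: the conditional type distributions of the recommended action and of a non-recommended action in $[k]$ are label-independent by construction, and the conditional type distribution of a never-recommended action in $[n]\setminus[k]$ is label-independent because averaging over the relabelings and using exchangeability of the symmetric instance makes the joint law of (signal, type of a dropped action) invariant under permuting the signal labels — a short rearrangement of terms in the spirit of the proof of Lemma~\ref{lem:directPersuasive}.

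Finally I would check $\varepsilon$-persuasiveness of $\bar\varphi$ in the full $n$-action instance. Summing the $(\varepsilon/2)$-persuasiveness inequalities of $\varphi$ against a fixed action (say action $1$), weighted by the signal probabilities, gives $u_\receiver(\varphi) \ge \rv_E - \varepsilon/2$ (using that each fixed action has a-priori value $\rv_E$). Since $\bar\varphi$ is symmetric, $\rv_{yes} = u_\receiver(\bar\varphi) = u_\receiver(\varphi) \ge \rv_E - \varepsilon/2$; hence a deviation to a dropped action in $[n]\setminus[k]$, whose conditional value is exactly $\rv_{never} = \rv_E$ by the argument in Lemma~\ref{lem:symPersuasive}, gains at most $\varepsilon/2 \le \varepsilon$, and from $\frac{1}{k}\rv_{yes} + \frac{k-1}{k}\rv_{no} = \rv_E$ a deviation to another recommended action gains $\rv_{no} - \rv_{yes} = \frac{k}{k-1}(\rv_E - \rv_{yes}) \le \frac{k}{k-1}\cdot\frac{\varepsilon}{2} \le \varepsilon$ for $k \ge 2$. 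I expect the main obstacle to be exactly this gap between the truncated and the full instance: the FPTAS only controls deviations among the $k$ retained actions, so one must argue the $n-k$ discarded actions do not become tempting. Symmetrizing the scheme, together with the exchangeability fact that every discarded action is worth exactly $\rv_E$ conditional on any signal, is what bridges the gap, with the harmless factor $k/(k-1)\le 2$ absorbed by having run the FPTAS with parameter $\varepsilon/2$.
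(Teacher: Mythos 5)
Your proof is correct and follows the same overall plan as the paper: pass to the truncated $k$-action instance, run the bicriteria FPTAS of Dughmi and Xu on it, and then use symmetry (together with $\rv_{never} = \rv_E$ and the identity $\frac{1}{k}\rv_{yes} + \frac{k-1}{k}\rv_{no} = \rv_E$) to bound the temptation of the $n-k$ dropped actions. The noteworthy difference is in how each argument produces a \emph{symmetric} scheme, which is the ingredient both of you need in order to invoke $\rv_{never} = \rv_E$. The paper argues that the FPTAS output is itself symmetric, because it is obtained by solving an LP on a random i.i.d.\ sample with ties broken uniformly at random, and the sample distribution is permutation-invariant; this is slightly delicate because it leans on internals of the FPTAS. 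You instead explicitly symmetrize by drawing a uniformly random relabeling of $[k]$, which is more self-contained and does not require opening the FPTAS black box. The second, minor, difference is in the bookkeeping: the paper keeps the pairwise $\varepsilon$-persuasiveness guarantee $\rv_{yes}\ge\rv_{no}-\varepsilon$ and derives $\rv_{yes}\ge\rv_{never}-\frac{k-1}{k}\varepsilon$, so it runs the FPTAS with parameter $\varepsilon$; you pass to the aggregated bound $\rv_{yes}=u_\receiver(\bar\varphi)\ge\rv_E-\varepsilon/2$ and recover the pairwise bound with a factor $\frac{k}{k-1}\le 2$, so you run the FPTAS with parameter $\varepsilon/2$. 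Both are polynomial; the paper's constants are a little tighter, your symmetrization step is a little cleaner.
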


\begin{proof}
  We apply the bicriteria FPTAS from~\cite{DughmiX16} to the truncated instance, which implies the result for the truncated instance. We now argue that the guarantees of $\varepsilon$-optimal and $\varepsilon$-persuasive also apply in the original instance. Since we observed above that there is a scheme that is optimal in both the truncated and original instances, $\varepsilon$-optimality is immediate. It remains to show $\varepsilon$-persuasiveness.

  In addition to the given state of nature, the scheme draws a polynomial number of independent samples from the black-box oracle. It then computes the optimal direct and $\varepsilon$-persuasive scheme for the uniform distribution over the sample set. This is done by solving an LP of polynomial size (see~\cite[Section 5.1]{DughmiX16}). In the solution of the LP, we assume that all ties are broken uniformly at random. Then, if we permute all states of nature in the sample in the same way, the resulting scheme also permutes the signal distributions in the same way. Due to symmetry in the instance, every permutation is equally likely. As a consequence, the resulting scheme is symmetric. We denote by $\rv_{yes}$, $\rv_{no}$, and $\rv_{never}$ the expected utility for $\receiver$ for the distributions of recommended action, non-recommended action in $[k]$, and non-recommended action in $[n] \setminus [k]$, respectively. Note that $\rv_{never} = \rv_E$. Due to symmetry, as in Lemma~\ref{lem:symPersuasive}, we have $\frac{1}{k} \cdot \rv_{yes} + \frac{k-1}{k} \cdot \rv_{no} = \rv_E$, and due to $\varepsilon$-persuasiveness in the truncated instance we know $\rv_{yes} \ge \rv_{no} - \varepsilon$. Combining the two inequalities leads to $\rv_{yes} \ge \frac{k \rv_E - \rv_{yes}}{k-1} - \varepsilon$, which implies $\rv_{yes} \ge \rv_{never} - \frac{k-1}{k} \cdot \varepsilon$.
\end{proof}

\end{document}